\newtheorem{Thm}{Theorem}[section]
\newtheorem{Lemm}[Thm]{Lemma}
\newtheorem{Prop}[Thm]{Proposition}
\newtheorem{Coro}[Thm]{Corollary}
\theoremstyle{definition}
\newtheorem{Rem}[Thm]{Remark}
\newcommand{\be}{\begin{equation}}
\newcommand{\ee}{\end{equation}}
\newcommand{\ba}{\begin{align}}
\newcommand{\ea}{\end{align}}
\newcommand{\ben}{\begin{equation*}}
\newcommand{\een}{\end{equation*}}
\def\i<#1>{\langle #1 \rangle}
\def\l<#1>{\left\langle #1 \right\rangle}
\def\b<#1>{\big\langle #1 \big\rangle}
\def\B<#1>{\Big\langle #1 \Big\rangle}
\newcommand{\deq}{\stackrel{\mathrm{def}}{=}}
\newcommand{\im}{\mathrm{i} }
\newcommand{\la}{\langle}
\newcommand{\ra}{\rangle}
\newcommand{\bs}{\boldsymbol}
\newcommand{\ex}{\mathrm{e}}
\newcommand{\h}{\mathfrak{H}}
\newcommand{\D}{\mathrm{dom}}
\newcommand{\BbbR}{\mathbb{R}}
\newcommand{\BbbZ}{\mathbb{Z}}
\newcommand{\BbbC}{\mathbb{C}}
\newcommand{\vepsilon}{\varepsilon}
\newcommand{\vphi}{\varphi}
\newcommand{\no}{\nonumber \\}
\newcommand{\Ex}{\mathbb{E}}
\newcommand{\vOm}{\varOmega}
\newcommand{\vLa}{\varLambda}
\newcommand{\vPhi}{\varPhi}
\newcommand{\He}{H_{\vepsilon}}
\newcommand{\vPsi}{\varPsi}
\newcommand{\F}{\mathfrak{F}}
\newcommand{\E}{\mathfrak{E}}
\newcommand{\K}{\mathfrak{K}}
\newcommand{\one}{{\mathchoice {\rm 1\mskip-4mu l} {\rm 1\mskip-4mu l}
{\rm 1\mskip-4.5mu l} {\rm 1\mskip-5mu l}}}
\title{ \bf
Ordering of Energy Levels in the Fr\"{o}hlich Model 
}
\date{}
\author[b]{Fumio Hiroshima}
\author[a]{Akihiro Kobayashi}
\author[a]{Tadahiro Miyao}
\author[a]{Shunsuke Tomioka}
\affil[a]{Department of Mathematics,  Hokkaido University}
\affil[b]{Faculty of Mathematics, Kyushu University}
\begin{document}
\maketitle

\begin{abstract}
Consider a one-dimensional system of \( N \) electrons subject to an external potential \( U \). Let \( E_{\rm el}(S) \) denote the ground state energy of the system with total spin \( S \). The Mattis--Lieb theorem asserts that, for a broad class of potentials \( U \), the inequality \( E_{\rm el}(S) < E_{\rm el}(S') \) holds whenever \( S < S' \). This result implies that the ground state of a one-dimensional many-electron system is non-ferromagnetic.
In the present work, we demonstrate that the Mattis--Lieb theorem can be extended to electron-phonon interacting systems governed by the Fröhlich model. Our analysis is carried out in the setting without an ultraviolet cutoff.   The cornerstone of our approach  is the construction of a Feynman--Kac-type formula for the heat semigroup generated by the Fröhlich Hamiltonian.
\end{abstract}

%\setcounter{tocdepth}{1}
%\tableofcontents

\section{Introduction}

 The quantum origins of metallic ferromagnetism are currently the subject of numerous theoretical explanations, often leading to the misconception that all issues have been thoroughly resolved. However, a particularly rigorous mathematical understanding remains insufficient, and mathematical physicists are actively engaged in addressing this problem.
 \medskip

 A central challenge in comprehending metallic ferromagnetism lies in the mathematical description of phenomena arising from the collaboration of numerous electrons possessing four fundamental properties: spin, Fermi statistics, itinerancy, and Coulomb interactions. Integrating these elements to elucidate metallic ferromagnetism remains an exceedingly complex issue.\footnote{For a rigorous analysis of metallic ferromagnetism, the textbook \cite{Tasaki2020} provides an excellent introduction. For a theoretical physics perspective on metallic ferromagnetism, see \cite{Mattis2006}.}
 \medskip
 
 In the early studies of metallic ferromagnetism theory, Mattis and Lieb made significant contributions by proving that, for one-dimensional \(N\)-electron systems governed by the Schrödinger operator, the ground state is non-magnetic regardless of the specific form of the inter-electronic potential \cite{Lieb1962}. In Section \ref{Sec2}, this paper elaborates on the results of Mattis and Lieb. Furthermore, their theorem applies to the Hubbard model and the  Heisenberg model  on a linear chain \cite{Lieb1962, Lieb1962-2}. These findings indicate a strong tendency for one-dimensional many-electron systems to avoid ferromagnetism under physically natural conditions.
 \medskip

 Electrons in metals are subject to various environmental perturbations, such as thermal fluctuations and lattice vibrations (phonons). These perturbations might  imply the possibility of ferromagnetism even in one-dimensional systems. However, Aizenman and Lieb extended the Mattis--Lieb theorem to finite temperatures \cite{AizenmanLieb}. Moreover, one of the authors has demonstrated that the Mattis--Lieb theorem for the lattice Hubbard model can be further extended to models incorporating electron-boson interactions, such as the Holstein--Hubbard model \cite{Miyao2012-2}. These results further reinforce the  conclusion that the emergence of ferromagnetic order in one-dimensional many-electron systems is highly unlikely.
 \medskip

The analysis of electron-phonon interaction systems discussed above pertains to the lattice Hubbard model, and there has been minimal mathematical research concerning the magnetism of continuous one-dimensional many-electron systems described by Schrödinger operators, particularly when accounting for their interactions with phonons. The objective of this paper is to demonstrate that the Mattis--Lieb theorem holds for the Fröhlich model, a fundamental framework for describing electron-phonon interactions in continuous systems \cite{Frhlich1954}. This proof is expected to further deepen our understanding of the non-magnetic tendencies exhibited by one-dimensional many-electron systems. Moreover, we will provide additional analyses regarding the ground state energy of the Fröhlich model, specifically establishing that electron-phonon interactions indeed lower the energy of many-electron systems.
 \medskip

 A technical challenge in the analysis of the Fröhlich model arises from weak ultraviolet (UV) singularities. UV singularities refer to the necessity of energy renormalization when removing ultraviolet cutoffs from the Hamiltonian. The Nelson model \cite{Nelson1964} serves as a typical example, and significant advancements in its analysis have been made recently, see, e.g.,  \cite{Gubinelli2014, Hasler2024, Hiroshima2021, Lampart2021, MATTE2018, Miyao2019-2}. In the context of the Fröhlich model, it is possible to define the Hamiltonian without an ultraviolet cutoff by employing Nelson's techniques; however, it has been demonstrated that energy renormalization is unnecessary for this model. In this sense, the Fröhlich model exhibits \lq\lq{}weak" UV singularities.
 \medskip
 
In order to overcome the aforementioned weak UV singularities, this paper applies the functional integral techniques developed by Gubinelli, L\"{o}rinczi, and Hiroshima \cite{Gubinelli2014}, as well as by Matte and M{\o}ller \cite{MATTE2018}, originally designed for the analysis of ultraviolet cutoff removal in the Nelson model, to examine the ground state energy of the Fröhlich model.
\medskip

Another significant challenge in the analysis of many-electron systems, as addressed herein, is overcoming the sign problem arising from the intrinsic fermionic statistics of electrons. This issue can be mitigated by leveraging the unique characteristics of one-dimensional systems. Numerous studies have demonstrated that one-dimensional fermionic systems exhibit bosonic properties, with bosonization techniques serving as a prominent example \cite{Langmann2015, Mattis1965}. In this work, we extend the bosonic tendencies discovered by Mattis and Lieb in one-dimensional systems \cite{Lieb1962, Lieb1962-2} and elucidate an effective approach utilizing path integral representations.
A direct and illustrative consequence of this approach is the uniqueness of the ground state---an outcome that cannot generally be expected for fermionic systems in dimensions two or higher; see Theorem \ref{ExGs}.
\medskip

The Fröhlich model has been well-known for an extended period, characterized by its simplicity in definition. It exhibits numerous intriguing properties, not only from a physical standpoint but also from a mathematical perspective. One such property is the previously mentioned weak UV singularity. As a result, the mathematical aspects of this model have continued to capture the interest of researchers in mathematical physics, see, e.g., \cite{Donsker1983, Lieb1997, Spohn1987}. 
In recent years, it has continued to present a wealth of problems for the mathematical physics community, such as the analysis of the binding energy of many-body polarons \cite{FLST, Miyao2007} and the effective mass of polarons in the strong coupling regime \cite{Betz2022, Brooks2024,Dybalski2020,Mukherjee2020}, fostering vibrant research activity. The mathematical investigation of the magnetism of the Fröhlich model presented in this study is unprecedented, and we anticipate that it will reveal new facets of this model.
\medskip

The structure of this paper is as follows.  
In Section \ref{Sec2}, we first introduce the Mattis--Lieb theorem. Subsequently, we present the Fröhlich model and provide a detailed exposition of the main theorems: the existence and uniqueness of the ground state in this system (Theorem \ref{ExGs}), the extension of the Mattis--Lieb theorem to this model (Theorem \ref{EnOrd}), and the energy-lowering effect of electron-phonon interactions (Theorem \ref{GEBasic}).  
Section \ref{SecMsubspace} explores in detail the Hilbert space of states for one-dimensional many-electron systems, with particular focus on the \( M \)-subspace. The results obtained here play a crucial role in the subsequent sections.  
In Section \ref{PfofThmEnOrd}, we present a proof of Theorem \ref{EnOrd} within a somewhat abstract framework.  
Section \ref{FKIF} constructs the path integral representation of the heat semigroup generated by the Fröhlich model. This representation proves to be useful in the subsequent sections.  
In Section \ref{PfExGs}, we provide a proof of Theorem \ref{ExGs}.  
In Section \ref{PfofThmGEBasic}, we further utilize the constructed path integral representation to establish Theorem \ref{GEBasic}.  
Finally, Section \ref{ConRem} contains concluding remarks.
The appendices present several key results that are essential for the developments in the main text.

\subsection*{Acknowledgements}
F. H.  was financially supported by JSPS KAKENHI Grant Numbers JP20H01808, JP20K20886 and JP25H00595.
T. M.  was  supported by JSPS KAKENHI Grant Numbers  20KK0304 and 	23H01086.  T. M. expresses deep gratitude for the generous hospitality provided by Stefan Teufel at the Department of Mathematics, University of T\"{u}bingen, where T. M. authored a segment of this manuscript during his stay.

\subsection*{Declarations}
\begin{itemize}
\item  Conflict of interest: The Authors have no conflicts of interest to declare
that are relevant to the content of this article.
\item  Data availability: Data sharing is not applicable to this article as no
datasets were generated or analysed during the current study
\end{itemize}

\section{Main Results}\label{Sec2}

\subsection{Previous Studies}
To provide a clearer understanding of the main results presented in this paper, we first review the Mattis--Lieb theorem \cite{Lieb1962}.
Let $L > 0$, and consider a one-dimensional $N$-electron system on the interval $D \deq (-L, L)$. This system is described by the Schrödinger operator:  
\be  
H_{\rm el} \deq \sum_{j=1}^N \left\{  
-\frac{1}{2} \Delta_{\mathrm{D}, j} + V(x_j)  
\right\} + \sum_{1 \le i < j \le N} W(x_i - x_j). \label{DefHel}  
\ee  
Here, $\Delta_{\mathrm{D}, j}$ denotes the Dirichlet Laplacian on $D$.  
The operator $H_{\rm el}$ acts on the Hilbert space:
\be  
\mathfrak{E} \deq \bigwedge^N L^2(\vLa), \quad \vLa\deq  D \times \{-1, +1\}, \notag
\ee  
where, $-1$ (resp. $+1$) corresponds to an electron with spin down (resp. spin up).
 Additionally,  
\be  
L^2(\vLa) \deq \left\{  
f : \vLa \to \BbbC, \sum_{\sigma \in \{-1, +1\}} \int_D dx \, |f(x, \sigma)|^2 < \infty  
\right\}.  \notag
\ee  
For a given Hilbert space $\mathfrak{X}$, $\bigwedge^N \mathfrak{X}$ denotes the $N$-fold anti-symmetric tensor product of $\mathfrak{X}$. A detailed explanation of the action of \( H_{\rm el} \) on \( \mathfrak{E} \) is given in Remark \ref{Rem1}, as some preliminary considerations are required for a full account.

The potentials \( V \) and \( W \) satisfy the following assumptions:

\begin{description}
\item[\hypertarget{H1}{\bf (H.1)}]  
The potential \( W \) is symmetric in the sense that \( W(x) = W(-x) \) for almost every \( x \in D \).  
Define the function \( U \colon D^N \to \mathbb{R} \) by  
\[
U(x_1, \dots, x_N) \deq\sum_{j=1}^N V(x_j) + \sum_{1 \le i < j \le N} W(x_i - x_j).  
\]

The kinetic energy operator is given by  
\[
T_{D^N} \deq \sum_{j=1}^N \left(-\frac{1}{2} \Delta_{\mathrm{D}, j}\right).  
\]

The multiplication operator \( U \) is infinitesimally small with respect to \( T_{D^N} \).  
\end{description}

Since $T_{D^N}$ has a compact resolvent, $H_{\rm el}$ also possesses a compact resolvent. Consequently, $H_{\rm el}$ has purely  discrete spectrum.

To simplify the subsequent expressions,  we introduce a useful representation for the elements of $\mathfrak{E}$. The following identifications are commonly used:  
\begin{align}  
\bigotimes^N L^2(\vLa) = \bigotimes^N \left( L^2(D) \otimes \mathbb{C}^2 \right) &= L^2\left(D^N\right) \otimes \left( \bigotimes^N \mathbb{C}^2 \right), \label{IdnHi1} \\  
(f_1 \otimes \eta_{\sigma_1}) \otimes \cdots \otimes (f_N \otimes \eta_{\sigma_N}) &= \left( f_1 \otimes \cdots \otimes f_N \right) \otimes \eta_{\bs{\sigma}}.  \notag
\end{align}  
Here, for $\bs{\sigma} = (\sigma_1, \dots, \sigma_N) \in \{-1, +1\}^N$, we define  
$
\eta_{\bs{\sigma}} \deq \eta_{\sigma_1} \otimes \cdots \otimes \eta_{\sigma_N}, 
$
where $\eta_{-1}$ and $\eta_{+1}$ represent spin-down and spin-up states, respectively:  
\be
\eta_{-1} \deq  
\begin{bmatrix}  
0 \\ 1  
\end{bmatrix}, \quad  
\eta_{+1} \deq   
\begin{bmatrix}  
1 \\ 0  
\end{bmatrix}. \notag
\ee
Under this identification, the following identity holds in \(\mathfrak{E}\):
\[
(f_1 \otimes \eta_{\sigma_1}) \wedge \cdots \wedge (f_N \otimes \eta_{\sigma_N})
= \sum_{\pi \in \mathfrak{S}_N} \frac{\mathrm{sgn}(\pi)}{N!}\,
S_{\pi}\bigl(f_1 \otimes \cdots \otimes f_N\bigr)
\otimes \bigl(s_{\pi}\,\eta_{\bs{\sigma}}\bigr),
\]
where the unitary operators \(S_{\pi}\) on \(\otimes^N L^2(D)\) and \(s_{\pi}\) on \(\otimes^N \mathbb{C}^2\) represent the permutation action; That is, let \(\mathfrak{S}_N\) denote the symmetric group on \(N\) elements. For each \(\pi \in \mathfrak{S}_N\), define unitary operators by
\[
S_{\pi}\bigl(f_1 \otimes \cdots \otimes f_N\bigr)
\deq f_{\pi(1)} \otimes \cdots \otimes f_{\pi(N)},
\quad
f_1, \dots, f_N \in L^2(D),
\]
and
\[
s_{\pi}\,\eta_{\bs{\sigma}}
\deq  \eta_{\pi \bs{\sigma}},
\quad
\pi \bs{\sigma} \deq  (\sigma_{\pi(1)}, \dots, \sigma_{\pi(N)}).
\]

\begin{Rem}\label{Rem1}
\upshape
A more precise definition of $H_{\rm el}$ is as follows:  
Let $h$ denote the self-adjoint operator on $L^2(D^N)$ defined by the right-hand side of Eq.  \eqref{DefHel}.  
Then, using the identification  Eq. \eqref{IdnHi1}, $H_{\rm el}$ is defined by  
\[
H_{\rm el} = \left( h \otimes \mathbbm{1}_{\otimes^N \BbbC} \right) \restriction \mathfrak{E}.
\]  
Henceforth, in this paper, we shall abbreviate expressions like $A \otimes \mathbbm{1}$ and $\mathbbm{1} \otimes B$ to $A$ and $B$, respectively, whenever the context precludes any ambiguity.
\end{Rem}

The total spin operators $S^{(1)}, S^{(2)}, S^{(3)}$, acting on $\otimes^N \mathbb{C}^2$, are defined as  
\[ 
S^{(i)} \deq  \frac{1}{2} \sum_{j=1}^N \one \otimes \cdots \otimes \one \otimes \overbrace{\sigma^{(i)}}^{j^{\text{th}}} \otimes \one \otimes \cdots \otimes \one, \quad (i=1, 2, 3),
\]  
where $\sigma^{(1)}, \sigma^{(2)}, \sigma^{(3)}$ denote the standard Pauli matrices.  
In particular, the following relation holds:  
\[ 
\sigma^{(3)} \eta_{\sigma} = \sigma \eta_{\sigma}. 
\]  
By denoting $\one \otimes S^{(i)}$ simply as $S^{(i)}$, these operators are naturally extended to act on $\bigwedge^N L^2(\vLa)$. The Casimir operator is defined by  
\[ 
\bs{S}^2 \deq  (S^{(1)})^2 + (S^{(2)})^2 + (S^{(3)})^2. 
\]  
A state vector $ \vPsi $ is said to have {\bf total spin} $ S $ if  
\[
\bs{S}^2 \vPsi = S(S+1) \vPsi.
\]  
Let $ \mathfrak{E}^S $ denote the subspace of $ \mathfrak{E} $ consisting of state vectors with total spin $ S $. For each $ N $, the set of possible values of total spin is denoted by $ \mathbb{S}_N $, and is given by:
\[
\mathbb{S}_N \deq
\begin{cases}
\{N/2, N/2-1, \dots, 0\} & \text{if } N \text{ is even}, \\
\{N/2, N/2-1, \dots, 1/2\} & \text{if } N \text{ is odd}.
\end{cases}
\]

\begin{Thm}[Mattis--Lieb \cite{Lieb1962}]  
For each $ S \in \mathbb{S}_N $, define  
$
E_{\rm el}(S) \deq  \inf \mathrm{spec}(H_{\rm el} \restriction \mathfrak{E}^S).
$  
Then, the following holds:  
\[
E_{\rm el}(N/2) > E_{\rm el}(N/2-1) > \cdots > E_{\rm el}(1/2) \quad \text{or} \quad E_{\rm el}(0).
\]
Here, the minimum value in the above inequality corresponds to $E_{\rm el}(0)$ when $N$ is even, and $E_{\rm el}(1/2)$ when $N$ is odd.
\end{Thm}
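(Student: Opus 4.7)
The plan is to follow the original Lieb-Mattis strategy, phrased so as to extract cleanly for the Fröhlich extension alluded to later in the paper. Since \( \bs{S}^2 \) and \( S^{(3)} \) commute with \( H_{\rm el} \), each total-spin subspace \( \mathfrak{E}^S \) is invariant, as is each \( S^{(3)} \)-eigenspace \( \mathfrak{E}(M) \subset \mathfrak{E} \). Setting \( \tilde{E}(M) \deq \inf \mathrm{spec}(H_{\rm el} \restriction \mathfrak{E}(M)) \) and using the orthogonal decomposition \( \mathfrak{E}(M) = \bigoplus_{S \geq |M|} \bigl(\mathfrak{E}^S \cap \mathfrak{E}(M)\bigr) \), one has \( \tilde{E}(M) = \min_{S \geq |M|} E_{\rm el}(S) \). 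Consequently, it suffices to establish the strict inequality \( \tilde{E}(M) < \tilde{E}(M+1) \) for \( 0 \le M < N/2 \); reading this back into the identity above forces \( \tilde{E}(M) = E_{\rm el}(M) \) for \( M \ge 0 \) and the desired strict monotonicity of \( E_{\rm el}(S) \) in \( S \).

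The key input is the one-dimensional structure, accessed via the fundamental simplex
\[
C_N \deq \{\bs{x} \in D^N : x_1 < x_2 < \cdots < x_N\}.
\]
By antisymmetry, any \( \vPsi \in \mathfrak{E} \) is encoded in the family \( \Phi_{\bs{\sigma}}(\bs{x}) \deq \vPsi(\bs{x}, \bs{\sigma}) \) for \( \bs{x} \in C_N \) and \( \bs{\sigma} \in \{-1, +1\}^N \). Rewritten in these variables, \( H_{\rm el} \) becomes a scalar Schrödinger operator acting on each sector of \( L^2(C_N) \), with the sectors tied together by boundary behaviour at the walls \( \{x_i = x_{i+1}\} \): sectors with \( \sigma_i = \sigma_{i+1} \) obey Dirichlet conditions (Pauli exclusion at the coincident point), while sectors with \( \sigma_i \neq \sigma_{i+1} \) are linked to \( \Phi_{\tau_i \bs{\sigma}} \) through a matching condition, where \( \tau_i \) transposes the \( i \)-th and \( (i+1) \)-th spin labels.

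The next step is a Perron--Frobenius argument applied to \( e^{-t H_{\rm el}} \) on the direct sum \( \bigoplus_{\bs{\sigma}} L^2(C_N) \). After absorbing the fermionic minus signs in the matching conditions by a sign change \( \Phi_{\bs{\sigma}} \mapsto \epsilon_{\bs{\sigma}} \Phi_{\bs{\sigma}} \), the heat semigroup should preserve the positive cone of the resulting space; moreover, because any two configurations with the same magnetization \( M \) are connected by a chain of adjacent transpositions \( \tau_i \), the restriction to \( \mathfrak{E}(M) \) should be positivity improving. This yields a unique ground state in \( \mathfrak{E}(M) \), represented by strictly positive \( \Phi_{\bs{\sigma}} \)'s on \( C_N \). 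A variational argument---raising one spin in the ground state of \( \mathfrak{E}(M) \) to produce a trial vector in \( \mathfrak{E}(M+1) \), and invoking strict uniqueness of the positive ground state---then delivers the required strict inequality.

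The main technical obstacle I anticipate is pinning down the sign convention \( \epsilon_{\bs{\sigma}} \) that renders the off-diagonal couplings between sectors genuinely nonnegative, so that the Perron--Frobenius framework applies; tracking the interplay of the fermionic antisymmetry with the simplex geometry is the combinatorial heart of the argument. Once that positivity structure is in place, uniqueness of the positive ground state and the strict variational comparison should follow along standard lines.
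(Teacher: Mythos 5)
The paper does not prove the Mattis--Lieb theorem directly; it cites \cite{Lieb1962} and then develops, in Sections \ref{SecMsubspace}--\ref{PfofThmEnOrd}, a Lieb--Mattis-type machinery adapted to the Fr\"{o}hlich extension (Theorem \ref{AbstEO}), which specializes back to the Schr\"{o}dinger case. Your proposal follows the same overall Perron--Frobenius strategy but chooses a genuinely different representation: you work on the full ordered simplex \(C_N = \{x_1 < \cdots < x_N\}\), carrying a \(\binom{N}{p}\)-component wavefunction \((\Phi_{\bs{\sigma}})_{\bs{\sigma}}\) coupled through sign-twisted matching conditions at the walls \(\{x_i = x_{i+1}\}\). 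The paper (following Lieb--Mattis closely) instead fixes a reference spin configuration \(\bs{\sigma}_M\) and passes to the representative part \(\mathscr{R}_M(\vPhi)\) supported on \(D^{N,(p)} = \{x_1 < \cdots < x_p,\ x_{p+1} < \cdots < x_N\}\); there the Fermi antisymmetry is encoded entirely in the Dirichlet boundary conditions of a \emph{single scalar} Schr\"{o}dinger operator \(K^{(p)}\), with no cross-sector coupling and no gauge to design. That is the main advantage the paper's route buys: positivity improvement of \(\ex^{-\beta K^{(p)}}\) is the ordinary Perron--Frobenius statement for a Dirichlet Laplacian plus potential on a bounded domain, rather than an irreducibility argument for a matrix-valued operator across wall constraints. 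Your route can be made to work --- the gauge \(\epsilon_{\bs{\sigma}} = (-1)^{\#\text{inversions of }\bs{\sigma}}\) does turn the fermionic sign at each active wall into a positive coupling, and adjacent transpositions connect all \(\bs{\sigma}\) of fixed magnetization --- but you correctly flag this as a conjectural step rather than carrying it out.

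There is, however, a real gap in your final step, and it is not merely a matter of rigor. You propose to ``raise one spin in the ground state of \(\mathfrak{E}(M)\) to produce a trial vector in \(\mathfrak{E}(M+1)\).'' Two problems. First, the direction is wrong: since \(S^{(+)}\) commutes with \(H_{\rm el}\), this would only show \(\tilde E(M+1) \le \tilde E(M)\), whereas you need \(\tilde E(M) < \tilde E(M+1)\); the useful operator is \(S^{(-)}\) applied to the ground state of \(\mathfrak{E}(M+1)\). Second, and more fundamentally, once one knows the positive ground state \(\phi_M\) of \(\mathfrak{E}(M)\) has total spin exactly \(M\), one has \(S^{(+)}\phi_M = 0\), so ``raising'' produces nothing. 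Most importantly, you never say how to identify the spin of the positive ground state, nor how to see that the lowered vector \(S^{(-)}\phi_{M+1}\) is distinct from \(\phi_M\). This is the combinatorial heart of Lieb--Mattis, and the paper addresses it in two explicit places: Lemma \ref{SasekiProp} produces a Vandermonde-type trial vector of spin \(|M|\) whose representative part is strictly positive on \(D^{N,(p)}\), identifying the spin of the positive ground state; and Eq.~\eqref{RS-} gives an explicit formula showing that \(\mathscr{R}_{M-1}(S^{(-)}\vPhi_{\rm G,M})\) fails to be strictly positive on \(D^{N,(p+1)}\), so it cannot equal \(\mathscr{R}_{M-1}(\vPhi_{\rm G,M-1})\). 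Without some analogue of one of these computations in your \(C_N\)-sector picture, ``invoking uniqueness of the positive ground state'' gives you no contradiction, because you have not exhibited any qualitative property (spin, or sign pattern) that separates the two candidate ground states.
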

From this theorem, it can be concluded that in a one-dimensional many-electron system with a broad class of potentials, the ground state with total spin $S$ is energetically more stable for smaller values of $S$. Hence, the system exhibits antiferromagnetic tendencies.

\subsection{Ordering of Energy Levels in the Fröhlich Model}
Consider the one-dimensional {\bf Fr\"{o}hlich model with ultraviolet cutoff}:
\begin{align}
H_{\vepsilon}\deq H_{\rm el}+
\sum_{j=1}^N 
\phi_{\vepsilon}(x_j)
+N_{\rm ph}. \label{DefHFr}
\end{align}
Here, the interaction between the electrons and phonons is defined as follows:
\begin{align}
\phi_{\vepsilon}(x)&\deq 
 \left(
\frac{\sqrt{2}}{L} \alpha
\right)^{1/2}
\sum_{k\in D^*}\ex^{-\vepsilon k^2}\left(
\ex^{\im kx} a_k+\ex^{-\im kx}a_k^*
\right). \notag
\end{align}
The parameter $\vepsilon\in (0, \infty)$ represents the ultraviolet cutoff. 
The operator $\He$ acts on the following Hilbert space:
\be
\h\deq \mathfrak{E}\otimes \F. \notag
\ee
Here, $\F$ denotes the bosonic Fock space over $\ell^2(D^*) \ (D^*\deq \frac{\pi}{L}\mathbb{Z})$:
\be
\F\deq \bigoplus_{n=0}^{\infty} \otimes^n_{\rm s} \ell^2(D^*). \notag
\ee
The operators \(a_k\) and \(a_k^*\) denote the annihilation and creation operators, respectively, and satisfy the commutation relations
\[
[a_k, a_{k'}^*] = \delta_{k,k'}, \quad [a_k, a_{k'}] = 0.
\]
More precisely, these  relations hold on a dense subspace of \(\F\), for example, on the finite-particle space \(\F_{\rm fin}\). Here, \(\F_{\rm fin} \deq \hat{\bigoplus}_{n=0}^{\infty} \otimes^n_{\rm s} \ell^2(D^*)\), where \(\hat{\oplus}\) denotes the incomplete direct sum. The operator \(N_{\rm ph}\) is the number operator for the bosons, and on the finite-particle space \(\F_{\rm fin}\) it is expressed as
\[
N_{\rm ph} = \sum_{k \in D^*} a_k^* a_k.
\]
For a detailed treatment of the creation and annihilation operators, as well as the second quantization operator, see, for example, \cite{Arai2023, Bratteli1997}.
By the Kato--Rellich theorem \cite{Reed1975}, for any $\vepsilon\in (0, \infty)$, $\He$ is a lower semi-bounded self-adjoint operator. 

When $\vepsilon=0$, i.e.,  the ultraviolet cutoff is removed, the operator $\phi_{\vepsilon=0}(x)$ is ill-defined. However, by applying Nelson\rq{}s arguments  \cite{Nelson1964}, the following fact holds: 
There exists a lower semi-bounded self-adjoint operator $H_{\rm F}$ such that as $\vepsilon \to +0$, 
$H_{\vepsilon}$ converges to $H_{\rm F}$ in the strong  resolvent sense. 
$H_{\rm F}$ is referred to as the {\bf Fr\"{o}hlich Hamiltonian without the ultraviolet cutoff}, or simply the Fr\"{o}hlich Hamiltonian.

Next, we investigate the existence of a ground state for the operator \( H_{\rm F} \). Given that \( H_{\rm el} \) has a discrete spectrum, it might initially appear evident that \( H_{\rm F} \) should also possess a ground state. Nevertheless, upon closer scrutiny of the spectrum of \( N_{\rm ph} \), given by \( \mathrm{spec}(N_{\rm ph}) = \{0\} \cup \mathbb{N} \), one observes that all eigenvalues, except for zero, belong to the essential spectrum. Furthermore, as the operator \( H_{\rm F} \) is defined via an ultraviolet cutoff removal limit, its explicit form remains elusive. Consequently, establishing the existence of a ground state for \(H_{\rm F}\) is a considerably more subtle and challenging task than might first be anticipated.

In addition to \hyperlink{H1}{\bf (H. 1)}, we assume the following condition:  
\begin{description}  
\item[\hypertarget{H2}{\bf (H. 2)}]  
The potential \( U \) is bounded from below, i.e., $\displaystyle \inf_{\bs{x}\in D^N} U(\bs{x}) > -\infty$.  
\end{description}
Throughout the remainder of this paper, we consistently assume \hyperlink{H1}{\bf (H. 1)} and \hyperlink{H2}{\bf (H. 2)}. However, it should be noted that \hyperlink{H2}{\bf (H. 2)} is not optimal and may potentially be replaced by a more general hypothesis.

To state the results, let us define
\be
\h^S \deq \mathfrak{E}^S \otimes \F. \notag
\ee
\begin{Thm}\label{ExGs}
For every \(\alpha > 0\) and each \(S \in \mathbb{S}_N\), \(H_{\mathrm{F}}\) admits a ground state in each \(\h^S\), and it is unique apart from the trivial \((2S+1)\)-fold degeneracy. 
\end{Thm}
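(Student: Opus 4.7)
The plan is to pass to each $S^{(3)}$-eigensector, to construct a ground state for every $\vepsilon > 0$, to take the $\vepsilon \to 0$ limit, and then to prove simplicity via a Perron--Frobenius argument built on the Feynman--Kac representation of Section \ref{FKIF}. Since the electron--phonon coupling is spin-independent, $\bs{S}^2, S^{(1)}, S^{(2)}, S^{(3)}$ all commute with $\He$, and by strong resolvent convergence also with $H_{\rm F}$. Writing $\mathfrak{E}_M \deq \ker(S^{(3)} - M)$ and $\h_M \deq \mathfrak{E}_M \otimes \F$, the operator $H_{\rm F}$ is reduced by the decomposition $\h = \bigoplus_M \h_M$, and the ladder operators $S^{(\pm)} \deq S^{(1)} \pm \im S^{(2)}$ intertwine the restrictions of $H_{\rm F}$ to $\h_M$ and $\h_{M\pm 1}$. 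It therefore suffices to exhibit a simple ground state of $H_{\rm F}\!\restriction\!\h_S$ for each $S \in \mathbb{S}_N$: by Theorem \ref{EnOrd} such a vector automatically has total spin exactly $S$, and the repeated action of $S^{(-)}$ then fills out the $(2S+1)$-dimensional ground eigenspace of $H_{\rm F}\!\restriction\!\h^S$.

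For existence, I would first invoke the standard existence of a normalised ground state $\vPsi_\vepsilon \in \h_S$ of $\He$ for each $\vepsilon > 0$, which follows from the compact resolvent of $H_{\rm el}$, the mass gap of $N_{\rm ph}$, and the relative Kato-smallness of $\sum_j \phi_\vepsilon(x_j)$. Using the Feynman--Kac representation of Section \ref{FKIF}, I would then derive a bound $\langle \vPsi_\vepsilon, N_{\rm ph}\vPsi_\vepsilon\rangle \le C$ uniform in $\vepsilon \in (0, 1]$, together with a uniform energy bound. Combined with the strong resolvent convergence $\He \to H_{\rm F}$, this ensures that any weak cluster point of $\{\vPsi_\vepsilon\}_{\vepsilon \to 0}$ is nonzero and provides a ground state of $H_{\rm F}\!\restriction\!\h_S$.

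For uniqueness, the Lieb--Mattis trick developed in Section \ref{SecMsubspace} identifies spin-$M$ antisymmetric $N$-electron wavefunctions with scalar functions on the ordered simplex $D^N_{<} \deq \{x_1 < \cdots < x_N\} \subset D^N$, decorated with combinatorial spin labels and with the fermionic sign absorbed once and for all into the identification; this equips $\h_M$ with a natural self-dual cone $\K_M \deq \mathfrak{E}_{M,+} \otimes \F_+$, where $\F_+$ denotes the standard positivity cone of Fock space in its $Q$-space realisation. Evaluating the Feynman--Kac representation of $e^{-tH_{\rm F}}$ on vectors drawn from $\K_M$ produces a path integral whose integrand is the product of the strictly positive Dirichlet heat kernel on $D^N_{<}$, the Gibbs weight $\exp(-\int_0^t U(\bs{X}_s)\,ds)$, and a strictly positive renormalised Gaussian phonon functional. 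Hence $e^{-tH_{\rm F}}$ is positivity improving on $(\h_M, \K_M)$ for every $t > 0$, and the Faris--Perron--Frobenius theorem yields the simplicity.

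The main obstacle is the control of the Gaussian phonon integrand after ultraviolet renormalisation. Because $\phi_{\vepsilon = 0}$ is not a densely defined operator, no algebraic route to positivity is available, and every positivity statement must survive the $\vepsilon \to 0$ limit of the Feynman--Kac integrand---including the divergent self-energy subtraction carried out along the lines of \cite{Gubinelli2014, MATTE2018}. Verifying that the renormalised phonon weight is not merely nonnegative but strictly positive almost surely is precisely what converts positivity preservation into positivity improvement, and is the indispensable ingredient for the Perron--Frobenius conclusion.
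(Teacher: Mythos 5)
Your overall architecture matches the paper---reduce to $S^{(3)}$-eigensectors, transport everything to a concrete function space where the fermionic sign is absorbed, and conclude uniqueness via positivity improvement and Perron--Frobenius on the Feynman--Kac representation. The uniqueness half is essentially identical to the paper's Proposition \ref{PIEx}/Theorem \ref{Hyp2}. However, there are two substantive divergences, one of which is a genuine gap.

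First, your existence argument takes a different route from the paper's, and the route as sketched is incomplete. You propose to extract a weak cluster point from the normalised ground states $\vPsi_\vepsilon$ of $H_\vepsilon$ using a uniform bound on $\langle\vPsi_\vepsilon, N_{\rm ph}\vPsi_\vepsilon\rangle$ plus strong resolvent convergence and then assert the cluster point is nonzero. A uniform photon bound and strong resolvent convergence are not by themselves enough to rule out that the weak limit vanishes: one needs an additional localisation or tightness step (for instance showing $\inf_\vepsilon|\langle\vOm\otimes g,\vPsi_\vepsilon\rangle|>0$ for some fixed $g$, or a compactness argument that promotes weak to norm convergence in the relevant sector). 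The paper sidesteps this entirely by never touching the $\vepsilon>0$ ground states: it works directly with the Feynman--Kac formula for $\ex^{-\beta H_{\rm F}}$, proves the semigroup $J^{-1}\ex^{-\beta H_{\rm F}}J$ is hypercontractive from $L^2(D^{N,(p)}\times\mathcal{Q})$ into $L^q$ for suitable $q>2$ (Proposition \ref{Hyp3}), and then invokes the Gross--Nelson principle that a positive hypercontractive semigroup on a finite measure space has a largest eigenvalue with a strictly positive eigenvector. This simultaneously gives existence and strict positivity without any limit of cut-off ground states.

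Second, your last paragraph speaks of a ``renormalised Gaussian phonon functional'' and a ``divergent self-energy subtraction carried out along the lines of \cite{Gubinelli2014, MATTE2018}''. This misreads the model. The whole point, emphasised repeatedly in the paper and in the Remark following Theorem \ref{PathIntF}, is that the Fr\"{o}hlich model has only \emph{weak} UV singularities: $H_\vepsilon\to H_{\rm F}$ in the strong resolvent sense with \emph{no} energy renormalisation, the action $S_{{\rm eff},\vepsilon}$ converges directly to $S_{{\rm eff},0}\ge 0$ without subtracting anything divergent, and in particular the phonon weight $\ex^{S_{{\rm eff},0}}$ is manifestly $\ge 1$. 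The ``indispensable ingredient'' you describe is therefore much easier than in the Nelson model; there is no delicate sign question after subtraction because there is no subtraction.

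Two smaller remarks. The reference domain should be $D^{N,(p)}$ (two separately ordered blocks of coordinates, one per spin species), not the totally ordered simplex $D^N_< = \{x_1<\cdots<x_N\}$; conflating these would lose the spin structure that makes the combinatorics work. And invoking Theorem \ref{EnOrd} to fix the total spin of the ground state in $\h_M$ risks circularity unless you track that Theorem \ref{EnOrd} is itself proved from the hypercontractivity and positivity-improvement inputs of this section; the paper instead uses the explicit vector of Lemma \ref{SasekiProp} (a product of Vandermonde determinants tensored with $\vOm$) to pin down the total spin of the strictly positive ground state directly, without appealing to the energy ordering.
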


The proof of Theorem \ref{ExGs} will be provided in Section \ref{PfExGs}. 
We emphasize that, despite the fermionic nature of electrons, the uniqueness of the ground state holds due to the special properties of one-dimensional systems.

For each $S\in \mathbb{S}_N$, we  define
\[
E(S) \deq \inf \mathrm{spec}(H_{\rm F} \restriction \h^S).
\]
The following theorem indicates that Mattis--Lieb theorem still holds even in the presence of electron-phonon interactions:
\begin{Thm}\label{EnOrd}
For any $\alpha>0$, the following inequality holds:
\be
E(N/2) > E(N/2-1) > \cdots > E(1/2) \quad \text{or} \quad E(0). \notag
\ee
Here, the minimum value in the above inequality corresponds to $E(0)$ when $N$ is even, and $E(1/2)$ when $N$ is odd.
\end{Thm}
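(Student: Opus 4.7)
The plan is to deduce the ordering on the total-spin subspaces $\h^S$ from a more elementary ordering on the $S^{(3)}$-eigenspaces, by combining the Feynman--Kac representation of $\ex^{-tH_{\rm F}}$ built in Section \ref{FKIF} with the cone-positivity analysis of Section \ref{SecMsubspace} and the abstract criterion assembled in Section \ref{PfofThmEnOrd}. Setting $\h_M \deq \ker(S^{(3)} - M)\cap \h$ and $\mathcal{E}(M) \deq \inf \mathrm{spec}(H_{\rm F}\restriction \h_M)$, the rotational invariance $[H_{\rm F}, S^{\pm}] = 0$ gives the identity $\mathcal{E}(M) = \min\{E(S) : S \in \mathbb{S}_N,\ S \ge |M|\}$, so Theorem \ref{EnOrd} reduces to the strict chain $\mathcal{E}(0) < \mathcal{E}(1) < \cdots < \mathcal{E}(N/2)$, or its analogue starting at $\mathcal{E}(1/2)$ when $N$ is odd.

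To establish this chain, I would invoke the Lieb--Mattis reduction to the ordered simplex $\vOm_\mathrm{o} \deq \{(x_1,\dots,x_N)\in D^N : x_1<\cdots<x_N\}$, which is a fundamental domain for the $\mathfrak{S}_N$-action on $D^N$. On $\vOm_\mathrm{o}$ the fermionic antisymmetry is absorbed into a sign convention tying the spatial part of a vector in $\h_M$ to its spin configuration $\bs{\sigma}$, and the Feynman--Kac formula of Section \ref{FKIF} represents the kernel of $\ex^{-tH_{\rm F}}$ on $\vOm_\mathrm{o}$ as an expectation over Brownian motions on $D^N$ killed at $\partial D^N$, weighted by $\ex^{-\int_0^t U(\bs{x}_s)\,ds}$ and by a phonon factor obtained from Gaussian integration over the field variables after removal of the ultraviolet cutoff. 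The crucial point is that on $\vOm_\mathrm{o}$ this kernel is manifestly nonnegative, so following Section \ref{SecMsubspace} one can construct a positive cone $\mathfrak{P}_M \subset \h_M$ that is preserved and strictly improved by $\ex^{-tH_{\rm F}}$.

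With this positivity-improving structure in hand, the abstract Perron--Frobenius argument of Section \ref{PfofThmEnOrd} yields uniqueness and strict positivity of the ground state $\vPsi_M \in \h_M$. To derive the strict inequality $\mathcal{E}(M) < \mathcal{E}(M+1)$, I would take $S^- \vPsi_{M+1} \in \h_M$, which is non-zero for $M \ge 0$, as a trial vector; this immediately yields the weak inequality $\mathcal{E}(M) \le \mathcal{E}(M+1)$. If equality held, uniqueness of the ground state in $\h_M$ would force $\vPsi_M \propto S^- \vPsi_{M+1}$, but the action of $S^-$ produces cancellations in the ordered-simplex representation that are incompatible with the strict positivity of $\vPsi_M$ in $\mathfrak{P}_M$. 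This contradiction delivers the strict inequality, and in combination with the reduction of the first paragraph, Theorem \ref{EnOrd}.

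The principal obstacle lies in the Feynman--Kac step: since $\phi_{\vepsilon}(x_j)$ is unbounded and its $\vepsilon \to 0$ limit is distributional, showing that the phonon factor of the kernel converges to a well-defined, strictly positive functional of the Brownian paths is delicate, and the whole positivity-improving argument must be carried out in a way compatible with the cutoff removal. I expect to lean on the stochastic estimates developed by Gubinelli--L\"{o}rinczi--Hiroshima \cite{Gubinelli2014} and Matte--M{\o}ller \cite{MATTE2018}, adapted to the fermionic Dirichlet setting on $\vOm_\mathrm{o}$, to ensure that the cone $\mathfrak{P}_M$ and its improvement property survive the ultraviolet limit uniformly in $M$.
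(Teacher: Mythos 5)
Your overall strategy tracks the paper's quite closely: reduce the total-spin ordering to an ordering on the $S^{(3)}$-eigenspaces $\h_M$, use a Feynman--Kac representation to obtain a positivity-improving semigroup on a spin-free functional space, invoke Perron--Frobenius for a unique strictly positive ground state in each $\h_M$, and derive the strict inequality $\mathcal{E}(M)<\mathcal{E}(M+1)$ from the fact that $S^{(-)}\vPhi_{\mathrm{G},M+1}$ develops sign changes incompatible with the strict positivity of the unique ground state in $\h_M$. Your reduction identity $\mathcal{E}(M)=\min\{E(S):S\ge|M|\}$ followed by downward induction is a perfectly serviceable alternative to the paper's direct identification $E(|M|)=\mathcal{E}(\mathsf{K}_M)$, which the paper proves by exhibiting a spin-$|M|$ trial vector (the Slater--Vandermonde vector of Lemma \ref{SasekiProp}) whose representative part is strictly positive and therefore overlaps the ground state.

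However, there is a genuine error in the choice of reduced domain, and it would sink the positivity argument as you have stated it. You propose restricting to the fully ordered simplex $\vOm_{\mathrm{o}}=\{x_1<\cdots<x_N\}$, which is a fundamental domain for the full $\mathfrak{S}_N$-action. But a vector in the $M$-subspace $\h_M$, after its reduction, is anti-symmetric only under $\mathfrak{S}_p\times\mathfrak{S}_{N-p}$ (permutations within each spin species), not under all of $\mathfrak{S}_N$. The correct fundamental domain is therefore the doubly-ordered cell
$D^{N,(p)}=\{x_1<\cdots<x_p,\ x_{p+1}<\cdots<x_N\}$,
as in Subsection \ref{Mspace} of the paper. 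On $D^{N,(p)}$, the restriction $\mathscr{R}_M$ produces a single scalar-valued function (the \emph{representative part}), the spin configuration is pinned to $\bs{\sigma}_M$, and the Feynman--Kac kernel of $\ex^{-\beta H_{\rm F}\restriction \K_{D^{N,(p)}}}$ is genuinely scalar and non-negative. By contrast, restricting a vector of $\h_M$ to $\vOm_{\mathrm{o}}$ yields a function valued in the $\binom{N}{p}$-dimensional $M$-eigenspace of $(\BbbC^2)^{\otimes N}$; the Feynman--Kac kernel there is matrix-valued and its positivity is not ``manifest'' in any component-wise sense without first passing through precisely the bijection $L^2(\vOm_{\mathrm{o}};\BbbC^{\binom{N}{p}})\cong L^2(D^{N,(p)})$ that the paper implements via $\mathscr{R}_M$ and $\iota_p$. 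Related to this, your description of Brownian motion ``killed at $\partial D^N$'' is also off: the indicator in Eqs. \eqref{FKN1}--\eqref{FKN2} kills the path when it first leaves $D^{N,(p)}$, so killing occurs not only at the walls $x_j=\pm L$ but also on the diagonal hypersurfaces $x_i=x_j$ for $i,j$ in the same spin group. It is these interior Dirichlet conditions that encode Fermi statistics within each species, and they are exactly what makes $\ex^{-\beta K^{(p)}}$ scalar and positivity-improving on $L^2(D^{N,(p)}\times\mathcal{Q})$. Once you replace $\vOm_{\mathrm{o}}$ by $D^{N,(p)}$ and kill paths on $\partial D^{N,(p)}$, your argument coincides with the paper's.
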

The proof of Theorem \ref{EnOrd} will be provided in Section \ref{PfofThmEnOrd}.

The following theorem provides insight into how the ground state energy $ E_{\rm el}(S) $ of the many electron system is influenced by the electron-phonon interaction:

\begin{Thm}\label{GEBasic}
For any $ S \in \mathbb{S}_N $, the following hold:
\begin{itemize}
\item[\rm (i)] For all $ \alpha>0 $, $ E_{\rm el}(S) > E(S) $.
\item[\rm (ii)] $ E(S) $ is monotonically non-increasing with respect to $ \alpha $.
\item[\rm (iii)] $ \displaystyle \lim_{\alpha \to +0} E(S) = E_{\rm el}(S) $.
\end{itemize}
\end{Thm}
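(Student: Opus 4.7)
The plan is to exploit the Feynman--Kac-type formula for the heat semigroup $e^{-tH_{\rm F}}$ constructed in Section~\ref{FKIF}. After the Gaussian integration over the phonon field, one obtains, for product trial vectors $\vPhi = \vPsi \otimes \vOm$ with $\vPsi \in \mathfrak{E}^S$ and $\vOm$ the Fock vacuum, a representation of the schematic form
\[
\langle \vPhi, e^{-tH_{\rm F}} \vPhi \rangle = \Ex\!\left[\, \overline{\vPsi(\bs{B}_0)} \, \vPsi(\bs{B}_t) \, \exp\!\left(-\int_0^t U(\bs{B}_s) \, ds\right) \exp(\alpha\, Y_t[\bs{B}])\right],
\]
where $\bs{B}$ is an $N$-fold Brownian motion on $D$ and the functional $Y_t[\bs{B}] \ge 0$ arises from the positive-definite two-point function of the Fr\"{o}hlich phonon field. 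Setting $\alpha = 0$ recovers $\langle \vPsi, e^{-tH_{\rm el}} \vPsi \rangle$. Throughout I use the exponential-decay characterization $E(S) = -\lim_{t \to \infty} t^{-1} \ln \langle \vPhi, e^{-tH_{\rm F}} \vPhi \rangle$, valid by Theorem~\ref{ExGs} and a positivity-improving argument on the $M$-subspace (Section~\ref{SecMsubspace}) once $\vPsi$ is taken to be a ground state of $H_{\rm el} \restriction \mathfrak{E}^S$.

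Assertion (ii) is immediate: pathwise, $\alpha \mapsto e^{\alpha Y_t}$ is nondecreasing since $Y_t \ge 0$, hence the semigroup matrix element is nondecreasing in $\alpha$ and $\alpha \mapsto E(S)$ is nonincreasing. Assertion (iii) combines this monotonicity with the pointwise convergence $e^{\alpha Y_t} \downarrow 1$ as $\alpha \to 0+$: monotone convergence for each fixed $t$, together with the $\alpha$-monotonicity of $E(S)$, allows one to interchange the limits in $\alpha$ and $t$, yielding $\lim_{\alpha \to 0+} E(S) = E_{\rm el}(S)$.

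The delicate step is the strict inequality in assertion (i). I intend to apply Jensen's inequality inside the path integral, after reducing via the $M$-subspace decomposition to a representation in which the integrand defines a bona fide positive measure: writing $Q_t$ for the probability measure obtained by normalizing the $\alpha = 0$ integrand, one obtains
\[
\langle \vPhi, e^{-tH_{\rm F}} \vPhi \rangle \ge \langle \vPsi, e^{-tH_{\rm el}} \vPsi \rangle \cdot \exp(\alpha\, \Ex_{Q_t}[Y_t]).
\]
Passing to the exponential rate then reduces the claim to the strict lower bound $\liminf_{t \to \infty} t^{-1} \Ex_{Q_t}[Y_t] > 0$. This should hold because $Y_t$ is a double time integral over $[0,t]^2$ of a strictly positive kernel evaluated along the path, and ergodicity of $Q_t$ with respect to its long-time invariant distribution forces $Y_t$ to grow linearly in $t$. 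The main technical obstacle lies precisely in making this linear lower bound quantitative and uniform in $t$, which calls for spectral-gap control of $H_{\rm el} \restriction \mathfrak{E}^S$ together with a nondegeneracy argument for the Fr\"{o}hlich form factor.
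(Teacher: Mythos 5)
Your treatment of part (ii) matches the paper exactly: both use the monotonicity of $S_{\rm eff,0}$ in $\alpha$ (Theorem~\ref{PathIntF}\,(ii)) combined with the exponential-rate formula of Proposition~\ref{EFKF}. Your sketch of (iii) is a reasonable use of monotone approximation; the paper does not write this step out at all.

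For (i), however, there is a genuine gap in your plan. You reduce the strict inequality to
\[
\liminf_{t\to\infty}\,\frac{1}{t}\,\Ex_{Q_t}\!\left[Y_t\right]>0,
\]
and you acknowledge that making this quantitative---via ergodicity of $Q_t$, a spectral gap for $H_{\rm el}\restriction\E^S$, and nondegeneracy of the form factor---is ``the main technical obstacle.'' None of that machinery is developed anywhere in the paper, and it is not a small fix; you have chosen the hard regime ($\beta\to\infty$) precisely where the strictness of the inequality is most difficult to retain. The paper avoids the issue completely by never going to the exponential rate. One takes the Perron--Frobenius ground state $\phi_{\rm g}$ of $H_{\rm el}\restriction L^2(D^{N,(p)})$ and uses it as a \emph{trial state} for $H_{\rm F}\restriction\K_{D^{N,(p)}}$: for each fixed $\beta>0$,
\[
\ex^{-\beta E(|M|)}\;\geq\;\left\langle\phi_{\rm g}\otimes\vOm\,\middle|\,\ex^{-\beta H_{\rm F}}\,\phi_{\rm g}\otimes\vOm\right\rangle
\;=\;\int_{D^{N,(p)}}\!d\bs{x}\,\Ex_{\bs{x}}\!\left[\mathbbm{1}_{\{\tau_p>\beta\}}\,\phi_{\rm g}(\bs{x}_0)\phi_{\rm g}(\bs{x}_\beta)\,\ex^{S_0}\right]
\;>\;\ex^{-\beta E_{\rm el}(|M|)},
\]
the first step being the variational bound $\langle\vPhi|\ex^{-\beta H}\vPhi\rangle\le\ex^{-\beta\,\mathcal{E}(H)}\|\vPhi\|^2$ applied to $H_{\rm F}\restriction\K_{D^{N,(p)}}$ (using $E(|M|)=\mathcal{E}(H_{\rm F}\restriction\K_{D^{N,(p)}})$), and the last using $S_{\rm eff,0}\ge 0$ together with strict positivity of $\phi_{\rm g}$ and of the heat kernel. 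Since this strict inequality holds for every $\beta>0$, it yields $E(|M|)<E_{\rm el}(|M|)$ immediately, with no need for Jensen's inequality, no normalized path measure $Q_t$, and no linear-growth estimate on $Y_t$. You should replace your argument for (i) with this fixed-$\beta$ trial-state comparison.
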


\begin{Rem}\rm
\begin{itemize}
\item For $ \vepsilon > 0 $, the inequality corresponding to (i) can be proven relatively easily. The crucial point of this assertion is that this strict inequality holds when $ \vepsilon = 0 $.
\item The key points of assertion (ii) are that the statement pertains to the Hamiltonian with the ultraviolet cutoff removed, and that it holds even considering the fermionic statistics of the electrons.
\end{itemize}
\end{Rem}

Theorem \ref{GEBasic} will be proved in Section \ref{PfofThmGEBasic}.

\section{Structure of $M$-subspaces}\label{SecMsubspace}

\subsection{Preliminaries}
In this section, we introduce several insightful representations of the Hilbert space for an $N$-electron system. These representations encapsulate the unique characteristics of one-dimensional systems and are crucial for the proofs of the main theorems. These representations will be utilized in the subsequent sections without additional elaboration.

\subsubsection*{Useful Identification 1}

Consider the Hilbert space of single-particle states without spin on the  domain $ D $, which is given by $ L^2(D) $. For an $ N $-particle system, the corresponding Hilbert space is $ \bigotimes^N L^2(D) $. A natural identification between this tensor product and the space $ L^2(D^N) $ is established via the relation:
\begin{align}
\bigotimes^N L^2(D) = L^2(D^N), \quad
f_1 \otimes \cdots \otimes f_N = f_1 \times \cdots \times f_N, \quad (f_1, \dots, f_N \in L^2(D)), \label{TensorIdn}
\end{align}
where the product function is defined by $ (f_1 \times \cdots \times f_N)(x_1, \dots, x_N) = f_1(x_1) \cdots f_N(x_N) $.
Under the identification \eqref{TensorIdn}, the action of $ S_{\pi} $ corresponds to
\[
\left(S_{\pi} (f_1 \otimes \cdots \otimes f_N)\right)(x_1, \dots, x_N) = f_1\left(x_{\pi^{-1}(1)}\right) \cdots f_N\left(x_{\pi^{-1}(N)}\right).
\]

Let $ A_N $ be the anti-symmetrizer on $ \bigotimes^N L^2(D) $, defined by
\[
A_N \deq  \frac{1}{N!} \sum_{\pi \in \mathfrak{S}_N} \operatorname{sgn}(\pi) S_{\pi}.
\]
For $ f_1, \dots, f_N \in L^2(D) $, the anti-symmetrized state is given by
\[
f_1 \wedge \cdots \wedge f_N \deq A_N(f_1 \otimes \cdots \otimes f_N).
\]
This leads to the following expression:
\begin{align*}
(f_1 \wedge \cdots \wedge f_N)(x_1, \dots, x_N)
&= \frac{1}{N!} \sum_{\pi \in \mathfrak{S}_N} \operatorname{sgn}(\pi) \, f_1\left(x_{\pi^{-1}(1)}\right) \cdots f_N\left(x_{\pi^{-1}(N)}\right) \\
&= \frac{1}{N!} \sum_{\pi \in \mathfrak{S}_N} \operatorname{sgn}(\pi) \, f_1\left(x_{\pi(1)}\right) \cdots f_N\left(x_{\pi(N)}\right).
\end{align*}
Here, we have used the property $ \operatorname{sgn}(\pi^{-1}) = \operatorname{sgn}(\pi) $.

In general, the anti-symmetrization of $ \varPhi \in L^2(D^N) $ is given by
\[
(A_N \varPhi)(x_1, \dots, x_N) = \frac{1}{N!} \sum_{\pi \in \mathfrak{S}_N} \operatorname{sgn}(\pi) \, \varPhi\left(x_{\pi(1)}, \dots, x_{\pi(N)}\right).
\]

\subsubsection*{Useful Identification 2}
The Hilbert space describing the interacting system of particles and the Bose field is given by  
\[  
\K \deq  L^2(D^N) \otimes \F.  
\]  
This space can be identified with the space of $\F$-valued $L^2$-functions:  
\be
\K =L^2(D^N; \F)= \int^{\oplus}_{D^N} \F \, d\bs{x}.  \label{KFiber}
\ee
Here, the right-hand side denotes the fiber direct integral of $\F$.\footnote{For details on fiber direct integrals of Hilbert spaces, see \cite[Chapter XIII.16]{Reed1978}.}

\subsubsection*{Useful Identification 3}

To detail the structure of the Hilbert space for an $ N $-electron system interacting with phonons, we emphasize the essential property that electrons possess spin-$ \frac{1}{2} $. The Hilbert space for a single-electron state is given by $ L^2(\vLa) $, with the fundamental identification:
\[
L^2(\vLa) = L^2(D) \oplus L^2(D) = L^2(D) \otimes \mathbb{C}^2.
\]
Building on this, the following identifications can be made:
\begin{align}
\left( \bigotimes^N L^2(\vLa) \right) \otimes \F 
&= \K \otimes \left( \bigotimes^N \mathbb{C}^2 \right), \notag\\
\left\{ (f_1 \otimes \eta_{\sigma_1}) \otimes \cdots \otimes (f_N \otimes \eta_{\sigma_N}) \right\} \otimes \psi 
&= \left\{ \left(f_1 \otimes \cdots \otimes f_N \right) \otimes \psi \right\} \otimes \eta_{\bs{\sigma}}, \label{IdnInH}
\end{align}
where $ f_1, \dots, f_N \in L^2(D) $ and $ \psi \in \F $.
Moreover, the anti-symmetrization of states leads to the following relation:
\begin{align}
(f_1 \otimes \eta_{\sigma_1}) \wedge \cdots \wedge (f_N \otimes \eta_{\sigma_N}) \otimes \psi 
= \sum_{\pi \in \mathfrak{S}_N} \frac{\mathrm{sgn}(\pi)}{N!} \left\{ S_{\pi} \left(f_1 \otimes \cdots \otimes f_N \right) \otimes \psi \right\} \otimes 
(s_{\pi} \eta_{\bs{\sigma}}). \label{IdnInH2}
\end{align}

\subsection{Definition of $\K_{{\rm as}, p}$}

For each $M \in \{-N/2, -N/2+1, \dots, N/2\}$, define  
\be
p \deq \frac{N}{2} - M. \label{Defp}
\ee
As illustrated in Subsection \ref{Mspace}, \(M\) corresponds to the eigenvalues of \(S^{(3)}\).
The variable \( p \) takes on the values \( 0, 1, \dots, N \).
Consider $\vPsi \in L^2(D^N)$. If $\vPsi(x_1, \dots, x_N)$ is anti-symmetric with respect to $x_1, \dots, x_p$ and also anti-symmetric with respect to $x_{p+1}, \dots, x_N$, we write it explicitly as  
\[
\vPsi(x_1, \dots, x_p\, |\, x_{p+1}, \dots, x_N),
\]  
to emphasize this property.\footnote{This notation is borrowed from \cite{Lieb1962}.}  For each $M$, we define a closed subspace of $L^2(D^N)$  as follows:  
\[
L^2_{{\rm as}, p}(D^N) \deq  \left\{ \vPsi \in L^2(D^N) : \vPsi = \vPsi(x_1, \dots, x_p\, |\, x_{p+1}, \dots, x_N) \right\}.
\]

Define  
\[  
\K_{{\rm as}, p} \deq  L^2_{{\rm as}, p}(D^N) \otimes \F.  
\]  
Under the  identification \eqref{KFiber}, each element $\vPsi \in \K_{{\rm as}, p}$ is anti-symmetric with respect to $x_1, \dots, x_p$ and also anti-symmetric with respect to $x_{p+1}, \dots, x_N$.  
Using the same notation as above, this is expressed as  
\[  
\vPsi = \vPsi(x_1, \dots, x_p\, |\, x_{p+1}, \dots, x_N).  
\]

\subsection{Representation of Vectors in the $M$-subspace}\label{Mspace}
For each \(M \in \mathrm{spec}(S^{(3)})=\{-N/2, -N/2+1, \dots, N/2\}\), the {\bf \(M\)-subspace of} \(\E\) is defined as
\[
\E_M \coloneqq \ker\bigl(S^{(3)} - M\bigr) \cap \E.
\]
The \(M\)-subspace has proven instrumental in analyzing the magnetic properties of many-electron systems (see, e.g., \cite{Mattis2006}). In this subsection, we examine the properties of the \(M\)-subspace.

 It follows from Eq. \eqref{IdnInH2}  that any element $\vPhi$ of $\E_M$ can be expressed as:
\be
\vPhi = \sum_{\pi \in \mathfrak{S}_N} \frac{\mathrm{sgn}(\pi)}{N!} S_{\pi} \vPsi \otimes s_{\pi} \eta_{\bs{\sigma}_M}, \quad
\vPsi \in L^2(D^N). \label{BasicVec1}
\ee
Here, the reference spin configuration $\bs{\sigma}_M$ is defined by:
\be
\bs{\sigma}_M \deq (\underbrace{-1, \dots, -1}_p, \underbrace{+1, \dots, +1}_{N-p}). \notag
\ee

Consider the $M$-subspace of $\h$:
\be
\h_M \deq \E_M \otimes \F. \notag
\ee
Using Eqs.   \eqref{IdnInH} and \eqref{BasicVec1}, any element $\vPhi$ of $\h_M$ can be represented as:
\be
\vPhi = \sum_{\pi \in \mathfrak{S}_N} \frac{\mathrm{sgn}(\pi)}{N!} S_{\pi} \vPsi \otimes s_{\pi} \eta_{\bs{\sigma}_M}, \quad
\vPsi \in \K. \label{BasicVec}
\ee

Define a subgroup of the symmetric group $\mathfrak{S}_N$ as:
\be
\mathfrak{S}_{N, \rm T}^{(M)} \deq  \{\tau \times \xi : \tau \in \mathfrak{S}_p,\, \xi \in \mathfrak{S}_{N-p}\}. \notag
\ee
Here, $p$ is defined in Eq. \eqref{Defp};  for any $\tau \in \mathfrak{S}_p$ and $\xi \in \mathfrak{S}_{N-p}$, we define $\tau \times \xi \in \mathfrak{S}_N$ by
\be
(\tau \times \xi)(1, \dots, N) \deq \left( \tau(1), \dots, \tau(p),\, \xi(p+1), \dots, \xi(N) \right). \notag
\ee

Let $q = \min\{p, N - p\}$. For $n = 1, \dots, q$, define the sets consisting of products of transpositions:
\begin{align*}
\varXi_n \deq  \Big\{ (i_1, j_1) \circ \dots \circ (i_n, j_n) \in \mathfrak{S}_N \setminus \mathfrak{S}_{N, \rm T}^{(M)} :\ & i_{\alpha} \neq i_{\beta},\ j_{\alpha} \neq j_{\beta}\ \text{if}\ \alpha \neq \beta,\no
& i_1, \dots, i_n \in \{1, \dots, p\},\ j_1, \dots, j_n \in \{p+1, \dots, N\} \Big\}.
\end{align*}
Here, $(i, j)$ denotes the permutation (transposition) swapping $i$ and $j$. Elements of $\varXi_n$ swap $n$ elements selected from each of the two clusters $\{1, \dots, p\}$ and $\{p+1, \dots, N\}$. Then, define
\be
\mathfrak{S}_{N, \rm NT}^{(M)} \deq \bigsqcup_{n=1}^{q} \varXi_n. \notag
\ee
It is readily observed that
\be
\mathfrak{S}_{N, \rm NT}^{(M)} \circ \mathfrak{S}_{N, \rm T}^{(M)} = \mathfrak{S}_N \setminus \mathfrak{S}_{N, \rm T}^{(M)}. \label{Prod}
\ee

Given the vector $\vPhi$ defined in Eq. \eqref{BasicVec}, we define its \textbf{representative part} $\mathscr{R}_M(\vPhi) \in \mathfrak{K}_{\mathrm{as}, p}$ by
\be
\mathscr{R}_M(\vPhi) = \binom{N}{p}^{-1} A_p \otimes A_{N-p} \vPsi. \notag
\ee
For clarity, note that $\vPsi \in \mathfrak{K}$ is the vector appearing on the right-hand side of Eq.  \eqref{BasicVec}. Observe that $\mathscr{R}_M(\vPhi)$ is not fully anti-symmetrized; hence, $\mathscr{R}_M(\vPhi) \notin \h_M$.

\begin{Prop}\label{PropR_M}
Any \(\vPhi \in \h_M\) can be written as 
\be
\vPhi 
= \mathscr{R}_M(\vPhi)\,\otimes\,\eta_{\bs{\sigma}_M}
\;+\;
\sum_{\pi \in \mathfrak{S}_{N, \mathrm{NT}}^{(M)}} \mathrm{sgn}(\pi)\,\bigl(S_{\pi}\,\otimes\,s_{\pi}\bigr)\bigl(\mathscr{R}_M(\vPhi)\,\otimes\,\eta_{\bs{\sigma}_M}\bigr).
\label{PhiDec2}
\ee
Conversely, given any \(\vPsi \in \K_{\mathrm{as}, p}\), define \(\vPhi \in \h_M\) by
\be
\vPhi 
\deq \vPsi \,\otimes\,\eta_{\bs{\sigma}_M}
\;+\;
\sum_{\pi \in \mathfrak{S}_{N, \mathrm{NT}}^{(M)}} \mathrm{sgn}(\pi)\,\bigl(S_{\pi}\,\otimes\,s_{\pi}\bigr)\bigl(\vPsi \,\otimes\,\eta_{\bs{\sigma}_M}\bigr),
\label{PsiDec}
\ee
then \(\mathscr{R}_M(\vPhi)=\binom{N}{p}\,\vPsi\). Consequently, the linear map \(\mathscr{R}_M: \h_M \to \mathfrak{K}_{\mathrm{as}, p}\) is bijective.
\end{Prop}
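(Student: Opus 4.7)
The plan is to reduce the proposition to the coset structure of $\mathfrak{S}_N$ modulo the stabilizer $\mathfrak{S}_{N,\mathrm{T}}^{(M)}$ of $\eta_{\bs{\sigma}_M}$. First I would verify that $\mathscr{R}_M(\vPhi)$ is well-defined, i.e., independent of the choice of $\vPsi$ in \eqref{BasicVec}. The key observation is that $s_\pi \eta_{\bs{\sigma}_M} = \eta_{\pi \bs{\sigma}_M}$ takes only $\binom{N}{p}$ distinct values as $\pi$ ranges over $\mathfrak{S}_N$, two permutations producing the same spin vector precisely when they lie in the same coset of $\mathfrak{S}_{N,\mathrm{T}}^{(M)}$. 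Since distinct rearrangements of $\bs{\sigma}_M$ are mutually orthogonal in $\otimes^N \mathbb{C}^2$, grouping \eqref{BasicVec} by coset and equating coefficients shows that two $\vPsi$'s yield the same $\vPhi$ if and only if their difference lies in $\ker(A_p \otimes A_{N-p})$, so $\binom{N}{p}^{-1}(A_p \otimes A_{N-p})\vPsi$ depends only on $\vPhi$.

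To prove the decomposition \eqref{PhiDec2}, I would split \eqref{BasicVec} according to \eqref{Prod}: the identity coset $\mathfrak{S}_{N,\mathrm{T}}^{(M)}$ together with the non-trivial cosets, each represented by a $\rho \in \mathfrak{S}_{N,\mathrm{NT}}^{(M)}$. On the identity coset, using $s_\sigma \eta_{\bs{\sigma}_M} = \eta_{\bs{\sigma}_M}$ for $\sigma \in \mathfrak{S}_{N,\mathrm{T}}^{(M)}$ together with $\mathrm{sgn}(\tau \times \xi) = \mathrm{sgn}(\tau)\mathrm{sgn}(\xi)$, the inner sum reduces to $\frac{p!(N-p)!}{N!}(A_p \otimes A_{N-p})\vPsi \otimes \eta_{\bs{\sigma}_M}$, which is exactly $\mathscr{R}_M(\vPhi) \otimes \eta_{\bs{\sigma}_M}$. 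The analogous computation on the coset indexed by $\rho$ factors out $S_\rho$ and $s_\rho$, producing $\mathrm{sgn}(\rho)(S_\rho \otimes s_\rho)(\mathscr{R}_M(\vPhi) \otimes \eta_{\bs{\sigma}_M})$; summing over all non-trivial coset representatives assembles \eqref{PhiDec2}.

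For the converse, given $\vPsi \in \K_{\mathrm{as}, p}$, I would run the previous argument backwards: the anti-symmetry of $\vPsi$ in the first $p$ and last $N-p$ variables allows the inner sub-sum over $\mathfrak{S}_{N,\mathrm{T}}^{(M)}$ to be reinstated, producing the form \eqref{BasicVec} with an explicit representative vector $\vPsi'$. Applying $\mathscr{R}_M$ to the resulting $\vPhi$ and using $(A_p \otimes A_{N-p})\vPsi = \vPsi$ yields the stated normalization $\mathscr{R}_M(\vPhi) = \binom{N}{p}\vPsi$. Bijectivity is then immediate: injectivity follows from \eqref{PhiDec2}, which reconstructs $\vPhi$ from $\mathscr{R}_M(\vPhi)$, and surjectivity from the explicit inverse supplied by \eqref{PsiDec}. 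The main technical obstacle is the combinatorial bookkeeping: checking that the family $\{\varXi_n\}_{n=1}^{q}$ supplies exactly one representative per non-trivial coset of $\mathfrak{S}_{N,\mathrm{T}}^{(M)}$, and that the normalization $\binom{N}{p}^{-1}$ in the definition of $\mathscr{R}_M$ cancels cleanly with the factor $p!(N-p)!/N!$ produced by the inner sum over the stabilizer; once these ingredients are in place, the remainder is a direct symmetry computation.
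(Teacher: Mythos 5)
Your argument follows the same route as the paper's proof: split the sum in \eqref{BasicVec} according to the coset decomposition \eqref{Prod}, use $s_\pi\eta_{\bs{\sigma}_M}=\eta_{\bs{\sigma}_M}$ for $\pi\in\mathfrak{S}_{N,\mathrm{T}}^{(M)}$ to identify the trivial-coset contribution as $\mathscr{R}_M(\vPhi)\otimes\eta_{\bs{\sigma}_M}$, and factor the complement through $\mathfrak{S}_{N,\mathrm{NT}}^{(M)}$; your explicit well-definedness check via orthogonality of $\{s_\rho\eta_{\bs{\sigma}_M}\}_\rho$ is a useful addition that the paper leaves implicit. One arithmetic slip in the converse: the $\vPhi$ defined by \eqref{PsiDec} has the form \eqref{BasicVec} with $\vPsi'=\binom{N}{p}\vPsi$, so once the $\binom{N}{p}^{-1}$ built into the definition of $\mathscr{R}_M$ is applied, one finds $\mathscr{R}_M(\vPhi)=(A_p\otimes A_{N-p})\vPsi=\vPsi$, not $\binom{N}{p}\vPsi$ (indeed, since \eqref{PhiDec2} and \eqref{PsiDec} have identical shapes and the spin vectors $s_\rho\eta_{\bs{\sigma}_M}$ indexed by distinct cosets are orthogonal, substituting one into the other forces $\mathscr{R}_M(\vPhi)=\vPsi$). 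The factor matches the proposition as stated, so the discrepancy is in the target formula rather than in your method, and it does not affect the bijectivity conclusion.
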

\begin{Rem}
Note that each term on the right-hand side of Eq. \eqref{PhiDec2} is not an element of $\h_M$. Therefore, Eq. \eqref{PhiDec2} should be regarded as an equality in the larger Hilbert space $\left( \bigotimes^N L^2(\vLa) \right) \otimes \F$.
\end{Rem}
\begin{proof}
Assuming $\vPhi$ is given by Eq. \eqref{BasicVec}.  Since $s_{\pi} \eta_{\bs{\sigma}_M} = \eta_{\bs{\sigma}_M}$  for $\pi \in \mathfrak{S}_{N, \rm T}^{(M)}$, we have
\begin{align}
\vPhi &= \sum_{\pi \in \mathfrak{S}_{N, \rm T}^{(M)}} \frac{\mathrm{sgn}(\pi)}{N!} S_{\pi} \vPsi \otimes \eta_{\bs{\sigma}_M} + \sum_{\pi \in \mathfrak{S}_N \setminus \mathfrak{S}_{N, \rm T}^{(M)} } \frac{\mathrm{sgn}(\pi)}{N!} S_{\pi} \vPsi \otimes s_{\pi} \eta_{\bs{\sigma}_M}. \label{PhiDec}
\end{align}
When decomposed in this manner, the first term on the right-hand side equals $\mathscr{R}_M(\vPhi) \otimes \eta_{\bs{\sigma}_M}$. On the other hand, using Eq. \eqref{Prod}, we obtain
\begin{align*}
\sum_{\pi \in \mathfrak{S}_N \setminus \mathfrak{S}_{N, \rm T}^{(M)} } \frac{\mathrm{sgn}(\pi)}{N!} S_{\pi} \otimes s_{\pi} &= \sum_{\pi_1 \in \mathfrak{S}_{N, \rm NT}^{(M)},\ \pi_2 \in  \mathfrak{S}_{N, \rm T}^{(M)}} \frac{\mathrm{sgn}(\pi_1 \circ \pi_2)}{N!} S_{\pi_1 \circ \pi_2} \otimes s_{\pi_1 \circ \pi_2} \no
&= \left( \sum_{\pi \in \mathfrak{S}_{N, \rm NT}^{(M)}} \mathrm{sgn}(\pi) S_{\pi} \otimes s_{\pi} \right) \left( \sum_{\pi \in \mathfrak{S}_{N, \rm T}^{(M)}} \frac{\mathrm{sgn}(\pi)}{N!} S_{\pi} \otimes s_{\pi} \right).
\end{align*}
Therefore, the second term on the right-hand side of Eq. \eqref{PhiDec} equals $\sum_{\pi \in \mathfrak{S}_{N, \rm NT}^{(M)}} \mathrm{sgn}(\pi) (S_{\pi} \otimes s_{\pi})(\mathscr{R}_M(\vPhi) \otimes \eta_{\bs{\sigma}_M})$. The proof of the remaining assertions is straightforward.
\end{proof}

The subspace of $\h$:
\be
\h^S_M \deq \h_M \cap \h^S = \left\{
\vPsi \in \h\, :\, S^{(3)} \vPsi = M \vPsi,\quad \bs{S}^2 \vPsi = S(S+1) \vPsi
\right\}
\ee
is important in the subsequent discussion.

\begin{Coro}\label{IffMM}
For any $\vPhi \in \h_M$, the following statements hold:
\begin{itemize}
\item[\rm (i)] $\vPhi = 0$ if and only if $\mathscr{R}_M(\vPhi) = 0$.
\item[\rm (ii)] $\mathscr{R}_M(\vPhi) = \left( \mathscr{R}_M(\vPhi) \right)(x_1, \dots, x_p\,  |\,  x_{p+1}, \dots, x_N)$. I.e., $\mathscr{R}_M(\vPhi)\in \mathfrak{K} _{\mathrm{as}, p}$.
\item[\rm (iii)] Let $S^{(\pm)} = S^{(1)} \pm \im S^{(2)}$ denote the raising and lowering operators. Then,
\begin{align}
\mathscr{R}_{M+1}(S^{(+)} \vPhi) &=
\mathscr{R}_M(\vPhi) - \sum_{j=p+1}^N S_{(p, j)} \mathscr{R}_M(\vPhi), \label{RS+}\\
\mathscr{R}_{M+1}(S^{(-)} \vPhi) &=
\mathscr{R}_M(\vPhi) - \sum_{j=1}^p S_{(j, p+1)} \mathscr{R}_M(\vPhi). \label{RS-}
\end{align}
In particular, for any $M\in \mathrm{spec}(S^{(3)})$ with $M\ge 0$, a necessary and sufficient condition for $\vPhi \in \h^M_M$ (respectively, $\vPhi\in \h_{-M}^M$)is that the right-hand side of Eq. \eqref{RS+} (respectively,  Eq. \eqref{RS-}) equals zero.
\end{itemize}
\end{Coro}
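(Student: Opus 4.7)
The plan is to handle (i) and (ii) as immediate consequences of Proposition \ref{PropR_M} and of the definition of $\mathscr{R}_M$, and then to obtain (iii) by directly identifying the $\eta_{\bs{\sigma}_{M\pm 1}}$–component of $S^{(\pm)}\vPhi$ in the decomposition \eqref{PhiDec2}.

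For (i), Proposition \ref{PropR_M} establishes that $\mathscr{R}_M\colon\h_M\to\K_{\mathrm{as},p}$ is a bijection, so $\vPhi=0$ if and only if $\mathscr{R}_M(\vPhi)=0$. For (ii), the operator $A_p\otimes A_{N-p}$ antisymmetrises the first $p$ and the last $N-p$ spatial variables separately, so $\mathscr{R}_M(\vPhi)=\binom{N}{p}^{-1}(A_p\otimes A_{N-p})\vPsi$ lies in $L^2_{\mathrm{as},p}(D^N)\otimes\F=\K_{\mathrm{as},p}$.

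For (iii), I would expand in the orthonormal spin basis $\{\eta_{\bs{\sigma}}\}$ of $\otimes^N\mathbb{C}^2$ and read off the $\eta_{\bs{\sigma}_{M+1}}$–coefficient of $S^{(+)}\vPhi$. Since $S^{(+)}=\sum_{j=1}^N S^{(+)}_j$ raises exactly one spin, the basis vector $\eta_{\bs{\sigma}_{M+1}}$ can be produced by $S^{(+)}_j\eta_{\bs{\sigma}}$ only when $\bs{\sigma}_{M+1,j}=+1$ and $\bs{\sigma}$ coincides with $\bs{\sigma}_{M+1}$ off the $j$-th slot; the admissible indices are therefore $j\in\{p,\dots,N\}$, with the required configuration being $\bs{\sigma}_M$ for $j=p$ and the transposed configuration $(p,j)\bs{\sigma}_M$ for $j\in\{p+1,\dots,N\}$. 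Consulting \eqref{PhiDec2}, the $\eta_{\bs{\sigma}_M}$–component of $\vPhi$ is exactly $\mathscr{R}_M(\vPhi)$, while for $j>p$ the $\eta_{(p,j)\bs{\sigma}_M}$–component comes from the single permutation $\pi=(p,j)\in\varXi_1\subset\mathfrak{S}_{N,\mathrm{NT}}^{(M)}$ and equals $\operatorname{sgn}((p,j))\,S_{(p,j)}\mathscr{R}_M(\vPhi)=-S_{(p,j)}\mathscr{R}_M(\vPhi)$. Summing the two types of contributions yields \eqref{RS+}; the derivation of \eqref{RS-} is structurally identical, with the transpositions $(p,j)$ replaced by $(j,p+1)$ for $j\in\{1,\dots,p\}$.

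The final characterisation of $\h_{\pm M}^M$ reduces to a standard $\mathfrak{su}(2)$–theoretic observation combined with (i). For $M\ge 0$, a vector $\vPhi\in\h_M$ lies in $\h_M^M$ precisely when it is annihilated by $S^{(+)}$, because $M$ is the maximal $S^{(3)}$–eigenvalue inside $\h^M$; applying (i) at level $M+1$ turns $S^{(+)}\vPhi=0$ into $\mathscr{R}_{M+1}(S^{(+)}\vPhi)=0$, which by \eqref{RS+} is the vanishing of its right–hand side. The case $\vPhi\in\h_{-M}^M$ follows by the analogous lowest–weight argument via \eqref{RS-}. I expect the main obstacle to be the bookkeeping in (iii), specifically the verification that no permutation in \eqref{PhiDec2} other than $\pi=(p,j)$ produces the spin configuration $(p,j)\bs{\sigma}_M$; this is however guaranteed by \eqref{Prod} together with the fact that $\mathfrak{S}_{N,\mathrm{NT}}^{(M)}$ serves as a system of nontrivial coset representatives for $\mathfrak{S}_N/\mathfrak{S}_{N,\mathrm{T}}^{(M)}$, so no combinatorial input beyond Proposition \ref{PropR_M} is needed.
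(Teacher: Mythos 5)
Your plan for (i), (ii), and the main computation in (iii) is essentially the paper's own argument: (i) follows from the bijectivity of $\mathscr{R}_M$, (ii) from the structure of $A_p\otimes A_{N-p}$, and (iii) by reading off the $\eta_{\bs{\sigma}_{M+1}}$–coefficient of $S^{(+)}\vPhi$ in the decomposition \eqref{PhiDec2}, with the highest-/lowest-weight characterization for the final equivalence. Your identification of the contributing terms (the $j=p$ contribution coming from the first summand and the $j>p$ contributions from the single transpositions $(p,j)\in\varXi_1$, each with sign $-1$) is correct and matches the published proof.

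One justification you give in the last sentence is, however, wrong as stated. The set $\mathfrak{S}_{N,\mathrm{NT}}^{(M)}$ is \emph{not} in general a system of nontrivial coset representatives for $\mathfrak{S}_N/\mathfrak{S}_{N,\mathrm{T}}^{(M)}$: already for $N=4$, $p=2$ one has $|\mathfrak{S}_{N,\mathrm{NT}}^{(M)}| = |\varXi_1|+|\varXi_2| = 4+2 = 6$, whereas there are only $\binom{4}{2}-1 = 5$ nontrivial cosets; indeed $(1,3)(2,4)$ and $(1,4)(2,3)$ lie in the same coset since their quotient is $(1,2)(3,4)\in\mathfrak{S}_2\times\mathfrak{S}_2$. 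Fortunately the uniqueness you need does not depend on that claim. The correct and simpler argument is that for $\pi\in\varXi_n$ the spin configuration $\pi\bs{\sigma}_M$ differs from $\bs{\sigma}_M$ in exactly $2n$ slots, while $(p,j)\bs{\sigma}_M$ differs in exactly two; hence only $\varXi_1$ can contribute, and within $\varXi_1$ the transposition is uniquely determined by the pair of flipped slots to be $(p,j)$. Replacing the coset-representative appeal with this degree-counting observation closes the gap and leaves the rest of your argument intact.
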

\begin{proof}
(i) and (ii) are evident.

(iii)
For $f \otimes \eta_{\bs{\sigma}_{M+1}}$, we refer to $f \in \K$ as the {\bf coefficient of} $\eta_{\bs{\sigma}_{M+1}}$. Applying Eq. \eqref{PhiDec2}, $ S^{(+)} \vPhi $ can be decomposed as
\[
S^{(+)} \vPhi = f \otimes \eta_{\bs{\sigma}_{M+1}} + \text{additional terms}.
\]
In this representation, the coefficient $ f $ associated with $ \eta_{\bs{\sigma}_{M+1}} $ is precisely $ \mathscr{R}_{M+1}(S^{(+)} \vPhi) $.

Let $ \sigma^{(+)} = \sigma^{(1)} + \im \sigma^{(2)} $. Since $ S^{(+)} = \sum_{j=1}^N \sigma^{(+)}_j $, it follows that
\[
S^{(+)} \mathscr{R}_M(\vPhi) \otimes \eta_{\bs{\sigma}_M} = \sum_{j=1}^N \mathscr{R}_M(\vPhi) \otimes \sigma^{(+)}_j \eta_{\bs{\sigma}_M}.
\]
Among the terms appearing on the right-hand side of this equation, the coefficient associated with $\eta_{\bs{\sigma}_{M+1}}$ is $\mathscr{R}_M(\vPhi)$.

Next, denote the second term on the right-hand side of Eq.  \eqref{PhiDec2} by $ \vPhi_1 $. While $ S^{(+)} \vPhi_1 $ expands into multiple terms, collecting those of the form $ f \otimes \eta_{\bs{\sigma}_{M+1}} $ yields
\[
f = -\sum_{j=p+1}^N S_{(p, j)} \mathscr{R}_M(\vPhi).
\]
Here, $ (p, j) $ represents the transposition exchanging $ p $ and $ j $.
\end{proof}

\subsection{Structure of $\K_{{\rm as}, p}$ }

We investigate the structure of elements in $ L^2_{{\rm as}, p}(D^N) $. For this purpose, we define the domain  
\begin{align*}
D^N_{p, \neq} \deq \Big\{(x_1, \dots, x_N) \in D^N \ : \ & \text{$ x_i \neq x_j $ for any pair $ (i, j) \in \{1, \dots, p\}^{\times 2} $ with $ i \neq j $} \no
&\text{and for any pair $ (i, j) \in \{p+1, \dots, N\}^{\times 2} $ with $ i \neq j $} \Big\}.
\end{align*}
Due to the anti-symmetry property, we identify  
\be
L^2_{{\rm as}, p}(D^N) = L^2_{{\rm as}, p}(D^N_{p, \neq}), \label{AntiZero}
\ee
where
\begin{align}
L^2_{{\rm as}, p}(D^N_{p, \neq}) \deq \left\{\vPsi \in L^2(D^N_{p, \neq}) : \vPsi = \vPsi(x_1, \dots, x_p \mid x_{p+1}, \dots, x_N) \right\}. \notag
\end{align}

We now introduce the reference domain $ D^{N, (p)} $, defined as
\be
D^{N, (p)} \deq \left\{ (x_1, \dots, x_N) \in D^N : x_1 < x_2 < \dots < x_p,\ x_{p+1} < \dots < x_N \right\}. \label{DefDNp} \notag
\ee
For permutations $ \tau \in \mathfrak{S}_p $ (acting on $ \{1, \dots, p\} $) and $ \xi \in \mathfrak{S}_{N-p} $ (acting on $ \{p+1, \dots, N\} $), we define  
\be
D^{N, (p)}_{\tau, \xi} \deq  \left\{ (x_1, \dots, x_N) \in D^N : x_{\tau(1)} < x_{\tau(2)} < \dots < x_{\tau(p)},\ x_{\xi(p+1)} < \dots < x_{\xi(N)} \right\}, \notag
\ee
where $ D^{N, (p)}_{\mathrm{Id}, \mathrm{Id}} \deq  D^{N, (p)} $ (with $ \mathrm{Id} $ being the identity permutation). It follows that
\be
D^N_{p, \neq} = \bigsqcup_{\tau \in \mathfrak{S}_p, \xi \in \mathfrak{S}_{N-p}} D^{N, (p)}_{\tau, \xi}. \notag
\ee

The properties of each $ \vPsi \in L^2_{{\rm as}, p}(D^N) $ are determined by its behavior on $ D^{N, (p)} $. Specifically, letting $ \vPsi_{D^{N, (p)}} = \vPsi \mathbbm{1}_{D^{N, (p)}} $, we have
\be
\vPsi = \sum_{\tau \in \mathfrak{S}_p, \xi \in \mathfrak{S}_{N-p}} \mathrm{sgn}(\tau)\mathrm{sgn}(\xi) S_{\tau^{-1} \times \xi^{-1}} \vPsi_{D^{N, (p)}}. \label{ExtP0}
\ee
Here, we used the identification   \eqref{AntiZero}, and the operator $ S_{\tau^{-1} \times \xi^{-1}} $ is defined by
\[
\left(S_{\tau^{-1} \times \xi^{-1}} \varPhi \right)(x_1, \dots, x_N) = \varPhi\left(x_{\tau(1)}, \dots, x_{\tau(p)}, x_{\xi(p+1)}, \dots, x_{\xi(N)}\right).
\]
Furthermore, for a given set \(A\), we denote by \(\mathbbm{1}_A\) the indicator function of \(A\).
Conversely, if $ \vPsi_{D^{N, (p)}} $ belongs to $ L^2(D^{N, (p)}) $, then $ \vPsi_{D^{N, (p)}} $ can be extended to an element of $ L^2_{{\rm as}, p}(D^N) $ via Eq. \eqref{ExtP0}.

We now define  
\be
\K_{D^{N, (p)}} \deq L^2(D^{N, (p)}) \otimes \F. \label{DefKDNP}
\ee
The properties described above naturally extend to $ \K_{D^{N, (p)}} $, as summarized in the following lemma:

\begin{Lemm}\label{PsiDecD}
For each \(\vPsi \in \K_{\mathrm{as}, p}\), set \(\vPsi_{D^{N,(p)}} = \vPsi \,\mathbbm{1}_{D^{N,(p)}}\). Then
\be
\vPsi = \sum_{\tau \in \mathfrak{S}_p, \xi \in \mathfrak{S}_{N-p}} \mathrm{sgn}(\tau)\mathrm{sgn}(\xi) S_{\tau^{-1} \times \xi^{-1}} \vPsi_{D^{N, (p)}}. \label{ExtP}
\ee
Conversely, any \(\vPsi_{D^{N,(p)}} \in \K_{D^{N,(p)}}\) extends to \(\K_{\mathrm{as}, p}\) via this formula.

 Define the linear map \(\iota_p : \K_{D^{N,(p)}} \to \K_{\mathrm{as}, p}\) by
\[
\iota_p\,\vPsi_{D^{N,(p)}} \deq  \sum_{\tau \in \mathfrak{S}_p,\, \xi \in \mathfrak{S}_{N-p}} \mathrm{sgn}(\tau)\,\mathrm{sgn}(\xi)\, S_{\tau^{-1} \times \xi^{-1}}\, \vPsi_{D^{N,(p)}}.
\]
Then, \(\|\iota_p\,\vPsi_{D^{N,(p)}}\|^2 = p!\,(N-p)!\,\|\vPsi_{D^{N,(p)}}\|^2\), and \(\iota_p\) is bijective.
\end{Lemm}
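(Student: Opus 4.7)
The plan is to bootstrap the lemma from its spinless, bosonless counterpart, namely Eq. \eqref{ExtP0} and the observations surrounding the partition $D^N_{p,\neq} = \bigsqcup_{\tau,\xi} D^{N,(p)}_{\tau,\xi}$, by tensoring everything with the Fock space $\F$ via the fiber direct-integral identification $\K = L^2(D^N;\F) = \int^{\oplus}_{D^N} \F\, d\bs x$ from Eq. \eqref{KFiber}. Under this identification, elements of $\K$ are $\F$-valued $L^2$-functions on $D^N$, and the operators $S_{\tau^{-1}\times\xi^{-1}}$ act by rearranging the spatial variables fiberwise. The anti-symmetry conditions defining $\K_{\mathrm{as},p}$ then translate, via Eq. \eqref{AntiZero}, to pointwise relations between the fibers $\vPsi(\bs x)$, so the scalar proof of Eq. \eqref{ExtP0} goes through verbatim on each fiber.

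First I would establish Eq. \eqref{ExtP}. Fix $\vPsi \in \K_{\mathrm{as}, p}$. By the fiber representation, $\vPsi$ vanishes on the complement of $D^N_{p,\neq}$ (the diagonals where two of the first $p$ or two of the last $N-p$ coordinates coincide), so $\vPsi = \sum_{\tau,\xi} \vPsi\,\mathbbm{1}_{D^{N,(p)}_{\tau,\xi}}$ as an $\F$-valued $L^2$-function. The anti-symmetry in the first $p$ and last $N-p$ variables gives, fiberwise,
\[
\vPsi\,\mathbbm{1}_{D^{N,(p)}_{\tau,\xi}} \;=\; \mathrm{sgn}(\tau)\,\mathrm{sgn}(\xi)\, S_{\tau^{-1}\times\xi^{-1}}\bigl(\vPsi\,\mathbbm{1}_{D^{N,(p)}}\bigr),
\]
because $S_{\tau^{-1}\times\xi^{-1}}$ maps $D^{N,(p)}_{\tau,\xi}$ onto $D^{N,(p)}$ and the sign factor accounts for the antisymmetry; summing over $(\tau,\xi)$ yields Eq. \eqref{ExtP}.

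Next I would address the converse. Given $\vPsi_{D^{N,(p)}} \in \K_{D^{N,(p)}}$, define $\iota_p \vPsi_{D^{N,(p)}}$ by the displayed formula. The summands have mutually disjoint essential supports, each contained in some $D^{N,(p)}_{\tau,\xi}$, so the resulting function is a well-defined element of $L^2(D^N;\F)$. To check that it lies in $\K_{\mathrm{as},p}$, one verifies anti-symmetry under swapping any two of $x_1,\dots,x_p$ (respectively $x_{p+1},\dots,x_N$); this amounts to observing that composing the sum with such a transposition merely relabels the index set $\mathfrak S_p \times \mathfrak S_{N-p}$, and the extra sign introduced by the transposition exactly cancels with the sign factor under reindexing. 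Substituting this $\iota_p \vPsi_{D^{N,(p)}}$ back into the formula defining $\mathscr R$ and using that the restriction to $D^{N,(p)}$ selects only the identity summand then shows $(\iota_p \vPsi_{D^{N,(p)}})\mathbbm{1}_{D^{N,(p)}} = \vPsi_{D^{N,(p)}}$.

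Finally, I would compute the norm and deduce bijectivity. Because the sets $D^{N,(p)}_{\tau,\xi}$ are pairwise disjoint and each $S_{\tau^{-1}\times\xi^{-1}}$ is unitary on $L^2(D^N;\F)$ (being a change of variables), the summands in $\iota_p \vPsi_{D^{N,(p)}}$ are mutually orthogonal in $\K$ with common norm $\|\vPsi_{D^{N,(p)}}\|$. Since $|\mathfrak S_p \times \mathfrak S_{N-p}| = p!(N-p)!$, this gives $\|\iota_p \vPsi_{D^{N,(p)}}\|^2 = p!(N-p)!\,\|\vPsi_{D^{N,(p)}}\|^2$, so $\iota_p$ is injective and has closed range; surjectivity onto $\K_{\mathrm{as},p}$ is the content of Eq. \eqref{ExtP} established in the first step. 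The only real bookkeeping hurdle is keeping the sign conventions and the direction of the permutation (and its inverse) consistent when passing between the fiberwise pointwise equality and the operator notation $S_{\tau^{-1}\times\xi^{-1}}$; everything else is mechanical once the fiber representation is in place.
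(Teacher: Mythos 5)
Your proposal is correct and follows the same route the paper intends. The paper actually offers no written proof of Lemma \ref{PsiDecD}: it states the scalar version of the domain decomposition and the extension formula \eqref{ExtP0} for $L^2_{\mathrm{as},p}(D^N)$, then simply asserts that "the properties described above naturally extend to $\K_{D^{N,(p)}}$." Your argument supplies exactly the missing bookkeeping — lifting the pointwise anti-symmetry argument to the $\F$-valued setting via the fiber identification of Eq.~\eqref{KFiber}, observing that the $S_{\tau^{-1}\times\xi^{-1}}$ act fiberwise as measure-preserving changes of variables, and using the disjointness of the supports $D^{N,(p)}_{\tau,\xi}$ to get the $p!\,(N-p)!$ norm factor and hence injectivity, with surjectivity coming from the decomposition identity itself. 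The only point worth double-checking in a write-up is the direction of the permutation in $S_{\tau^{-1}\times\xi^{-1}}$, which you correctly flag as the main potential source of sign errors: with the paper's convention $\bigl(S_{\tau^{-1}\times\xi^{-1}}\varPhi\bigr)(\bs{x})=\varPhi\bigl(x_{\tau(1)},\dots,x_{\tau(p)},x_{\xi(p+1)},\dots,x_{\xi(N)}\bigr)$, the support of $S_{\tau^{-1}\times\xi^{-1}}\vPsi_{D^{N,(p)}}$ is precisely $D^{N,(p)}_{\tau,\xi}$ and the signs $\mathrm{sgn}(\tau)\mathrm{sgn}(\xi)$ come out as you state.
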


\subsection{Characteristics of the Ground State}

To clarify the essential conditions, we consider a general Hamiltonian. Let $ K $ be a non-negative self-adjoint operator on $ \K $ satisfying the following condition:

\begin{description}
\item[\hypertarget{A1}{(A. 1)}] 
For any $ \pi \in \mathfrak{S}_N $, it holds that $ S_{\pi} \D(K) \subseteq \D(K) $ and $ S_{\pi} K = K S_{\pi} $ on $ \D(K) $.
\item[\hypertarget{A2}{(A. 2)}]
Let $P^{(p)}$ denote the orthogonal projection operator from $\mathfrak{K}$ onto its closed subspace $\mathfrak{K}_{D^{N, (p)}}$.  
 For all $M \in \{-N/2, -N/2 + 1, \dots, N/2\}$, the operators $K$ and $P^{(p)}$ commute in the strong sense:  
$
\mathrm{e}^{-\beta K} P^{(p)} = P^{(p)} \mathrm{e}^{-\beta K}$ for all  $\beta > 0$.
\end{description}

As we shall see in the following sections, \(K\) corresponds to the Hamiltonian \(H_{\rm F}\).

For a lower semi-bounded self-adjoint operator \(A\), let \(\mathcal{E}(A) \deq \inf \mathrm{spec}(A)\).
In what follows, we adopt the following convention: if \(\mathcal{E}(A)\) is an eigenvalue, then we say that \(A\) admits a {\bf ground state}. Naturally, this terminology is employed only when \(A\) represents the energy operator of a particular physical system.

\begin{Lemm}\label{RPartGS}
Let $\mathsf{K}_M$ be the restriction of $K \otimes \one_{\otimes^N \mathbb{C}^2}$ to $\h_M$, which is a non-negative self-adjoint operator. Suppose $\vPhi \in \h_M$. If $\mathscr{R}_M(\vPhi)$ is a ground state of $K \restriction \K_{{\rm as}, p}$, then $\vPhi$ is a ground state of $\mathsf{K}_M$. Conversely, if $\vPhi$ is a ground state of $\mathsf{K}_M$, then $\mathscr{R}_M(\vPhi)$ is a ground state of $K \restriction \K_{{\rm as}, p}$. In both cases, $\mathcal{E}(\mathsf{K}_M) = \mathcal{E}(K \restriction \K_{{\rm as}, p})$. Furthermore, if the ground state of $K \restriction \K_{{\rm as}, p}$ is unique, then the ground state of $\mathsf{K}_M$ is also unique.
\end{Lemm}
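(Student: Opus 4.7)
The strategy is to exploit the bijection $\mathscr{R}_M \colon \h_M \to \K_{{\rm as},p}$ established in Proposition \ref{PropR_M} and to show that it intertwines the quadratic forms of $\mathsf{K}_M$ and $K \restriction \K_{{\rm as},p}$ up to a universal factor. The variational principle then delivers the identification of ground state energies, the correspondence between ground states, and the transfer of uniqueness.

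First, I would derive a norm identity. Given $\vPhi \in \h_M$, set $\vPsi \deq \mathscr{R}_M(\vPhi) \in \K_{{\rm as},p}$ and expand $\vPhi$ via the formula \eqref{PhiDec2}. The crucial combinatorial input is that $\mathfrak{S}_{N,\mathrm{NT}}^{(M)}$ is a set of non-identity coset representatives for $\mathfrak{S}_{N,\mathrm{T}}^{(M)}$ inside $\mathfrak{S}_N$; its cardinality is therefore $\binom{N}{p}-1$, and each of the spin vectors $s_\pi \eta_{\bs{\sigma}_M}$ with $\pi \in \mathfrak{S}_{N,\mathrm{NT}}^{(M)}$ corresponds to a distinct placement of the $p$ down-spins among $\{1,\dots,N\}$. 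Consequently, the family $\{\eta_{\bs{\sigma}_M}\} \cup \{\,s_\pi \eta_{\bs{\sigma}_M} : \pi \in \mathfrak{S}_{N,\mathrm{NT}}^{(M)}\}$ consists of $\binom{N}{p}$ pairwise distinct standard basis vectors of $\otimes^N \BbbC^2$, hence is orthonormal. Combining this orthogonality with the unitarity of each $S_\pi$ on $\K$ yields
\[
\|\vPhi\|^2 = \binom{N}{p}\,\|\vPsi\|^2.
\]

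Second, I would promote this to an identity of quadratic forms. Invoking Assumption \hyperlink{A1}{(A.1)}, namely $[K,S_\pi]=0$ for every $\pi\in\mathfrak{S}_N$, together with the same orthogonality of the spin factors, the expansion \eqref{PhiDec2} gives
\[
\langle\vPhi,\mathsf{K}_M\vPhi\rangle \;=\; \binom{N}{p}\,\langle\vPsi, K\vPsi\rangle,
\]
with compatible form domains (since each $S_\pi$ is unitary and commutes with $K^{1/2}$). As a by-product, for every $\tau\times\xi \in \mathfrak{S}_{N,\mathrm{T}}^{(M)}$ the operator $S_{\tau\times\xi}$ acts on $\K_{{\rm as},p}$ by the scalar $\mathrm{sgn}(\tau)\mathrm{sgn}(\xi)$, so (A.1) implies that $K$ preserves $\K_{{\rm as},p}$, and the restriction $K\restriction\K_{{\rm as},p}$ is a bona fide self-adjoint operator. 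Dividing the two identities yields the crucial equality of Rayleigh quotients:
\[
\frac{\langle\vPhi,\mathsf{K}_M\vPhi\rangle}{\|\vPhi\|^2} \;=\; \frac{\langle\vPsi,K\vPsi\rangle}{\|\vPsi\|^2}.
\]

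Finally, since $\mathscr{R}_M$ is a bijection between $\h_M$ and $\K_{{\rm as},p}$, taking infima on each side over the respective form domains yields $\mathcal{E}(\mathsf{K}_M) = \mathcal{E}(K\restriction\K_{{\rm as},p})$, and the infimum is attained at $\vPhi$ if and only if it is attained at $\mathscr{R}_M(\vPhi)$ (Corollary \ref{IffMM}(i) guarantees $\mathscr{R}_M(\vPhi)\neq 0$ whenever $\vPhi\neq 0$). The uniqueness assertion then follows because $\mathscr{R}_M$ restricts to a linear bijection between the two ground-state eigenspaces, so a one-dimensional eigenspace on one side corresponds to a one-dimensional eigenspace on the other. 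The main obstacle is not analytic depth but the combinatorial bookkeeping in the first step: verifying that $\mathfrak{S}_{N,\mathrm{NT}}^{(M)}$ produces exactly $\binom{N}{p}-1$ pairwise distinct spin configurations all orthogonal to $\eta_{\bs{\sigma}_M}$; once that counting is in place, the remainder reduces to a clean application of the min--max principle.
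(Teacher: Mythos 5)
Your proof is correct and relies on the same essential ingredients as the paper: the expansion \eqref{PhiDec2}, the pairwise orthogonality of the spin vectors $s_\pi\eta_{\bs{\sigma}_M}$ for $\pi\in\{\mathrm{Id}\}\cup\mathfrak{S}_{N,\mathrm{NT}}^{(M)}$, assumption \hyperlink{A1}{\textbf{(A.1)}}, and the bijectivity of $\mathscr{R}_M$. The paper argues at the eigenvector level (projecting $\vPhi$ onto its $\eta_{\bs{\sigma}_M}$-component via $\langle\eta_{\bs{\sigma}_M}\,|\,S_\pi\eta_{\bs{\sigma}_M}\rangle=0$, and using (A.1) to deduce $K\otimes\one\,\vPhi_1=\mathcal{E}\,\vPhi_1$), while you package the same factorisation as a Rayleigh-quotient identity and invoke the variational principle; these are equivalent presentations of one and the same argument.
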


\begin{proof}
Using \hyperlink{A1}{\bf (A.1)}, we observe that for any $\vPhi \in \D(\mathsf{K}_M)$,
\[
\langle \vPhi \mid \mathsf{K}_M \vPhi \rangle
= \bigl(1 + \sharp\mathfrak{S}_{N, \mathrm{NT}}^{(M)}\bigr)
\bigl\langle \mathscr{R}_M(\vPhi)\, \big\vert\,  K \mathscr{R}_M(\vPhi)\bigr\rangle,
\]
which directly implies $\mathcal{E}(\mathsf{K}_M) = \mathcal{E}(K \restriction \K_{\mathrm{as}, p})$.
Here, for any given set \( S \), the notation \( \sharp S \) denotes its cardinality.

Let us denote by $\vPhi_1$ the second term on the right-hand side of Eq. \eqref{PhiDec2}. For any $\pi \in \mathfrak{S}_{N, \mathrm{NT}}^{(M)}$, we have $\langle \eta_{\bs{\sigma}_M} \mid S_{\pi} \eta_{\bs{\sigma}_M} \rangle = 0$, which shows $\langle \mathscr{R}_M(\vPhi) \otimes \eta_{\bs{\sigma}_M} \mid K \otimes \one \,\vPhi_1 \rangle = 0$. Consequently, if $\vPhi$ is a ground state of $\mathsf{K}_M$, then
\[
\mathcal{E}(\mathsf{K}_M)\|\mathscr{R}_M(\vPhi)\|^2
= \mathcal{E}(\mathsf{K}_M) \langle \mathscr{R}_M(\vPhi) \otimes \eta_{\bs{\sigma}_M} \mid \vPhi \rangle
= \langle \mathscr{R}_M(\vPhi) \mid K\,\mathscr{R}_M(\vPhi)\rangle.
\]
Since the left-hand side equals $\mathcal{E}(K \restriction \K_{\mathrm{as}, p}) \|\mathscr{R}_M(\vPhi)\|^2$, $\mathscr{R}_M(\vPhi)$ must be a ground state of $K \restriction \K_{\mathrm{as}, p}$. Conversely, if $\mathscr{R}_M(\vPhi)$ is a ground state of $K \restriction \K_{\mathrm{as}, p}$, then by again using \hyperlink{A1}{\bf (A.1)}, we obtain $K \otimes \one \,\vPhi_1 = \mathcal{E}(K \restriction \K_{\mathrm{as}, p}) \vPhi_1$. Therefore,
\[
\mathsf{K}_M \vPhi
= \mathcal{E}(K \restriction \K_{\mathrm{as}, p}) \vPhi
= \mathcal{E}(\mathsf{K}_M) \vPhi,
\]
hence $\vPhi$ is a ground state of $\mathsf{K}_M$. The uniqueness claim follows from the bijectivity of $\mathscr{R}_M$ from $\h_M$ onto $\K_{\mathrm{as}, p}$ (Proposition \ref{PropR_M}).
\end{proof}

\begin{Prop}\label{GSEng2}
Let \(K^{(p)} \deq  K \restriction \K_{D^{N,(p)}}\). Then
\(K^{(p)}\) possesses a ground state if and only if \(\mathsf{K}_M\) also does.  Moreover, if the ground state of \(K^{(p)}\) is unique, the same property holds for \(\mathsf{K}_M\). The representative part of any ground state of \(\mathsf{K}_M\) is specified by Eq. \eqref{ExtP}, employing a ground state \(\vPsi_{D^{N,(p)}}\) of \(K^{(p)}\). Furthermore, \(\mathcal{E}(\mathsf{K}_M) = \mathcal{E}(K^{(p)})\).
\end{Prop}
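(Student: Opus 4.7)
The strategy is to combine Lemma \ref{RPartGS} (which already links $\mathsf{K}_M$ with $K\restriction \K_{\mathrm{as},p}$) with a unitary equivalence between $K\restriction \K_{\mathrm{as},p}$ and $K^{(p)}$ implemented by the map $\iota_p$ from Lemma \ref{PsiDecD}. Once this chain of identifications is in place, the statements about ground-state energies, existence, uniqueness, and the explicit form of the representative part follow by transport of structure.

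First I will fix the self-adjoint realizations. By \hyperlink{A2}{\bf (A.2)}, $P^{(p)}$ strongly commutes with $K$, so $\K_{D^{N,(p)}}$ reduces $K$ and $K^{(p)}\deq K\restriction \K_{D^{N,(p)}}$ is a bona fide non-negative self-adjoint operator. On the other side, since the orthogonal projection onto $\K_{\mathrm{as},p}$ is $A_p\otimes A_{N-p}\otimes \one_{\F}$, and \hyperlink{A1}{\bf (A.1)} ensures that $K$ commutes with every $S_{\tau\times\xi}$ for $\tau\in\mathfrak{S}_p$, $\xi\in\mathfrak{S}_{N-p}$, the subspace $\K_{\mathrm{as},p}$ also reduces $K$, so $K\restriction \K_{\mathrm{as},p}$ is self-adjoint.

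Next I will construct the unitary. Set $U \deq (p!\,(N-p)!)^{-1/2}\,\iota_p$. Lemma \ref{PsiDecD} yields that $U:\K_{D^{N,(p)}}\to\K_{\mathrm{as},p}$ is a bijective isometry, hence unitary. For any $\vPsi_{D^{N,(p)}}\in\K_{D^{N,(p)}}\cap\D(K^{(p)})$, \hyperlink{A1}{\bf (A.1)} permits moving $K$ past every $S_{\tau^{-1}\times\xi^{-1}}$ appearing in the sum defining $\iota_p$, giving
\[
(K\restriction \K_{\mathrm{as},p})\,U\vPsi_{D^{N,(p)}} \;=\; U\,K^{(p)}\vPsi_{D^{N,(p)}},
\]
which (after a standard domain check using spectral calculus together with \hyperlink{A2}{\bf (A.2)}) promotes to a full unitary equivalence $U^{*}(K\restriction \K_{\mathrm{as},p})U=K^{(p)}$. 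In particular $\mathcal{E}(K^{(p)})=\mathcal{E}(K\restriction \K_{\mathrm{as},p})$, and $\vPsi_{D^{N,(p)}}$ is a ground state of $K^{(p)}$ iff $U\vPsi_{D^{N,(p)}}$ is a ground state of $K\restriction \K_{\mathrm{as},p}$, with uniqueness preserved under the bijection.

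Finally I will splice the two equivalences. Lemma \ref{RPartGS} asserts that $\mathcal{E}(\mathsf{K}_M)=\mathcal{E}(K\restriction \K_{\mathrm{as},p})$, and that ground states of $\mathsf{K}_M$ correspond via $\mathscr{R}_M$ (a bijection by Proposition \ref{PropR_M}) to ground states of $K\restriction \K_{\mathrm{as},p}$, with uniqueness preserved. Composing with $U$ produces the desired bijection between ground states of $\mathsf{K}_M$ and of $K^{(p)}$: the representative part $\mathscr{R}_M(\vPhi)$ of a ground state $\vPhi$ of $\mathsf{K}_M$ equals $U\vPsi_{D^{N,(p)}}$ for some ground state $\vPsi_{D^{N,(p)}}$ of $K^{(p)}$, which is precisely the expansion \eqref{ExtP}. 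Chaining the two equalities of ground-state energies yields $\mathcal{E}(\mathsf{K}_M)=\mathcal{E}(K^{(p)})$. The only real subtlety is the domain manipulation needed to upgrade the formal intertwining on a dense core into a genuine unitary equivalence of self-adjoint operators; this is handled cleanly by replacing $K$ with its heat semigroup $e^{-\beta K}$, using \hyperlink{A1}{\bf (A.1)} and \hyperlink{A2}{\bf (A.2)} at the semigroup level, and then invoking the uniqueness of Laplace transforms.
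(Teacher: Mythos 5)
Your argument is correct and follows essentially the same path as the paper: both proofs rest on Lemma~\ref{PsiDecD} ($\iota_p$ with the norm factor $p!\,(N-p)!$), assumption \hyperlink{A1}{\bf (A.1)} to commute $K$ past the $S_{\tau^{-1}\times\xi^{-1}}$, and Lemma~\ref{RPartGS} to splice $\K_{\mathrm{as},p}$ to $\h_M$. The only cosmetic difference is that you package the $\iota_p$-intertwining as an explicit unitary equivalence $U^*(K\restriction\K_{\mathrm{as},p})U = K^{(p)}$, while the paper carries out the same computation directly at the level of quadratic forms and eigenvector identities.
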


\begin{proof}
For any $\vPsi \in \D\bigl(K \restriction \K_{\mathrm{as}, p}\bigr)$, employing the representation \eqref{ExtP} and \hyperlink{A1}{\bf (A.1)}, we obtain
$
K \vPsi 
= \sum_{\tau \in \mathfrak{S}_p,\, \xi \in \mathfrak{S}_{N-p}} \mathrm{sgn}(\tau)\,\mathrm{sgn}(\xi)\,S_{\tau^{-1} \times \xi^{-1}}\,K^{(p)} \vPsi_{D^{N,(p)}},
$
which leads to
\[
\langle \vPsi \mid K \vPsi \rangle
= p! \,(N-p)!\,\bigl\langle \vPsi_{D^{N,(p)}} \bigm| K^{(p)} \vPsi_{D^{N,(p)}} \bigr\rangle.
\]
Hence, it follows that 
$
\mathcal{E}\bigl(K \restriction \K_{\mathrm{as}, p}\bigr)
= \mathcal{E}\bigl(K^{(p)}\bigr).
$

Let $\vPsi_{D^{N,(p)}}$ be a ground state of $K^{(p)}$, and define $\vPsi \in \K_{\mathrm{as}, p}$ via Eq.  \eqref{ExtP}. By applying \hyperlink{A1}{\bf (A.1)}, we see 
$
K \vPsi 
= \mathcal{E}\bigl(K^{(p)}\bigr) \,\vPsi 
= \mathcal{E}\bigl(K \restriction \K_{\mathrm{as}, p}\bigr)\,\vPsi,
$
so $\vPsi$ is a ground state of $K \restriction \K_{\mathrm{as}, p}$. Conversely, if $\vPsi$ is a ground state of $K \restriction \K_{\mathrm{as}, p}$, then from  Eq. \eqref{ExtP},
$$
\mathcal{E}\bigl(K \restriction \K_{\mathrm{as}, p}\bigr) \vPsi 
= K \vPsi 
= \sum_{\tau,\,\xi} \mathrm{sgn}(\tau)\,\mathrm{sgn}(\xi)\,S_{\tau^{-1} \times \xi^{-1}}\,K^{(p)} \vPsi_{D^{N,(p)}}.
$$
Therefore,
\[
\mathcal{E}\bigl(K \restriction \K_{\mathrm{as}, p}\bigr)
\|\vPsi_{D^{N,(p)}}\|^2
= \mathcal{E}\bigl(K \restriction \K_{\mathrm{as}, p}\bigr)\,\langle \vPsi_{D^{N,(p)}} \mid \vPsi \rangle
= \left\langle \vPsi_{D^{N,(p)}}\,  \Big|\,  K^{(p)} \vPsi_{D^{N,(p)}} \right\rangle.
\]
Since the left-hand side equals $\mathcal{E}\bigl(K^{(p)}\bigr)\|\vPsi_{D^{N,(p)}}\|^2$, $\vPsi_{D^{N,(p)}}$ must be a ground state of $K^{(p)}$.

The assertion regarding uniqueness follows from the bijectivity of the linear maps  \(\iota_p\) and $\mathscr{R}_M$ in Proposition \ref{PropR_M} and  Lemma \ref{PsiDecD}.
\end{proof}

\section{Proof of Theorem \ref{EnOrd}}\label{PfofThmEnOrd}

\subsection{Outline of the Proof}

We consider the {\bf Schrödinger representation} of $\mathfrak{F}$\footnote{For details on the Schrödinger representation, see, e.g.,  \cite{Lrinczi2020}.}:  
$
\mathfrak{F} = L^2(\mathcal{Q}, d\mu).
$
In this framework, we identify  
\be
\mathfrak{K}_{D^{N, (p)}} = L^2\left(D^{N, (p)} \times \mathcal{Q}, d\bs{x} \times d\mu\right),\label{KSch}
\ee  
where $\mathfrak{K}_{D^{N, (p)}}$ is defined in Eq. \eqref{DefKDNP}. This representation provides a concrete functional-analytic setting in which the properties of the operators can be analyzed effectively. In particular,  by utilizing the Schrödinger representation, we can effectively apply key properties of the heat semigroup, such as positivity preservation and, under appropriate conditions, positivity improvement. These properties play a crucial role in establishing the ordering of energy levels.

The third assumption, which is central to our proof, is stated as follows:

\begin{description}
\item[\hypertarget{A3}{(A. 3)}]  
For every \(p \in \{0, 1, \dots, N\}\), the following hold:
\begin{itemize}
\item \(K^{(p)}\) possesses a ground state. Recall that \(K^{(p)} = K \restriction \K_{D^{N,(p)}}\).
\item For all \(\beta > 0\), \(\mathrm{e}^{-\beta K^{(p)}}\) improves positivity in the Schrödinger representation.\footnote{More precisely, for any non-negative \(\vPsi \in L^2\bigl(D^{N, (p)} \times \mathcal{Q}, d\bs{x} \times d\mu\bigr)\setminus \{0\}\), \(\mathrm{e}^{-\beta K^{(p)}} \vPsi\) is stirctly positive for all \(\beta > 0\).}
\end{itemize}
\end{description}

Now,  consider a lower semi-bounded self-adjoint operator:
\[
\mathsf{K}\deq K\otimes \one_{\otimes^N\BbbC^2}\restriction \h,
\]  
which serves as an abstraction of the Hamiltonian of the electron-phonon interaction system described in Eq.  \eqref{DefHFr}. Utilizing the assumptions, we establish the following theorem, which characterizes the energy ordering of the eigenvalues:

\begin{Thm}\label{AbstEO}
For each $S \in \mathbb{S}_N$, let $E_{\sf K}(S) \deq \inf \mathrm{spec}(\mathsf{K} \restriction \h^S)$. Under the  assumptions \hyperlink{A1}{\bf (A. 1)}, \hyperlink{A2}{\bf (A. 2)}, and \hyperlink{A3}{\bf (A. 3)}, the following holds:
\be
E_{\mathsf{K}}(N/2) > E_{\mathsf{K}}(N/2-1) > \cdots > E_{\mathsf{K}}(1/2)\ \text{or}\  E_{\mathsf{K}}(0). \notag
\ee
Here, the minimum value in the above inequality corresponds to $E_{\mathsf{K}}(0)$ when $N$ is even, and $E_{\mathsf{K}}(1/2)$ when $N$ is odd.
\end{Thm}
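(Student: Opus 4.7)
The plan is to reduce the theorem to a strict monotonicity of the ground state energies $\mathcal{E}(K^{(p)})$ in $p$, and then establish that monotonicity via a Perron--Frobenius / trial-state argument powered by \hyperlink{A3}{(A. 3)}.

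First, Proposition \ref{GSEng2} gives $\mathcal{E}(\mathsf{K}_M) = \mathcal{E}(K^{(p)})$ for $p = N/2 - M$. Because $\mathsf{K} = K\otimes \one_{\otimes^N\BbbC^2}$ commutes with the spin raising and lowering operators $S^{(\pm)}$, for each $S\in \mathbb{S}_N$ the restrictions $\mathsf{K}\restriction \h^S_M$ with $|M|\leq S$ are all unitarily equivalent, so $E_{\mathsf{K}}(S) = \mathcal{E}(\mathsf{K}\restriction \h^S_M)$ whenever $|M|\leq S$. Combined with the decomposition $\h_M = \bigoplus_{S \in \mathbb{S}_N,\, S\geq |M|} \h^S_M$, this yields
\begin{equation*}
\mathcal{E}(K^{(N/2-M)}) \;=\; \mathcal{E}(\mathsf{K}_M) \;=\; \min_{\substack{S\in \mathbb{S}_N\\ S\geq |M|}} E_{\mathsf{K}}(S).
\end{equation*}
I would then set up a descending induction on $S$: the base case $S = N/2$ is immediate since $\h_{N/2} = \h^{N/2}_{N/2}$, so $E_{\mathsf{K}}(N/2) = \mathcal{E}(K^{(0)})$; and the inductive step reduces, via the min formula above, to the key strict inequality
\begin{equation*}
\mathcal{E}(K^{(p+1)}) \;<\; \mathcal{E}(K^{(p)}), \qquad p = 0, 1, \ldots, \lceil N/2 \rceil - 1.
\end{equation*}

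The crux of the proof is establishing this strict monotonicity. Working in the Schr\"odinger representation \eqref{KSch}, \hyperlink{A3}{(A. 3)} provides a unique, strictly positive ground state $\vPsi^{(p)}_0 \in L^2(D^{N,(p)} \times \mathcal{Q})$ of each $K^{(p)}$. My plan is to construct from $\vPsi^{(p)}_0$ a non-zero, non-negative trial vector $\widetilde{\vPsi} \in \K_{D^{N,(p+1)}}$ — by gluing suitably permuted copies of $\vPsi^{(p)}_0$ across the $\binom{N}{p+1}$ fundamental simplices composing $D^{N,(p+1)}$, using the permutation invariance \hyperlink{A1}{(A. 1)} of $K$ — with Rayleigh quotient satisfying $\langle \widetilde{\vPsi}\mid K^{(p+1)}\widetilde{\vPsi}\rangle \leq \mathcal{E}(K^{(p)})\|\widetilde{\vPsi}\|^2$. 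Because $\widetilde{\vPsi}$ will fail to be everywhere strictly positive on $D^{N,(p+1)}\times \mathcal{Q}$ (the glued pieces meet along lower-dimensional faces and the chamber structures of $D^{N,(p)}$ and $D^{N,(p+1)}$ genuinely differ), it cannot be proportional to the unique positive Perron--Frobenius ground state of $K^{(p+1)}$ supplied by \hyperlink{A3}{(A. 3)}; the positivity-improving property of $\mathrm{e}^{-\beta K^{(p+1)}}$ then promotes the weak bound to a strict inequality.

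The hardest part is executing this third step abstractly: one must produce $\widetilde{\vPsi}$ in the form domain of $K^{(p+1)}$ whose quadratic form is computable purely in terms of $\vPsi^{(p)}_0$ — so that the Rayleigh bound collapses to $\mathcal{E}(K^{(p)})$ — and separately argue that $\widetilde{\vPsi}$ fails strict positivity. The first point is driven by \hyperlink{A1}{(A. 1)} together with the block-diagonal structure \hyperlink{A2}{(A. 2)}, which lets the $K^{(p+1)}$-energy of each permuted piece be matched to the $K^{(p)}$-energy of $\vPsi^{(p)}_0$; the second rests on the distinct geometries of the chambers $D^{N,(p)}$ and $D^{N,(p+1)}$. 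Together with the positivity-improvement of \hyperlink{A3}{(A. 3)}, these ingredients combine into the desired strict spectral gap.
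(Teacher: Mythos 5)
Your reduction is sound: under the spin symmetry of $\mathsf{K}$, the multiplet structure gives $\mathcal{E}(\mathsf{K}_M) = \min_{S\ge|M|}E_{\mathsf{K}}(S)$, and together with Proposition~\ref{GSEng2} the theorem does collapse (by descending induction on $S$) to the single claim $\mathcal{E}(K^{(p+1)}) < \mathcal{E}(K^{(p)})$. The paper takes the same reduction, phrased through Proposition~\ref{GSEng3}. Where you part ways with the paper — and where your proposal has a genuine gap — is the construction of the comparison vector. You envision cutting $D^{N,(p+1)}$ into $\binom{N}{p+1}$ ``fundamental simplices'' isometric to $D^{N,(p)}$ and gluing permuted copies of the strictly positive ground state $\vPsi_0^{(p)}$. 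But $D^{N,(p+1)}$ is itself a single chamber; it does not decompose into copies of $D^{N,(p)}$, and indeed the two chambers already have different volumes ($\mathrm{vol}\,D^{N,(p)}/\mathrm{vol}\,D^{N,(p+1)}=(p+1)/(N-p)$), so the envisioned tiling does not exist. Even setting the geometry aside, your argument for why $\widetilde{\vPsi}$ fails strict positivity is not conclusive: a nonnegative gluing of strictly positive pieces would typically vanish only on lower-dimensional seams, which are Lebesgue-null; that does \emph{not} contradict strict positivity in the $L^2$ (a.e.) sense required by the Perron--Frobenius step.

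The paper's device is algebraic rather than geometric, and it sidesteps both difficulties at once. Since $S^{(-)}$ commutes with $\mathsf{K}$, the vector $S^{(-)}\vPhi_{\mathrm{G},M}$ is an exact eigenvector of $\mathsf{K}_{M-1}$ with eigenvalue $\mathcal{E}(\mathsf{K}_M)$ — no Rayleigh-quotient estimate or gluing is needed for the non-strict inequality. Strictness then follows from the explicit formula \eqref{RS-} for $\mathscr{R}_{M-1}(S^{(-)}\vPhi_{\mathrm{G},M})$: it is $\mathscr{R}_M(\vPhi_{\mathrm{G},M}) - \sum_{j=1}^p S_{(j,p+1)}\mathscr{R}_M(\vPhi_{\mathrm{G},M})$, which genuinely changes sign on a positive-measure subset of $D^{N,(p+1)}$ (not merely on a null set), and hence cannot be proportional to the strictly positive Perron--Frobenius ground state of $K^{(p+1)}$. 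This use of the spin lowering operator, together with the representative-part calculus of Corollary~\ref{IffMM}, is the ingredient your proposal is missing; without it, the ``gluing'' step cannot be completed as described.
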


The proof of Theorem \ref{AbstEO} is provided in Subsection \ref{PfofThmAbstEO}. Building on this result, we proceed to verify the implications for the specific case of Theorem \ref{EnOrd}.

\subsubsection*{Proof of Theorem \ref{EnOrd}, given Theorem \ref{AbstEO}}

To prove Theorem \ref{EnOrd}, we employ Theorem \ref{AbstEO} with \(K = H_{\rm F}\), the Hamiltonian in question. It suffices to verify that the assumptions \hyperlink{A1}{\bf (A.1)}, \hyperlink{A2}{\bf (A.2)}, and \hyperlink{A3}{\bf (A.3)} are fulfilled. These verifications proceed as follows:

First, \(H_{\rm F}\) trivially satisfies \hyperlink{A1}{\bf (A.1)} by virtue of its intrinsic symmetry properties. Next, \hyperlink{A2}{\bf (A.2)} is clearly satisfied. Finally, \hyperlink{A3}{\bf (A.3)} is established in Section \ref{PfExGs}: see Theorem \ref{Hyp2} for the existence of the ground state of \(K^{(p)}= H_{\rm F} \restriction \K_{D^{N,(p)}}\), and refer to Proposition \ref{PIEx} for the fact that \(\exp(-\beta K^{(p)})\) is positivity-improving.
\qed

\subsection{Proof of Theorem \ref{AbstEO}}\label{PfofThmAbstEO}

To prove Theorem \ref{AbstEO}, we begin by considering linearly independent vectors $\vphi_1, \dots, \vphi_N$ in $L^2(D)$. The anti-symmetric tensor product $\vphi_1 \wedge \cdots \wedge \vphi_p$ can be represented using the Slater determinant:
\be
(\vphi_1 \wedge \cdots \wedge \vphi_p)(x_1, \dots, x_p) = 
\begin{vmatrix}
\vphi_1(x_1) & \vphi_1(x_2) & \cdots & \vphi_1(x_p) \\
\vphi_2(x_1) & \vphi_2(x_2) & \cdots & \vphi_2(x_p) \\
\vdots & \vdots & & \vdots \\
\vphi_p(x_1) & \vphi_p(x_2) & \cdots & \vphi_p(x_p)
\end{vmatrix}. \notag
\ee

Now, take the vector $\vPhi \in \h_M$ whose representative part is given by
\begin{align}
&\mathscr{R}_M(\vPhi)(x_1, \dots, x_N)\no  
=& 
\left\{
\begin{vmatrix}
\vphi_1(x_1) & \vphi_1(x_2) & \cdots & \vphi_1(x_p) \\
\vphi_2(x_1) & \vphi_2(x_2) & \cdots & \vphi_2(x_p) \\
\vdots & \vdots & & \vdots \\
\vphi_p(x_1) & \vphi_p(x_2) & \cdots & \vphi_p(x_p)
\end{vmatrix} 
\times 
\begin{vmatrix}
\vphi_1(x_{p+1}) & \vphi_1(x_{p+2}) & \cdots & \vphi_1(x_N) \\
\vphi_2(x_{p+1}) & \vphi_2(x_{p+2}) & \cdots & \vphi_2(x_N) \\
\vdots & \vdots & & \vdots \\
\vphi_{N-p}(x_{p+1}) & \vphi_{N-p}(x_{p+2}) & \cdots & \vphi_{N-p}(x_N)
\end{vmatrix}
\right\} \varOmega. \label{SlVec}
\end{align}
Here, $\vOm$ is the Fock vacuum in $\F$.

\begin{Lemm}\label{SasekiProp}
Let $\vPhi \in \h_M$ be the vector whose representative part is given by  Eq. \eqref{SlVec}. Then $\vPhi$ belongs to $\h_M^M$. In particular, for the vector $\vPhi$ satisfying
\be
\mathscr{R}_M(\vPhi)(x_1, \dots, x_N) = \left\{ \prod_{1 \le i < j \le p} (x_j - x_i) \times \prod_{p+1 \le i < j \le N} (x_j - x_i) \right\} \varOmega, \label{Saseki}
\ee
we have that if \(M \ge 0\), then \(\vPhi \in \h_M^M\); whereas if \(M < 0\), then \(\vPhi \in \h_{M}^{-M}\). Moreover, in the Schrödinger representation, $\mathscr{R}_M(\vPhi)$ is strictly positive on $D^{N, (p)} \times \mathcal{Q}$.
\end{Lemm}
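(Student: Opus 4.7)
The plan is to invoke Corollary~\ref{IffMM}(iii). Assume first $M \ge 0$, so $p \le N-p$; then proving $\vPhi \in \h_M^M$ reduces, via that corollary, to verifying the determinantal identity
\[
D_1(x_1, \dots, x_p)\, D_2(x_{p+1}, \dots, x_N) = \sum_{j=p+1}^N S_{(p, j)} \bigl( D_1 D_2 \bigr)(x_1, \dots, x_N),
\]
where $D_1 = \det[\vphi_i(x_j)]_{i, j = 1}^{p}$, $D_2 = \det[\vphi_i(x_{p+j})]_{i, j = 1}^{N-p}$, and $\varOmega$ factors out because it does not depend on $\bs{x}$.

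The heart of the argument is an auxiliary identity: for each $k \in \{1, \dots, N-p\}$,
\[
\vphi_k(x_p)\, D_2(x_{p+1}, \dots, x_N) = \sum_{j=1}^{N-p} \vphi_k(x_{p+j})\, \bigl( S_{(p, p+j)} D_2 \bigr)(x_{p+1}, \dots, x_N).
\]
I would prove this by augmenting the matrix defining $D_2$ with a leading column headed by $x_p$ and a top row $[\vphi_k(x_p), \vphi_k(x_{p+1}), \dots, \vphi_k(x_N)]$, producing an $(N-p+1)\times(N-p+1)$ matrix. Because $k \le N-p$, the new row coincides with an existing one, so the determinant vanishes; expanding along the new row and combining each cofactor sign with the sign picked up when rewriting the resulting minor as an $S_{(p, p+j)}$-image of $D_2$ yields the identity. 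Since every $\vphi_k$ appearing in $D_1$ has index $k \le p \le N-p$, the auxiliary identity may be applied termwise to the expansion of $D_1$ along its $x_p$-column, and rearranging the sums produces the desired determinantal identity. The Vandermonde case Eq.~\eqref{Saseki} follows immediately upon choosing $\vphi_i(x) = x^{i-1}$.

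For $M < 0$ the roles of the two blocks are swapped ($p > N-p$), and the analogous augmented-determinant argument applied to $D_1$ yields the dual identity $D_1 D_2 = \sum_{j=1}^p S_{(j, p+1)}(D_1 D_2)$; by the second clause of Corollary~\ref{IffMM}(iii), this gives $\vPhi \in \h_M^{-M}$. Strict positivity of $\mathscr{R}_M(\vPhi)$ on $D^{N,(p)} \times \mathcal{Q}$ is then immediate: on $D^{N, (p)}$ one has $x_1 < \dots < x_p$ and $x_{p+1} < \dots < x_N$, so both Vandermonde products are strictly positive there, while under the Schrödinger identification Eq.~\eqref{KSch} the Fock vacuum $\varOmega$ is represented by a strictly positive function on $\mathcal{Q}$.

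The principal technical obstacle is the sign bookkeeping in the auxiliary identity: the cofactor signs from the Laplace expansion must cancel precisely with the signs produced by the column moves that convert each minor into an $S_{(p, p+j)}$-image of $D_2$, so that the net coefficient of each term on the right-hand side is $+1$. Once that cancellation is settled, the remainder of the proof reduces to routine symmetric-group manipulations.
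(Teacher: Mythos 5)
Your argument is correct, and in substance it follows the same route as the paper's own proof: both reduce the claim to the vanishing condition of Corollary~\ref{IffMM}(iii), and both rest on the single observation that, when $p \le N-p$, every $\vphi_k$ occurring in the first Slater block already occurs in the second, so the relevant (partial) antisymmetrization vanishes because of a repeated row. The paper phrases this very tersely in the language of wedge products: inserting $\vphi_p$ into $\vphi_1 \wedge \cdots \wedge \vphi_{N-p}$ (or, symmetrically, $\vphi_j$ into $\vphi_1 \wedge \cdots \wedge \vphi_p$) produces a repeated factor and hence zero, with only the ``swap with itself'' term surviving; all sign bookkeeping is left implicit. Your augmented-determinant device --- forming a bordered $(N-p+1) \times (N-p+1)$ matrix with a duplicated $\vphi_k$-row, noting it vanishes, and Laplace-expanding along the new row while matching cofactor signs against the column transpositions that turn each minor into an $S_{(p,p+j)}$-image of $D_2$ --- is exactly the precise mechanism underlying that shorthand, and you then propagate it through the cofactor expansion of $D_1$ along the $x_p$-column. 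The $M<0$ dualization and the Vandermonde/strict-positivity clauses are handled as in the paper. So: correct, same idea, with your version carrying out explicitly the combinatorial bookkeeping the paper leaves to the reader.
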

\begin{proof}
First, consider the case $M\ge 0$, which implies that $p\le N-p$.
Observe that $\mathscr{R}_M(\vPhi)$ can be written as $\{ (\vphi_1 \wedge \cdots \wedge \vphi_p) \otimes (\vphi_1 \wedge \cdots \wedge \vphi_{N - p}) \} \otimes \varOmega$. Consider swapping $\vphi_p$ from the first group $\vphi_1 \wedge \cdots \wedge \vphi_p$ with $\vphi_j$ from the second group $\vphi_1 \wedge \cdots \wedge \vphi_{N - p}$ for $j = 1, \dots, N - p$. The resulting $S_{(p, j)} \mathscr{R}_M(\vPhi)$ vanishes when $j \neq p$, and when $j = p$, we have $S_{(p, p)} \mathscr{R}_M(\vPhi) = \mathscr{R}_M(\vPhi)$. Therefore, by Corollary \ref{IffMM}, it follows that $\vPhi \in \h_M^M$.

Next, consider the case \(M < 0\), that is, when \(p > N - p\). By an analogous argument, we conclude that \(\vPhi \in \h_M^{-M}\).

Furthermore, by choosing $\vphi_1(x) \equiv 1$, $\vphi_2(x) = x$, $\vphi_3(x) = x^2$, up to $\vphi_{N - p}(x) = x^{N - p - 1}$, we can show that the right-hand side of Eq. \eqref{SlVec} simplifies to the product $\prod_{1 \leq i < j \leq p} (x_j - x_i) \times \prod_{p + 1 \leq i < j \leq N} (x_j - x_i)$. Since in the Schrödinger representation, $\varOmega$ is the constant function equal to $1$, it follows from Eq. \eqref{Saseki} that $\mathscr{R}_M(\vPhi)$ is strictly positive on $D^{N, (p)} \times \mathcal{Q}$. 
\end{proof}

\begin{Prop}\label{GSEng3}
Under the assumptions \hyperlink{A1}{\bf (A. 1)}, \hyperlink{A2}{\bf (A. 2)}, and \hyperlink{A3}{\bf (A. 3)}, the equality 
$
E(|M|) = \mathcal{E}(\mathsf{K}_M) = \mathcal{E}(K^{(p)})
$
holds.
\end{Prop}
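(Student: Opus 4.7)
The plan is to leverage the two equalities in the proposition: the second one, $\mathcal{E}(\mathsf{K}_M) = \mathcal{E}(K^{(p)})$, follows immediately by chaining Lemma \ref{RPartGS} with Proposition \ref{GSEng2}, so the real content lies in the first equality $E(|M|) = \mathcal{E}(\mathsf{K}_M)$. My strategy will be to show that the (necessarily unique) ground state $\vPhi_0$ of $\mathsf{K}_M$ in fact lies in the smallest admissible total-spin sector $\h_M^{|M|}$, after which spin-rotation invariance of $\mathsf{K}$ will finish the job.

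First, I would invoke \hyperlink{A3}{\bf (A. 3)}: since $\mathrm{e}^{-\beta K^{(p)}}$ is positivity improving in the Schrödinger representation \eqref{KSch}, a standard Perron--Frobenius argument yields a unique strictly positive ground state $\vPsi_{D^{N,(p)},0}$ of $K^{(p)}$. Transporting this through Proposition \ref{GSEng2} and Lemma \ref{RPartGS}, the ground state $\vPhi_0$ of $\mathsf{K}_M$ is unique and its representative part $\mathscr{R}_M(\vPhi_0)$ restricts on $D^{N,(p)} \times \mathcal{Q}$ to a strictly positive function.

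Next, because $\mathsf{K}$ acts trivially on the spin factor it commutes with $\bs{S}^2$, and uniqueness forces $\vPhi_0 \in \h_M^S$ for some $S \ge |M|$. To pin down $S = |M|$, I will test $\vPhi_0$ against the Vandermonde-type trial vector $\vPhi_{\mathrm{tr}} \in \h_M^{|M|}$ from Lemma \ref{SasekiProp}, whose representative part is also strictly positive on $D^{N,(p)} \times \mathcal{Q}$. Polarizing the norm identity implicit in the proof of Lemma \ref{RPartGS} gives
\[
\langle \vPhi \mid \vPhi' \rangle = \bigl(1 + \sharp \mathfrak{S}_{N, \mathrm{NT}}^{(M)}\bigr)\,\langle \mathscr{R}_M(\vPhi) \mid \mathscr{R}_M(\vPhi')\rangle, \qquad \vPhi, \vPhi' \in \h_M,
\]
and reducing the right-hand $\K$-inner product via Lemma \ref{PsiDecD} to an integral over $D^{N,(p)} \times \mathcal{Q}$, the strict positivity of both integrands forces $\langle \vPhi_0 \mid \vPhi_{\mathrm{tr}} \rangle > 0$. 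Orthogonality of distinct total-spin sectors then imposes $S = |M|$, hence $\vPhi_0 \in \h_M^{|M|}$ and $\mathcal{E}(\mathsf{K} \restriction \h_M^{|M|}) = \mathcal{E}(\mathsf{K}_M)$. Finally, since $\mathsf{K}$ commutes with $S^{(\pm)}$, the restrictions of $\mathsf{K}$ to the subspaces $\h_{M'}^{|M|}$ are unitarily equivalent for all $|M'| \le |M|$, so $E(|M|) = \mathcal{E}(\mathsf{K} \restriction \h^{|M|}) = \mathcal{E}(\mathsf{K} \restriction \h_M^{|M|})$, closing the argument.

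The main obstacle I anticipate is the clean derivation of the inner-product identity on $\h_M$ through $\mathscr{R}_M$: the cross terms generated by the spin rearrangements indexed by $\mathfrak{S}_{N, \mathrm{NT}}^{(M)}$ must collapse neatly, which rests on the orthogonality relations $\langle s_\pi \eta_{\bs{\sigma}_M} \mid \eta_{\bs{\sigma}_M}\rangle = 0$ for $\pi \in \mathfrak{S}_{N, \mathrm{NT}}^{(M)}$ together with the group-theoretic decomposition \eqref{Prod}. Once that identity is secured, the positivity comparison is essentially mechanical, and the remaining spin-rotation argument only uses that $\mathsf{K}$ is a multiple of the identity on the spin factor.
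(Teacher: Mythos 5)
Your proposal is correct and follows essentially the same route as the paper's proof: invoke \hyperlink{A3}{\bf (A.3)} and Perron--Frobenius to obtain a unique, strictly positive ground state of \(K^{(p)}\), transport it to \(\mathsf{K}_M\), pair it against the Vandermonde trial vector of Lemma~\ref{SasekiProp} to force the ground state into \(\h_M^{|M|}\), and combine with Proposition~\ref{GSEng2} and spin-rotation invariance. You merely spell out some steps the paper leaves implicit (the polarization of the norm identity and the ``uniqueness implies definite total spin'' argument), but the underlying mechanism is identical.
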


\begin{proof}
Owing to the positivity-improving assumption \hyperlink{A3}{\bf (A. 3)}, \(K^{(p)}\) possesses a unique ground state, denoted by \(\vPhi_{\mathrm{G}, D^{N,(p)}}\), which is strictly positive. By employing this ground state, we construct the ground state \(\vPhi_{\mathrm{G}}\) of \(\mathsf{K}_M\) via Eqs.  \eqref{PhiDec2} and \eqref{ExtP}. 

From this point, we consider separately the cases \(M \ge 0\) and \(M < 0\). First, in the case \(M \ge 0\), let \(\vPhi \in \h_M\) be any vector satisfying Eq. \eqref{Saseki}. Since \(\langle \vPhi_{\mathrm{G}} \,|\, \vPhi \rangle > 0\), Lemma \ref{SasekiProp} guarantees that \(\vPhi_{\mathrm{G}} \in \h_M^M\). Moreover, Proposition \ref{GSEng2} yields
$
\mathcal{E}\bigl(K^{(p)}\bigr) = \mathcal{E}\bigl(\mathsf{K}_M\bigr).
$
Combining these observations, we deduce that
$
E(M) = \mathcal{E}\bigl(\mathsf{K}_M\bigr) = \mathcal{E}(K^{(p)}).
$
In the case \(M < 0\), an analogous argument shows that \(\vPhi_{\mathrm{G}} \in \h_{M}^{-M}\). Consequently, we obtain
$
E(|M|) = \mathcal{E}\bigl(\mathsf{K}_M\bigr) = \mathcal{E}\bigl(K^{(p)}\bigr).
$
\end{proof}

\subsubsection*{Completion of the Proof of Theorem \ref{AbstEO}}
Fix an arbitrary \(M \in \mathrm{spec}(S^{(3)})\) satisfying \(M \ge 0\).
By Proposition \ref{GSEng2} and  the assumption \hyperlink{A3}{\bf (A. 3)}, \(\mathsf{K}_{M}\) admits a unique ground state \(\vPhi_{{\rm G}, M}\).  Applying \(S^{(-)}\) to \(\vPhi_{{\rm G}, M}\) gives \(S^{(-)} \vPhi_{{\rm G}, M} \in \h_{M - 1}\).  Since \(S^{(-)}\) commutes with \(\mathsf{K}\), we obtain
\[
\mathsf{K}_{M - 1} \, S^{(-)} \vPhi_{{\rm G}, M} \;=\; \mathcal{E}\bigl(\mathsf{K}_M\bigr) \, S^{(-)} \vPhi_{{\rm G}, M},
\]
implying \(\mathcal{E}\bigl(\mathsf{K}_{M - 1}\bigr) \leq \mathcal{E}\bigl(\mathsf{K}_M\bigr)\).  We next show that the equality in this inequality cannot hold.  By Proposition \ref{GSEng2} and  \hyperlink{A3}{\bf (A. 3)}, \(\mathsf{K}_{M-1}\) has a unique ground state, denoted \(\vPhi_{{\rm G}, M-1}\).  It suffices to prove \(\vPhi_{{\rm G}, M-1}\neq S^{(-)} \vPhi_{{\rm G}, M} /\|S^{(-)} \vPhi_{{\rm G}, M}\|\).  By Lemma \ref{RPartGS}, this reduces to showing
\begin{align}
\frac{\mathscr{R}_{M-1}\bigl(\vPhi_{{\rm G}, M-1}\bigr)}
     {\bigl\|\mathscr{R}_{M-1}\bigl(\vPhi_{{\rm G}, M-1}\bigr)\bigr\|}
\;\neq\;
\frac{\mathscr{R}_{M-1}\bigl(S^{(-)} \vPhi_{{\rm G}, M}\bigr)}
     {\bigl\|\mathscr{R}_{M-1}\bigl(S^{(-)} \vPhi_{{\rm G}, M}\bigr)\bigr\|}. \label{NeqG}
\end{align}
Recall that \(\mathscr{R}_{M-1}\bigl(S^{(-)} \vPhi_{{\rm G}, M}\bigr)\) satisfies Eq. \eqref{RS-}.  Moreover, by  \hyperlink{A3}{\bf (A. 3)} and Proposition \ref{GSEng2}, \(\mathscr{R}_{M-1}\bigl(\vPhi_{{\rm G}, M-1}\bigr)\) is strictly positive on \(D^{N, (p+1)}\), whereas the right-hand side of  Eq. \eqref{RS-} shows that \(\mathscr{R}_{M-1}\bigl(S^{(-)} \vPhi_{{\rm G}, M}\bigr)\) does not share this property.  Hence, \eqref{NeqG} follows, establishing that \(\mathcal{E}\bigl(\mathsf{K}_{M - 1}\bigr) < \mathcal{E}\bigl(\mathsf{K}_M\bigr)\).

Combining this with Proposition \ref{GSEng3} then shows that \(E(M - 1) < E(M)\).
\qed

\section{Path Integral Representation of $\ex^{-\beta H_{\rm F} \restriction \mathfrak{K}_{D^{N, (p)}}}$}\label{FKIF}

 \subsection{Main Theorem of Section \ref{FKIF}}

Recall that $\mathfrak{K}_{D^{N, (p)}}$ is defined in Eq. \eqref{DefKDNP}. Furthermore, in applying the results and discussions from Sections \ref{PfofThmEnOrd}, we note that the operator $K$ corresponds to $H_{\rm F}$. Unless there is a risk of confusion, we shall continue to identify $H_{\rm F}$ with $H_{\rm F} \otimes \mathbbm{1}$, as before.

By Proposition \ref{GSEng3}, we have:

\begin{Lemm}
For every $M\in \{-N/2, -N/2+1, \dots, N/2\}$, one has
$E(|M|) = \mathcal{E}(H_{\rm F} \restriction \mathfrak{K}_{D^{N, (p)}})$.
\end{Lemm}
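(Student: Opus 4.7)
The statement is a direct specialization of Proposition \ref{GSEng3} to the concrete Hamiltonian $K = H_{\rm F}$, under which $K^{(p)}$ becomes precisely $H_{\rm F} \restriction \mathfrak{K}_{D^{N,(p)}}$. The plan is therefore to verify that the three hypotheses \hyperlink{A1}{\bf (A. 1)}, \hyperlink{A2}{\bf (A. 2)}, and \hyperlink{A3}{\bf (A. 3)} are satisfied by $H_{\rm F}$, and then simply invoke the proposition, which yields the chain $E(|M|) = \mathcal{E}(\mathsf{K}_M) = \mathcal{E}(K^{(p)})$ — exactly the claim.

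The verifications of \hyperlink{A1}{\bf (A. 1)} and \hyperlink{A2}{\bf (A. 2)} are elementary. For \hyperlink{A1}{\bf (A. 1)}, the cutoff Hamiltonian $H_\vepsilon$ is manifestly invariant under any coordinate permutation: each one-body summand $-\tfrac{1}{2}\Delta_{\mathrm{D},j} + V(x_j) + \phi_\vepsilon(x_j)$ depends on a single electron coordinate, the pair term is symmetric in $(x_i - x_j)$, and $N_{\rm ph}$ is independent of the electron variables. Hence $[S_\pi, H_\vepsilon] = 0$ on an invariant core for every $\vepsilon > 0$, and the strong resolvent convergence $H_\vepsilon \to H_{\rm F}$ as $\vepsilon \to +0$ propagates this commutation to $H_{\rm F}$. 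Hypothesis \hyperlink{A2}{\bf (A. 2)} follows because $\mathfrak{K}_{D^{N,(p)}}$ is cut out inside $\mathfrak{K}$ by an ordering (hence permutation-based) condition on each of the two clusters $\{1, \dots, p\}$ and $\{p+1, \dots, N\}$; this condition is preserved by any permutation-invariant semigroup, so $P^{(p)}$ commutes strongly with $\mathrm{e}^{-\beta H_{\rm F}}$ for every $\beta > 0$.

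The only substantive ingredient is \hyperlink{A3}{\bf (A. 3)}, namely the existence of a ground state for $H_{\rm F} \restriction \mathfrak{K}_{D^{N,(p)}}$ together with the positivity-improving property of its heat semigroup in the Schrödinger representation \eqref{KSch}. This is not a task undertaken within this lemma: it is the content of the forthcoming Theorem \ref{Hyp2} (existence of the ground state) and Proposition \ref{PIEx} (positivity improvement), both of which rely on the Feynman--Kac formula constructed throughout Section \ref{FKIF}. Granting those results, Proposition \ref{GSEng3} applies verbatim with $K = H_{\rm F}$ and delivers the asserted equality. The genuine obstacle is thus the establishment of \hyperlink{A3}{\bf (A. 3)} via functional integration — a step whose difficulty stems from the ultraviolet singularity of the Fröhlich interaction at $\vepsilon = 0$ — but this work lies outside the present lemma, which is itself only a bookkeeping corollary of the abstract framework of Section \ref{PfofThmEnOrd}.
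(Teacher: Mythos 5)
Your proposal matches the paper's proof in its essential content. The paper's own proof of this lemma is literally one line --- "By Proposition \ref{GSEng3}, we have:" --- followed by the statement. As you observe, the lemma is nothing more than Proposition \ref{GSEng3} specialized to $K = H_{\rm F}$, under which $K^{(p)} = H_{\rm F}\restriction\mathfrak{K}_{D^{N,(p)}}$, and the hypotheses \hyperlink{A1}{\bf (A.\,1)}--\hyperlink{A3}{\bf (A.\,3)} have already been discharged (or forward-referenced) in the paragraph ``Proof of Theorem \ref{EnOrd}, given Theorem \ref{AbstEO}'' in Section~\ref{PfofThmEnOrd}: \hyperlink{A1}{\bf (A.\,1)} and \hyperlink{A2}{\bf (A.\,2)} are declared trivial, and \hyperlink{A3}{\bf (A.\,3)} is handed off to Theorem~\ref{Hyp2} and Proposition~\ref{PIEx} in Section~\ref{PfExGs}. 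You identify the same structure, including the forward dependence on the Feynman--Kac material, so the route is the paper's route.

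One imprecision is worth flagging. Your justification of \hyperlink{A2}{\bf (A.\,2)} --- that the wedge $\mathfrak{K}_{D^{N,(p)}}$ is ``cut out by an ordering (hence permutation-based) condition'' which is ``preserved by any permutation-invariant semigroup'' --- is a non sequitur as written. The ordering condition $x_1 < \cdots < x_p$ is not itself permutation-invariant, and a permutation-invariant heat semigroup on $D^N$ does not preserve functions supported in $D^{N,(p)}$ in general. What permutation invariance preserves is the antisymmetric sector $\mathfrak{K}_{\mathrm{as},p}$; the passage to the wedge via $\iota_p$ further exploits the one-dimensional fact that antisymmetry forces vanishing on the coincidence hyperplanes $\{x_i = x_j\}$, which are precisely the internal boundary components of $D^{N,(p)}$, so that the wedges become dynamically disconnected under Dirichlet boundary conditions. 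This is the Lieb--Mattis/Girardeau mechanism, not a consequence of permutation invariance alone. Since the paper itself dismisses \hyperlink{A2}{\bf (A.\,2)} with ``clearly satisfied,'' this is a loose gloss rather than a gap in the overall argument, but the stated reason is not the correct one.
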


Given the unique properties of this one-dimensional system, to investigate the characteristics of $E(|M|)$, it suffices to analyze $H_{\rm F}$ restricted to $\mathfrak{K}_{D^{N, (p)}}$. Notably,  the analysis of $H_{\rm F} \restriction \mathfrak{K}_{D^{N, (p)}}$ does not require consideration of Fermi statistics. This simplification allows us to examine the ground state energy using a functional integral representation.

In the remainder of this paper, we shall denote the restrictions of $H_{\vepsilon}$ and $H_{\rm F}$ to $\mathfrak{K}_{D^{N, (p)}}$ by the same symbols.

For $\vepsilon > 0$, the following holds:
\begin{align}
\left\langle f \otimes \vOm\,  \Big|\,  \ex^{-\beta H_{\vepsilon}} g \otimes \vOm \right\rangle
= \int_{D^{N, (p)}} d\bs{x} \, \Ex_{\bs{x}} \left[ \mathbbm{1}_{\left\{\tau_p>\beta \right\}}   \,
f(\bs{x}_0)^* g(\bs{x}_{\beta}) \, \ex^{S_{\vepsilon}} \right] \quad (f, g \in L^2(D^{N, (p)})). \label{FKN1}
\end{align}
Here, the action $S_{\vepsilon} \deq S_{\rm el} + S_{{\rm eff}, \vepsilon}$ is defined as follows:
\begin{align}
S_{\rm el} &\deq -\int_0^{\beta} ds\, U(\bs{x}_s), \quad
S_{{\rm eff}, \vepsilon} \deq  \int_0^{\beta} \int_0^{\beta} ds\, dt\, \ex^{-|s - t|} W_{{\rm eff}, \vepsilon}(\bs{x}_s - \bs{x}_t), \label{Act}
\end{align}
with
\be
W_{{\rm eff}, \vepsilon}(\bs{x}) \deq 
\sum_{i, j = 1}^N w(x_i - x_j), \quad
w(x) \deq  \frac{g_L}{2} \sum_{k \in D^*} \ex^{-2\vepsilon k^2} \ex^{\im k x},
\quad g_L \deq  \frac{\sqrt{2}}{L} \alpha;  \label{W}\notag
\ee
In this expression, $(\bs{x}_{s})_s \in C_{\bs{x}}([0, \beta]; \BbbR^N)$ represents Brownian motion starting at $\bs{x}$ at time $s = 0$, and $\Ex_{\bs{x}}$ denotes the expectation with respect to the associated Wiener measure (see, for instance, \cite{Lrinczi2020} for details). The space $C_{\bs{x}}([0, \beta]; \BbbR^N)$ consists of $\BbbR^N$-valued continuous paths $(\bs{x}_{s})_s$ satisfying $\bs{x}_0 = \bs{x}$. Moreover,
the first entry time of $(\bs{x}_s)_s$ into 
${\mathbb R}^N\setminus D^{N, (p)}$ is denoted by 
$\tau_p\left((\bs{x}_s)_s\right)\deq \inf \{s\geq0\, :\,  \bs{x}_s\in {\mathbb R}^N\setminus D^{N, (p)}\}$.

Let us consider the limit $\vepsilon \to +0$ in Eq. \eqref{FKN1}. When $L$ is sufficiently large and using the continuum approximation, we have $\lim_{\vepsilon \to +0} W_{{\rm eff}, \vepsilon}(\bs{x}) \approx \delta(\bs{x})$, where $\delta(\bs{x})$ denotes the Dirac delta function. Substituting this into Eq. \eqref{FKN1}, it becomes unclear whether the integral on the right-hand side is mathematically well-defined. Therefore, a meticulous examination is required to determine whether $S_{{\rm eff}, \vepsilon}$ converges as $\vepsilon \to +0$.

\begin{Thm}\label{PathIntF}
There exists a random variable $S_{{\rm eff}, 0}$ such that the following statements hold:
\begin{itemize}
\item[\rm (i)] For every path $(\bs{x}_{s})_s \in C_{\bs{x}}([0, \beta]; \BbbR^N)$, we have $S_{{\rm eff}, 0} \geq 0$.
\item[\rm (ii)] For every path $(\bs{x}_{s})_s \in C_{\bs{x}}([0, \beta]; \BbbR^N)$, the functional $S_{{\rm eff}, 0}$ is monotonically increasing with respect to $\alpha$.
\item[\rm (iii)] Let $S_0 \deq  S_{\rm el} + S_{{\rm eff}, 0}$. Then,
\be
\lim_{\vepsilon \to +0} \Ex_{\bs{x}} \left[ \mathbbm{1}_{\left\{\tau_p>\beta \right\}}  \ex^{S_{\vepsilon}} \right] = \Ex_{\bs{x}} \left[ \mathbbm{1}_{\left\{\tau_p>\beta \right\}} \ex^{S_0} \right]. \notag
\ee
\item[\rm (iv)] For all $f, g \in L^2(D^{N,(p)})$, we have
\begin{align}
\left\langle f \otimes \vOm\,  \Big|\,  \ex^{-\beta H_{\rm F}} g \otimes \vOm \right\rangle
= \int_{D^{N,(p)}} d\bs{x} \, \Ex_{\bs{x}} \left[ \mathbbm{1}_{\left\{\tau_p>\beta \right\}}   f(\bs{x}_0)^* g(\bs{x}_\beta) \, \ex^{S_0} \right]. \label{FKN2}
\end{align}
\end{itemize}
\end{Thm}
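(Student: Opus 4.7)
The approach is to take formula \eqref{FKN1} --- valid for each \(\vepsilon>0\) by the standard Trotter product construction combined with Gaussian integration over each bosonic mode against \(\vOm\), using that the one-mode propagator of \(N_{\rm ph}\) is \(e^{-|s-t|}\) (see, e.g., \cite{Lrinczi2020}) --- and to pass to the limit \(\vepsilon\searrow 0\) on each side. On the right-hand side I construct \(S_{{\rm eff},0}\) as an explicit monotone limit, while on the left-hand side I invoke the Trotter--Kato theorem together with the strong resolvent convergence \(H_\vepsilon \to H_{\rm F}\).

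The cornerstone is the Fourier representation
\[
S_{{\rm eff},\vepsilon} \;=\; \frac{g_L}{2}\sum_{k\in D^*} e^{-2\vepsilon k^2}\int_0^\beta\!\int_0^\beta e^{-|s-t|}\,\chi_k(\bs{x}_s)\,\overline{\chi_k(\bs{x}_t)}\,ds\,dt,\qquad \chi_k(\bs{y})\deq \sum_{i=1}^N e^{\im k y_i},
\]
obtained by expanding \(w\) in its defining Fourier series. Two features are crucial: the kernel \(e^{-|s-t|}\) on \([0,\beta]^2\) is positive definite, so each \(k\)-summand is non-negative along every path; and the prefactor \(e^{-2\vepsilon k^2}\) increases monotonically to \(1\) as \(\vepsilon\searrow 0\). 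Accordingly, I define
\[
S_{{\rm eff},0} \,\deq\, \lim_{\vepsilon\searrow 0} S_{{\rm eff},\vepsilon} \;\in\; [0,\infty].
\]
Assertion (i) is then immediate, and assertion (ii) follows because \(g_L=\sqrt{2}\alpha/L\) enters \(S_{{\rm eff},\vepsilon}\) as a strictly positive linear prefactor, a property preserved by the monotone limit.

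For (iii), the integrand \(\mathbbm{1}_{\{\tau_p>\beta\}}e^{S_\vepsilon} = \mathbbm{1}_{\{\tau_p>\beta\}}\,e^{S_{\rm el}}\,e^{S_{{\rm eff},\vepsilon}}\) is non-negative and monotonically non-decreasing as \(\vepsilon\searrow 0\), so the monotone convergence theorem gives \(\Ex_{\bs{x}}[\mathbbm{1}_{\{\tau_p>\beta\}}e^{S_\vepsilon}] \to \Ex_{\bs{x}}[\mathbbm{1}_{\{\tau_p>\beta\}}e^{S_0}]\) in \([0,\infty]\). Finiteness of the limit is established by applying \eqref{FKN1} to \(f=g=\mathbbm{1}_{D^{N,(p)}}\in L^2(D^{N,(p)})\) and integrating in \(\bs{x}\): by Trotter--Kato, the left-hand side converges to \(\langle\mathbbm{1}\otimes\vOm\mid e^{-\beta H_{\rm F}}\mathbbm{1}\otimes\vOm\rangle<\infty\), which supplies a finite upper bound for \(\int_{D^{N,(p)}}\Ex_{\bs{x}}[\mathbbm{1}_{\{\tau_p>\beta\}}e^{S_0}]\,d\bs{x}\), so that \(\Ex_{\bs{x}}[\mathbbm{1}_{\{\tau_p>\beta\}}e^{S_0}]<\infty\) for almost every \(\bs{x}\). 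Assertion (iv), for arbitrary \(f,g\in L^2(D^{N,(p)})\), then follows by combining the two limits: the left-hand side of \eqref{FKN1} converges by Trotter--Kato, and the right-hand side converges by dominated convergence with the integrable majorant \(|f(\bs{x}_0)\,g(\bs{x}_\beta)|\,\mathbbm{1}_{\{\tau_p>\beta\}}\,e^{S_0}\), whose integrability follows from the preceding step applied to \(|f|\) and \(|g|\).

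The principal subtlety is the finiteness of the monotone limit \(\Ex_{\bs{x}}[\mathbbm{1}_{\{\tau_p>\beta\}}e^{S_0}]\). A direct pathwise approach would amount to controlling the weighted self- and mutual-intersection local times of the \(N\) one-dimensional Brownian components confined to \(D^{N,(p)}\); while such local times exist in one dimension, obtaining uniform exponential moment bounds is technically involved. The strategy sketched above, in the spirit of the functional integral analyses in \cite{Gubinelli2014, MATTE2018}, bypasses this analytic difficulty by importing the required bound directly from the cutoff-free Hamiltonian \(H_{\rm F}\), thereby exploiting in an essential way the ``weak'' nature of the ultraviolet singularity peculiar to the Fröhlich model.
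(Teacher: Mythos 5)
Your proposal is correct, and it takes a genuinely different route from the paper for assertions (i)--(iii). The paper proceeds via It\^{o}'s formula: it writes $S_{\rm eff,\vepsilon}=2\beta N\vphi_\vepsilon(0,0)+X_\vepsilon+Y_\vepsilon+Z_\vepsilon$, controls $Y_\vepsilon$ (the stochastic-integral part) by Girsanov's theorem and the Cauchy--Schwarz inequality, and obtains the $L^2(\Ex_{\bs{x}})$ estimate
\(\Ex_{\bs x}\bigl[(\ex^{S_{\rm eff,\vepsilon}}-\ex^{S_{\rm eff,0}})^2\bigr]\to 0\), from which (i), (ii) are then extracted only almost surely (via an a.s. convergent subnet). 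Your observation — that the Fourier representation
\(S_{\rm eff,\vepsilon}=\tfrac{g_L}{2}\sum_{k\in D^*}\ex^{-2\vepsilon k^2}\int_0^\beta\!\int_0^\beta \ex^{-|s-t|}\chi_k(\bs{x}_s)\overline{\chi_k(\bs{x}_t)}\,ds\,dt\)
together with the positive definiteness of $\ex^{-|s-t|}$ makes each $k$-summand non-negative and the $\vepsilon\mapsto\ex^{-2\vepsilon k^2}$ prefactor monotone — yields $S_{\rm eff,\vepsilon}\nearrow S_{\rm eff,0}\in[0,\infty]$ for \emph{every} path, so that monotone convergence replaces the whole It\^{o}/Girsanov machinery and gives (i), (ii) in the stated pointwise form with essentially no estimates. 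Your finiteness argument for (iii)--(iv), matching $\int_{D^{N,(p)}}\Ex_{\bs x}[\mathbbm{1}_{\{\tau_p>\beta\}}\ex^{S_0}]\,d\bs{x}$ against $\langle\omega\otimes\vOm\,|\,\ex^{-\beta H_{\rm F}}\,\omega\otimes\vOm\rangle<\infty$ via strong resolvent convergence and then invoking dominated convergence with majorant $|f(\bs{x}_0)|\,|g(\bs{x}_\beta)|\,\ex^{S_0}$, is sound (for the majorant's integrability you must re-run the matching with $|f|,|g|$ in place of $\omega$, which is exactly what you indicate). What your approach does \emph{not} supply is the uniform exponential moment estimate $\sup_{\vepsilon\ge 0}\sup_{\bs x}\Ex_{\bs x}[\ex^{aS_{\rm eff,\vepsilon}}]<\infty$ of Proposition \ref{S-Prop}\,(i); that stronger quantitative bound is a by-product of the paper's Girsanov route and is re-used downstream (Lemma \ref{bound0}\,(1) and hence Proposition \ref{Hyp3}, and in the finiteness estimates in Proposition \ref{PropP}). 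So your argument is a cleaner and more elementary proof of Theorem \ref{PathIntF} itself, but it is not a drop-in replacement for the surrounding technical apparatus.
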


We shall prove this theorem by applying the method developed in  \cite{Gubinelli2014}. Notably, unlike in the Nelson model, energy renormalization is unnecessary here.

\begin{Rem}
\begin{itemize}
\item In \cite{Gubinelli2014}, it was essential to decompose the action $S_{\rm eff, \vepsilon}$ into contributions from the diagonal region (near $s = t$) and the off-diagonal region in the double time integral, analyzing each separately. However, in the Fröhlich model, such a decomposition is unnecessary.
\item Additionally, in \cite{Gubinelli2014}, the explicit forms of the Fourier transforms of certain functions played a significant role. Since we are considering a finite-volume system, we need to replace these parts of the proof in \cite{Gubinelli2014} with an analysis involving Fourier series.
\end{itemize}
\end{Rem}

\subsection{Analysis of $ S_{{\rm eff}, \varepsilon} $}

As the subsequent analysis is independent of the coupling constant $\alpha$, we set $\alpha = 1$ for simplicity. For $ t \in [0, \beta] $, define
$
\xi(t) \deq  \mathrm{e}^{-t}.
$
Following the approach of \cite{Gubinelli2014}, we define
\[
\varphi_{\varepsilon}(x, t) 
\deq \left[
\frac{g_L}{2} \sum_{k \in D^*} \frac{\mathrm{e}^{-2\varepsilon k^2} \mathrm{e}^{\mathrm{i} k x}}{1 + \frac{1}{2} k^2}
\right] \xi(t).
\]
For each fixed \(t\), \(\varphi_{\varepsilon}(x,t)\) is a \(2L\)-periodic function on \(\mathbb{R}\).
Applying Itō's formula, we obtain
\begin{align*}
\varphi_{\varepsilon}\left(x_{i, \beta} - x_{j, s}, \beta - s\right) - \varphi_{\varepsilon}\left(x_{i, s} - x_{j, s}, 0\right) ={} & \int_s^{\beta} \left( \frac{\partial}{\partial x} \varphi_{\varepsilon} \right)\left(x_{i, t} - x_{j, s}, t - s\right) dx_{i, t} \no
& + \int_s^{\beta} \left( \frac{\partial}{\partial t} + \frac{1}{2} \frac{\partial^2}{\partial x^2}  \right)\varphi_{\varepsilon}\left(x_{i, t} - x_{j, s}, t - s\right) dt.
\end{align*}
Let $ w(x) $ be the function defined in Eq.  \eqref{W}. It is readily verified that
\[
\left( \frac{\partial}{\partial t} + \frac{1}{2} \frac{\partial^2}{\partial x^2} \right) \varphi_{\varepsilon}(x, t) = -w(x) \xi(t).
\]
Thus,
\begin{align*}
\int_s^{\beta} \xi(s - t) w\left(x_{i, t} - x_{j, s}\right) dt ={} & \varphi_{\varepsilon}\left(x_{i, s} - x_{j, s}, 0\right) - \varphi_{\varepsilon}\left(x_{i, \beta} - x_{j, s}, \beta - s\right) \no
& + \int_s^{\beta} \left( \frac{\partial}{\partial x} \varphi_{\varepsilon} \right)\left(x_{i, t} - x_{j, s}, t - s\right) dx_{i, t}.
\end{align*}
Substituting this into the definition of $ S_{{\rm eff}, \varepsilon} $ in Eq.  \eqref{Act}, we further decompose it as:
\[
S_{{\rm eff}, \varepsilon} = 2 \beta N \varphi_{\varepsilon}(0, 0) + X_{\varepsilon} + Y_{\varepsilon} + Z_{\varepsilon},
\]
where
\begin{align*}
X_{\varepsilon} & \deq  2 \sum_{\substack{i, j  = 1 \\ i \neq j}}^N \int_0^{\beta} \varphi_{\varepsilon}\left(x_{i, s} - x_{j, s}, 0\right) ds, \\
Y_{\varepsilon} & \deq  2 \sum_{i, j  =  1}^N \int_0^{\beta} ds \int_s^{\beta} \left( \frac{\partial}{\partial x} \varphi_{\varepsilon} \right)\left(x_{i, t} - x_{j, s}, t - s\right) dx_{i, t}, \\
Z_{\varepsilon} & \deq  -2 \sum_{i, j  =  1}^N \int_0^{\beta} \varphi_{\varepsilon}\left(x_{i, \beta} - x_{j, s}, \beta - s\right) ds.
\end{align*}

To analyze the behavior of $ S_{{\rm eff}, \varepsilon} $ as $ \varepsilon \to +0 $, it suffices to examine the asymptotic behavior of $ X_{\varepsilon} $, $ Y_{\varepsilon} $, and $ Z_{\varepsilon} $. For this purpose, we first introduce:
\begin{align*}
g(x) & \deq  \sum_{n \in \mathbb{Z}} \mathrm{e}^{- \sqrt{2} | x + n L |}, \\
K_{\varepsilon}(x) & \deq  \sum_{n \in \mathbb{Z}} \mathrm{e}^{- (x + 2 n L)^2 / (8 \varepsilon)}, \\
h(x) & \deq  \sum_{n \in \mathbb{Z}} \vartheta(x + n L) \, \mathrm{e}^{- \sqrt{2} | x + n L |},
\end{align*}
where
\[
\vartheta(x)  \deq 
\begin{cases}
1, & x > 0, \\
-1, & x < 0.
\end{cases}
\]
The function \(h(x)\) is defined for \(x \notin L\BbbZ\); for convenience, we set \(h(x)=0\) for \(x\in L\BbbZ\).
It is evident that these functions are $ 2L $-periodic. By expanding them into Fourier series, we obtain:
\begin{align*}
g(x) &= \frac{\sqrt{2}}{L} \sum_{k \in D^*} \frac{1}{1 + \frac{1}{2} k^2} \mathrm{e}^{\mathrm{i} k x}, \\
K_{\varepsilon}(x) &= \frac{\sqrt{2\pi \varepsilon}}{L} \sum_{k \in D^*} \mathrm{e}^{-2\varepsilon k^2} \mathrm{e}^{\mathrm{i} k x}, \\
h(x) &= -\frac{1}{L} \sum_{k \in D^*} \frac{\mathrm{i} k}{1 + \frac{1}{2} k^2} \mathrm{e}^{\mathrm{i} k x}.
\end{align*}
Here, we set \(h(x)=0\) for \(x \in L\BbbZ\).\footnote{This may be interpreted as choosing a specific summation convention for the series:
\[
\sum_{k\in D^*}\frac{k\, \ex^{\im k x}}{1+k^2/2}
=\lim_{\kappa\to \infty} \sum_{|k|\le \kappa}\frac{k\, \ex^{\im kx}}{1+k^2/2}.
\]}
Recall that the convolution is defined by
\[
(K_{\varepsilon} * f)(x) = \int_{-L}^{L} K_{\varepsilon}(y) f(x - y) dy.
\]
From this, we deduce:
\begin{align}
\varphi_{\varepsilon}(x, t) &= \frac{1}{4 \sqrt{2\pi \varepsilon}} (g * K_{\varepsilon})(x) \, \xi(t), \notag\\
\left( \frac{\partial}{\partial x} \varphi_{\varepsilon} \right)(x, t) &= -\frac{1}{4 \sqrt{2\pi \varepsilon}} (h * K_{\varepsilon})(x) \, \xi(t). \label{phie} 
\end{align}
For $ \varepsilon = 0 $, we have
\begin{align}
\varphi_0(x, t) = \frac{1}{2} g(x) \xi(t), \quad
\frac{\partial}{\partial x} \varphi_0(x, t) = -\frac{1}{2} h(x) \xi(t). \label{phi0}
\end{align}

\begin{Lemm}\label{Lemmphi}
The following hold:
\begin{itemize}
\item[\rm (i)] There exists a constant $ C $, independent of $ x $, $ t $, and $ \varepsilon $, such that
\[
| \varphi_{\varepsilon}(x, t) | \leq C.
\]
\item[\rm (ii)] The convergence
\[
\lim_{\varepsilon \to +0} \varphi_{\varepsilon}(x, t) = \varphi_0(x, t)
\]
is uniform on $ \BbbR \times [0, \beta] $.
\end{itemize}
\end{Lemm}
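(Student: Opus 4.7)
The plan is to work directly with the convolution representation \eqref{phie}, reducing both assertions to standard Gaussian approximate-identity estimates applied to the $x$-dependent factor. Since $|\xi(t)| = \ex^{-t} \le 1$ for $t \in [0, \beta]$, both claims follow once $\frac{1}{\sqrt{2\pi\vepsilon}}(g * K_{\vepsilon})(x)$ is shown to be uniformly bounded in $(x, \vepsilon)$ and to converge uniformly to $2g(x)$ as $\vepsilon \to +0$. From its series representation, $g$ is continuous, uniformly bounded, and $2L$-periodic on $\BbbR$, and therefore uniformly continuous on $\BbbR$.

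For part (i), I would first establish the total-mass identity
\[
\int_{-L}^{L} K_{\vepsilon}(y)\, dy \;=\; \sum_{n \in \BbbZ} \int_{(2n-1)L}^{(2n+1)L} \ex^{-u^2/(8\vepsilon)}\, du \;=\; \int_{\BbbR} \ex^{-u^2/(8\vepsilon)}\, du \;=\; 2\sqrt{2\pi\vepsilon},
\]
via the change of variable $u = y + 2nL$ in the $n$-th summand of $K_{\vepsilon}$, together with the observation that the intervals $[(2n-1)L, (2n+1)L]$ tile $\BbbR$. This immediately yields $|(g * K_{\vepsilon})(x)| \le 2\sqrt{2\pi\vepsilon}\,\|g\|_\infty$, and hence $|\varphi_{\vepsilon}(x,t)| \le \tfrac{1}{2}\|g\|_\infty$, uniformly in $x$, $t$, and $\vepsilon$.

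For part (ii), I would apply the same change of variable inside the convolution, invoking $2L$-periodicity of $g$ through $g(x - u + 2nL) = g(x - u)$ to collapse the doubly-indexed sum-integral into a single Gaussian integral on $\BbbR$:
\[
\frac{1}{\sqrt{2\pi\vepsilon}}(g * K_{\vepsilon})(x) \;=\; \int_{\BbbR} \frac{1}{\sqrt{2\pi\vepsilon}}\ex^{-u^2/(8\vepsilon)} g(x - u)\, du \;=\; \int_{\BbbR} \frac{2}{\sqrt{2\pi}}\ex^{-v^2/2} g(x - 2\sqrt{\vepsilon}\, v)\, dv,
\]
where the last equality uses $v = u/(2\sqrt{\vepsilon})$. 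Since $\int_{\BbbR} \frac{2}{\sqrt{2\pi}} \ex^{-v^2/2}\, dv = 2$ and $g$ is uniformly continuous on $\BbbR$, splitting the integral at $|v| \le R$ and $|v| > R$ (bounding the first piece by the modulus of continuity of $g$ and the second by a Gaussian tail estimate, then letting $R \to \infty$) shows the right-hand side converges to $2g(x)$ uniformly in $x$ as $\vepsilon \to +0$. Multiplying by $\tfrac{1}{4}\xi(t)$ and using $|\xi(t)| \le 1$ on $[0, \beta]$ then yields $\varphi_{\vepsilon}(x,t) \to \tfrac{1}{2}g(x)\xi(t) = \varphi_0(x,t)$ uniformly on $\BbbR \times [0,\beta]$.

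I expect no serious obstacle. The only conceptual point is to align the period of $g$ with the lattice spacing $2L$ of $K_{\vepsilon}$; once that is done, the unwrapping reduces both claims to textbook approximate-identity arguments, and in contrast with the analogous estimates for the Nelson model in \cite{Gubinelli2014}, no energy renormalization or diagonal/off-diagonal decomposition is required.
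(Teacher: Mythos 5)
Your proof is correct and rests on the same core observation as the paper's: the normalized periodic Gaussian $K_{\varepsilon}$ is an approximate identity and $g$ is a bounded, uniformly continuous $2L$-periodic function. The only difference is cosmetic — whereas the paper records the mass and concentration identities \eqref{Ke} and then invokes the textbook approximate-identity theorem on the torus, you unwind $K_{\varepsilon}$ into the full-line Gaussian via $u = y + 2nL$ and $v = u/(2\sqrt{\varepsilon})$ and carry out the $|v|\le R$ versus $|v|>R$ splitting by hand; both routes give the same uniform bound $C = \tfrac12\|g\|_\infty$ in (i) and the same uniform convergence in (ii).
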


\begin{proof}
(i) For $x\in D $, $ g(x) $ can be expressed as:
\[
g(x) = \mathrm{e}^{-\sqrt{2} | x |} + \frac{2 \, \mathrm{e}^{-\sqrt{2} L}}{1 - \mathrm{e}^{-\sqrt{2} L}} \cosh(\sqrt{2} x).
\]
Moreover, we observe that
\begin{align}
\frac{1}{2 \sqrt{2\pi \varepsilon}} \int_{-L}^{L} K_{\varepsilon}(x) dx = 1, \quad
\lim_{\varepsilon \to +0} \int_{\delta \leq | x | \leq L} K_{\varepsilon}(x) dx = 0. \label{Ke}
\end{align}
Utilizing the elementary inequalities $ 0 \leq \xi(t) \leq \xi(0)(=1) $ and
\[
| g(x) | \leq C_0\deq  1 + \frac{2 \, \mathrm{e}^{-\sqrt{2} L}}{1 - \mathrm{e}^{-\sqrt{2} L}} \cosh(\sqrt{2} L)  \quad (x \in \overline{D}),
\]
we deduce that
\[
| (g * K_{\varepsilon})(x) | \leq C_0.
\]
Therefore, setting $ C= \xi(0) C_0 / 2 $, we obtain the desired inequality.

(ii) From \eqref{Ke}, it follows that $ K_{\varepsilon} $ acts as an approximate identity, see, e.g., \cite{stein2011fourier}. Hence, by standard results in analysis, we conclude that the convergence is uniform on $\overline{D}$, yielding the desired result.
\end{proof}

\begin{Lemm}\label{sphiLemm}
The following statements hold:
\begin{itemize}
\item[\rm (i)] There exists a constant $C$, independent of $x$, $t$, and $\vepsilon$, such that
\be
\left|\frac{\partial}{\partial x}\varphi_{\vepsilon}(x, t)\right| \leq C. \notag
\ee
\item[\rm (ii)] For $x \in \BbbR$, define
\be
\delta_{\vepsilon}(x) \deq  \left| \frac{1}{2\sqrt{2\pi \vepsilon}} (h * K_{\vepsilon})(x) - h(x) \right|. \notag
\ee
Then we have $\displaystyle \lim_{\vepsilon \to +0} \delta_{\vepsilon}(x) = 0$ a.e. $x$ and, for any $x \in \BbbR$, 
\be
\delta_{\vepsilon}(x) \leq 2 + \frac{4 \ex^{-\sqrt{2} L}}{1 - \ex^{-\sqrt{2} L}} \sinh(\sqrt{2} L). \label{DeI}
\ee
Furthermore,
\be
\left| \frac{\partial}{\partial x} \varphi_{\vepsilon}(x, t) - \frac{\partial}{\partial x} \varphi_{0}(x, t) \right| \leq \frac{\delta_{\vepsilon}(x)}{2}. \label{DifDphi}
\ee
\end{itemize}
\end{Lemm}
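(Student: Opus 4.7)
The plan is to reduce all three claims to two ingredients: an explicit $L^{\infty}$ bound on the $2L$-periodic function $h$, and the approximate-identity properties of $K_{\vepsilon}/(2\sqrt{2\pi\vepsilon})$ already recorded in Eq. \eqref{Ke}. The closed-form expressions \eqref{phie} and \eqref{phi0} will do all the substantive work: they reveal that $\partial_x\varphi_{\vepsilon}(x,t)$ and $\partial_x\varphi_0(x,t)$ differ only by the prefactor $-\xi(t)/2$ multiplied by the quantity inside $\delta_{\vepsilon}(x)$, so \eqref{DifDphi} is essentially tautological once $|\xi(t)|\le 1$ is used.

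For the $L^{\infty}$ bound on $h$, I would mirror the computation of $g$ in the proof of Lemma \ref{Lemmphi}. For $x\in(-L,L)$, I split the defining series for $h$ according to the sign of $x+nL$: the $n=0$ term equals $\vartheta(x)\ex^{-\sqrt{2}|x|}$ (of modulus at most $1$), while the tails $n\geq 1$ and $n\leq -1$ are geometric series in $\ex^{-\sqrt{2}L}$. A short calculation yields the closed form
\[
h(x) = \vartheta(x)\,\ex^{-\sqrt{2}|x|} - \frac{2\,\ex^{-\sqrt{2}L}}{1-\ex^{-\sqrt{2}L}}\,\sinh(\sqrt{2}\,x),
\]
valid on $(-L,L)$, which extends to $\BbbR$ by $2L$-periodicity with the convention $h(L\BbbZ)=0$. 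Using $|\sinh(\sqrt{2}\,x)|\le \sinh(\sqrt{2}\,L)$ then gives $\|h\|_{\infty}\le 1 + \frac{2\,\ex^{-\sqrt{2}L}\,\sinh(\sqrt{2}\,L)}{1-\ex^{-\sqrt{2}L}}$.

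With this bound in hand, (i) follows from the elementary estimate $|(h*K_{\vepsilon})(x)| \le \|h\|_{\infty}\int_{-L}^{L} K_{\vepsilon}(y)\,dy = 2\sqrt{2\pi\vepsilon}\,\|h\|_{\infty}$ (using \eqref{Ke}), combined with $|\xi(t)|\le 1$, giving the uniform constant $C = \|h\|_{\infty}/2$. Subtracting \eqref{phi0} from \eqref{phie} and factoring out $-\xi(t)/2$ immediately produces \eqref{DifDphi}. The same convolution bound together with the triangle inequality yields
\[
\delta_{\vepsilon}(x) \le \frac{1}{2\sqrt{2\pi\vepsilon}}\,\|h\|_{\infty}\cdot 2\sqrt{2\pi\vepsilon} + \|h\|_{\infty} = 2\,\|h\|_{\infty},
\]
which is exactly the constant on the right side of \eqref{DeI}.

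For the a.e.\ convergence $\delta_{\vepsilon}(x)\to 0$, I would invoke the approximate-identity theorem, just as in the proof of Lemma \ref{Lemmphi}(ii): Eq. \eqref{Ke} certifies that $K_{\vepsilon}/(2\sqrt{2\pi\vepsilon})$ is an approximate identity on the torus $\BbbR/2L\BbbZ$, while $h$ (viewed as $2L$-periodic) is bounded and continuous off the discrete set $L\BbbZ$. At every point of continuity one therefore has $\frac{1}{2\sqrt{2\pi\vepsilon}}(h*K_{\vepsilon})(x)\to h(x)$, hence $\delta_{\vepsilon}(x)\to 0$ for almost every $x\in\BbbR$. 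The mildly delicate point --- and the only genuine difference from Lemma \ref{Lemmphi} --- is that $h$ jumps across the lattice $L\BbbZ$, so pointwise convergence cannot be upgraded to uniform convergence, which is precisely why the statement of (ii) carries the qualifier ``a.e.''.
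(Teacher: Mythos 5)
Your proof is correct and follows essentially the same route as the paper: the closed-form evaluation of $h$ on $D$ yields the $L^\infty$ bound $C_0$, the normalization \eqref{Ke} controls the convolution, \eqref{phie}--\eqref{phi0} give \eqref{DifDphi} tautologically, and the approximate-identity property gives pointwise convergence off the discontinuity set $L\BbbZ$ (a step the paper leaves implicit). Incidentally, your sign on the $\sinh$ term in the closed form of $h$ appears to be the correct one; the $+$ in the paper seems to be a typo carried over from the analogous formula for $g$, but it does not affect the bound $|h|\le C_0$.
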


\begin{proof}
(i) For $x\in D$, we can express
\be
h(x) = \vartheta(x) \, \ex^{-\sqrt{2} | x |} + \frac{2 \ex^{-\sqrt{2} L}}{1 - \ex^{-\sqrt{2} L}} \sinh(\sqrt{2} x). \notag
\ee
This yields
\be
| h(x) | \leq C_0\deq 1 + \frac{2 \ex^{-\sqrt{2} L}}{1 - \ex^{-\sqrt{2} L}} \sinh(\sqrt{2} L). \label{hEst}
\ee
From \eqref{Ke}, it follows that $| (h * K_{\vepsilon})(x) | \leq C_0$. Setting $C = \xi(0) C_0 / 2$, we obtain the desired result.

(ii) The inequality \eqref{DeI} follows immediately from \eqref{hEst}. From Eqs. \eqref{phie} and \eqref{phi0}, we have
\be
\text{LHS of \eqref{DifDphi}} = | \xi(t) | \frac{1}{2} \delta_{\vepsilon}(x) \leq \xi(0) \frac{1}{2} \delta_{\vepsilon}(x). \notag
\ee
\end{proof}

\begin{Lemm}\label{RLemm}
For all $\beta > 0$ and any path $(\boldsymbol{x}_s)_s$, the following hold:
\begin{itemize}
\item[\rm (i)] $\displaystyle \lim_{\vepsilon \to +0} \varphi_{\vepsilon}(0, 0) = \varphi_{0}(0, 0)$.
\item[\rm (ii)] $| X_{\vepsilon} | \leq 2 N^2 \beta C$ and
$\displaystyle 
\lim_{\vepsilon \to +0} X_{\vepsilon} = X_{0}.
$
\item[\rm (iii)] $| Z_{\vepsilon} | \leq 2 N^2 \beta C$ and
$\displaystyle 
\lim_{\vepsilon \to +0} Z_{\vepsilon} = Z_{0}.
$
\end{itemize}
The constant \( C \) appearing in {\rm (ii)} and {\rm (iii)} is independent of \( N, \beta, \vepsilon, \bs{x} \), and the path \((\boldsymbol{x}_s)_s\).
\end{Lemm}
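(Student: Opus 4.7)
The plan is to derive all three claims as direct consequences of Lemma \ref{Lemmphi}, since $X_{\vepsilon}$, $Z_{\vepsilon}$, and $\varphi_{\vepsilon}(0,0)$ are built from $\varphi_{\vepsilon}$ via finite sums and bounded-interval integrations, for which uniform-on-$\BbbR\times[0,\beta]$ convergence plus a uniform pointwise bound suffice. Claim (i) is immediate: evaluating the uniform-convergence statement of Lemma \ref{Lemmphi}(ii) at the fixed point $(x,t)=(0,0)$ gives $\varphi_{\vepsilon}(0,0)\to\varphi_{0}(0,0)$.

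For (ii), the pointwise bound of Lemma \ref{Lemmphi}(i) yields
\[
|X_{\vepsilon}|
\;\le\; 2\sum_{\substack{i,j=1\\i\neq j}}^{N}\int_{0}^{\beta}\bigl|\varphi_{\vepsilon}(x_{i,s}-x_{j,s},0)\bigr|\,ds
\;\le\; 2N(N-1)\beta C
\;\le\; 2N^{2}\beta C,
\]
with the same constant $C$ as in Lemma \ref{Lemmphi}(i), depending only on $L$ and $\alpha$ (here set to $1$). The convergence $X_{\vepsilon}\to X_{0}$ then follows by writing the difference $X_{\vepsilon}-X_{0}$ as a finite sum of integrals of $\varphi_{\vepsilon}-\varphi_{0}$ evaluated along the fixed (but arbitrary) path, and using the uniform convergence on $\BbbR\times[0,\beta]$ from Lemma \ref{Lemmphi}(ii) to pull the limit inside the $s$-integral. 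Claim (iii) follows by exactly the same two-step argument, with $(x_{i,s}-x_{j,s},0)$ replaced by $(x_{i,\beta}-x_{j,s},\beta-s)$; note that $\beta-s\in[0,\beta]$, so the uniform-convergence hypothesis still applies, and the double index runs over all $i,j$ (including $i=j$), which explains the bound $2N^{2}\beta C$ rather than $2N(N-1)\beta C$.

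There is no genuine obstacle here: the whole content of the lemma is to package the deterministic analytic estimates of Lemma \ref{Lemmphi} into statements about the random objects $X_{\vepsilon}$ and $Z_{\vepsilon}$, and since the bound and the convergence in Lemma \ref{Lemmphi} are uniform in $(x,t)$, they trivially lift to bounds and convergences that are pointwise on the path space and uniform in $\bs{x}$, $\beta$, $N$ only up to the explicit prefactor $2N^{2}\beta C$. The value of the lemma lies precisely in providing, for every single path, a deterministic dominating constant $2N^{2}\beta C$ independent of $\vepsilon$, which will later underpin a dominated-convergence argument for the expectation $\Ex_{\bs{x}}[\cdots\,\ex^{S_{\vepsilon}}]$ when proving Theorem \ref{PathIntF}(iii).
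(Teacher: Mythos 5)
Your proof is correct and follows essentially the same route as the paper, namely deducing everything directly from Lemma \ref{Lemmphi}: (i) from uniform convergence evaluated at $(0,0)$, and the bounds and convergences in (ii), (iii) from the pointwise bound $|\varphi_\varepsilon|\le C$ together with interchange of limit and the $s$-integral over the bounded interval $[0,\beta]$. The only cosmetic difference is that the paper invokes Lebesgue's dominated convergence theorem where you use uniform convergence to justify the interchange; both are immediate given the uniform bound from Lemma \ref{Lemmphi}(i).
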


\begin{proof}
These results follow directly from Lemma \ref{Lemmphi} and Lebesgue's dominated convergence theorem.
\end{proof}

Next, we analyze $Y_{\vepsilon}$, which is more challenging than analyzing $X_{\vepsilon}$ and $Z_{\vepsilon}$. By exchanging the order of integration over $s$ and $t$, $Y_{\vepsilon}$ can be written as:
\be
Y_{\vepsilon} = \sum_{i=1}^N \int_0^{\beta} \vPhi_{\vepsilon, t}^{(i)} \, d x_{i, t} = \int_0^{\beta} \bs{\vPhi}_{\vepsilon, t} \cdot d \boldsymbol{x}_t, \notag
\ee
where
\be
\bs{\vPhi}_{\vepsilon, t} \deq  \left( \vPhi_{\vepsilon, t}^{(1)}, \dots, \vPhi_{\vepsilon, t}^{(N)} \right), \quad \vPhi_{\vepsilon, t}^{(i)} \deq  2 \sum_{j=1}^N \int_0^t \left( \frac{\partial}{\partial x} \varphi_{\vepsilon} \right) \left( x_{i, t} - x_{j, s}, t - s \right) \, d s. \notag
\ee
Note that when \(\vepsilon=0\), although \(\vphi_0(x, s)\) is discontinuous, the integration with respect to \(s\) in the above expression is well-defined when interpreted as a Lebesgue integral.
Letting $\delta Y_{\vepsilon} \deq Y_{\vepsilon} - Y_0$, we have
\begin{align*}
\delta Y_{\vepsilon} = \int_0^{\beta} \delta \bs{\vPhi}_{\vepsilon, t} \cdot d \boldsymbol{x}_t,
\end{align*}
where
$
\delta \bs{\vPhi}_{\vepsilon, t} \deq  \bs{\vPhi}_{\vepsilon, t} - \bs{\vPhi}_{0, t}.
$

\begin{Lemm}\label{PhiLemm}
For all $\beta > 0$, $\vepsilon>0$ and path $(\boldsymbol{x}_s)_s$, the following hold:
\begin{itemize}
\item[\rm (i)]
$\displaystyle 
\int_0^{\beta} \left| \bs{\vPhi}_{\vepsilon, t} \right|^2 \, d t \leq 4 C^2 \beta^3 N^3,\notag
$
where
$
\left| \bs{\vPhi}_{\vepsilon, t} \right| \deq \left\{ \sum_{i=1}^N \left( \varPhi_{\vepsilon, t}^{(i)} \right)^2 \right\}^{1/2}.
$ Moreover, the constant $C$
is independent of    \( N, \beta, \vepsilon, \bs{x} \), and the path \((\boldsymbol{x}_s)_s\).
\item[\rm (ii)]
$\displaystyle 
\lim_{\vepsilon \to +0} \int_0^{\beta} \left| \delta \bs{\vPhi}_{\vepsilon, t} \right|^2 \, d t = 0. \notag
$
\end{itemize}
\end{Lemm}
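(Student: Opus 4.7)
The plan is to reduce both parts to the pointwise estimates in Lemma \ref{sphiLemm} and then apply elementary integral inequalities combined with dominated convergence. The calculation for (i) is essentially trivial, while (ii) requires a small measure-theoretic argument to transfer the ``a.e.\ on $\BbbR$'' statement of Lemma \ref{sphiLemm}(ii) to an ``a.e.\ on $[0,\beta]^2$'' statement along a Brownian trajectory.

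For (i), I would invoke the uniform bound $|(\partial_x \varphi_{\vepsilon})(y,u)| \le C$ from Lemma \ref{sphiLemm}(i), valid for all $y \in \BbbR$, $u \in [0,\beta]$ and all $\vepsilon$. Inserting this directly into the definition of $\vPhi_{\vepsilon,t}^{(i)}$ gives
\[
|\vPhi_{\vepsilon,t}^{(i)}| \;\le\; 2 \sum_{j=1}^{N} \int_0^{t} C\, ds \;\le\; 2NC\beta.
\]
Squaring and summing over $i$ yields $|\bs{\vPhi}_{\vepsilon,t}|^2 \le 4 C^2 N^3 \beta^2$, and a final $t$-integration over $[0,\beta]$ gives the claimed bound (after absorbing the factor into a redefined constant). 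No path-dependence enters beyond the uniform pointwise bound, so the constant is manifestly independent of $N$, $\beta$, $\vepsilon$, $\bs{x}$, and the trajectory.

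For (ii), I would use the pointwise comparison \eqref{DifDphi}, which gives
\[
|\delta \vPhi_{\vepsilon,t}^{(i)}| \;\le\; \sum_{j=1}^{N} \int_0^{t} \delta_{\vepsilon}(x_{i,t}-x_{j,s})\, ds.
\]
Two applications of the Cauchy--Schwarz inequality (first in the $j$-sum, then in the $s$-integral) lead to
\[
|\delta \vPhi_{\vepsilon,t}^{(i)}|^2 \;\le\; N \beta \sum_{j=1}^{N} \int_0^{\beta} \delta_{\vepsilon}(x_{i,t}-x_{j,s})^2\, ds,
\]
and hence, after summing over $i$ and integrating over $t \in [0,\beta]$,
\[
\int_0^{\beta} |\delta \bs{\vPhi}_{\vepsilon,t}|^2\, dt \;\le\; N \beta \sum_{i,j=1}^{N} \int_0^{\beta}\!\!\int_0^{\beta} \delta_{\vepsilon}(x_{i,t}-x_{j,s})^2\, ds\, dt.
\]
The uniform bound \eqref{DeI} gives an integrable dominating function, so it suffices to verify that the integrand tends to $0$ for a.e.\ $(t,s) \in [0,\beta]^2$.

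The main (and really the only) obstacle is this last step: Lemma \ref{sphiLemm}(ii) only asserts that $\delta_{\vepsilon}(x)\to 0$ for a.e.\ $x\in\BbbR$, and the exceptional set is contained in the discontinuity locus $L\BbbZ$ of $h$. To conclude, I would interpret the statement of the lemma in the Wiener-almost-sure sense and argue that, for almost every Brownian trajectory $(\bs{x}_s)_s$ under $\Ex_{\bs{x}}$, the set $\{(t,s)\in [0,\beta]^2 : x_{i,t}-x_{j,s}\in L\BbbZ\}$ has two-dimensional Lebesgue measure zero. This follows from Fubini together with the fact that, away from the diagonal $\{i=j,\ t=s\}$ (itself a null set), the random variable $x_{i,t}-x_{j,s}$ has an absolutely continuous distribution, so the expectation of $\int\!\!\int \mathbbm{1}_{L\BbbZ}(x_{i,t}-x_{j,s})\, ds\, dt$ vanishes. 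Dominated convergence then finishes the proof. I would flag explicitly that part (ii) is a pathwise statement holding for Wiener-a.e.\ trajectory, which is exactly the regularity needed in the subsequent $L^2$-type control of the stochastic integral $Y_{\vepsilon}$ via Itô's isometry.
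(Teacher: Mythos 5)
Your argument is correct and follows essentially the same strategy as the paper's: reduce both parts to the pointwise estimates of Lemma \ref{sphiLemm} and conclude by domination. The extra pair of Cauchy--Schwarz steps in (ii) is valid but unnecessary --- the paper passes to the limit directly in the expression $\sum_i\int_0^\beta\bigl[\sum_j\int_0^t\delta_\vepsilon(x_{i,t}-x_{j,s})\,ds\bigr]^2 dt$ by dominated convergence --- so your reformulation adds nothing beyond length. The more substantive point you raise is legitimate, and the paper's one-line proof does suppress it: Lemma \ref{sphiLemm}(ii) only asserts $\delta_\vepsilon(x)\to 0$ for a.e.\ $x$, yet Lemma \ref{PhiLemm} is phrased ``for all paths,'' and there do exist continuous paths (constant ones, say) for which $x_{i,t}-x_{j,s}\in L\BbbZ$ on a $(s,t)$-set of positive measure. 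Your Fubini-plus-absolute-continuity argument correctly delivers the statement for Wiener-a.e.\ trajectory, which is indeed all that is used downstream (in Lemma \ref{YLemm} and beyond). An alternative that recovers the ``all paths'' phrasing as written is to sharpen Lemma \ref{sphiLemm}(ii) itself: since $h$ is odd about each point of its discontinuity locus, the paper's convention $h(x)=0$ on $L\BbbZ$ coincides with the symmetric midpoint value there, and the symmetry of the Gaussian kernel $K_\vepsilon$ then gives $\delta_\vepsilon(x)\to 0$ for \emph{every} $x\in\BbbR$, not merely a.e. One further minor correction: the control of $Y_\vepsilon$ in Lemma \ref{YLemm} is carried out via Girsanov's theorem, not It\^o's isometry, as the last sentence of your proposal suggests.
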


\begin{proof}
(i) By Lemma \ref{sphiLemm},
\begin{align*}
\int_0^{\beta} \left| \varPhi_{\vepsilon, t}^{(i)} \right|^2 \, d t &\leq 4 \int_0^{\beta} \left[ \sum_{j=1}^N \int_0^t \left| \left( \frac{\partial}{\partial x} \varphi_{\vepsilon} \right) \left( x_{i, t} - x_{j, s}, t - s \right) \right| \, d s \right]^2 \, d t \\
&\leq 4 C^2 \beta^3 N^2.
\end{align*}

(ii) From Lemma \ref{sphiLemm} (ii) and Lebesgue's dominated convergence theorem, we have
\begin{align*}
&\int_0^{\beta} \left| \delta \bs{\vPhi}_{\vepsilon, t} \right|^2 \, d t\no
 &\leq 4 \int_0^{\beta} \left[ \sum_{j=1}^N \int_0^t \left| \left( \frac{\partial}{\partial x} \varphi_{\vepsilon} \right) \left( x_{i, t} - x_{j, s}, t - s \right) - \left( \frac{\partial}{\partial x} \varphi_{0} \right) \left( x_{i, t} - x_{j, s}, t - s \right) \right| \, d s \right]^2 \, d t \\
&\leq \sum_{i=1}^N \int_0^{\beta} \left[ \sum_{j=1}^N \int_0^t  \delta_{\vepsilon}(x_{i, t} - x_{j, s}) \, d s \right]^2 \, d t \\
&\to 0 \quad (\vepsilon \to +0).
\end{align*}
\end{proof}

\begin{Lemm}\label{YLemm}
For any $a\in \mathbb{R}$, the following hold:
\begin{itemize}
\item[\rm (i)]
$\displaystyle 
\Ex_{\bs{x}}\left[
\ex^{a Y_{\vepsilon}}
\right]
\le \left(\Ex_{\bs{x}}\left[\ex^{a^2\int_0^{\beta} |\bs{\vPhi}_{\vepsilon, t}|^2 dt} \right]
\right)^{1/2}.
$

\item[\rm (ii)]
$\displaystyle 
\Ex_{\bs{x}}\left[
\ex^{a \delta Y_{\vepsilon}}
\right]
\le \left(\Ex_{\bs{x}}\left[\ex^{a^2 \int_0^{\beta} |\delta\bs{\vPhi}_{\vepsilon, t}|^2 dt} \right]
\right)^{1/2}.
$
\item[\rm (iii)]
$\displaystyle 
\lim_{\vepsilon\to +0}\Ex_{\bs{x}}\left[\ex^{a \delta Y_{\vepsilon}}\right]=1.
$
\end{itemize}
\end{Lemm}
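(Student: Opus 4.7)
The plan is to realise $Y_\vepsilon=\int_0^\beta\bs{\vPhi}_{\vepsilon,t}\cdot d\bs{x}_t$ and $\delta Y_\vepsilon=\int_0^\beta\delta\bs{\vPhi}_{\vepsilon,t}\cdot d\bs{x}_t$ as It\^o stochastic integrals driven by the Brownian motion $(\bs{x}_t)$ under $\Ex_{\bs{x}}$, and to exploit the Dol\'eans--Dade exponential martingale in combination with the Cauchy--Schwarz inequality. Lemma \ref{sphiLemm}(i) bounds each integrand deterministically, uniformly in $\omega,t,\vepsilon$, so Novikov's condition holds trivially and every exponential local martingale appearing below is a genuine martingale of expectation one.

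For (i), split
\[
\exp(aY_\vepsilon)=\exp\!\Bigl(aY_\vepsilon-\tfrac{a^2}{2}\textstyle\int_0^\beta|\bs{\vPhi}_{\vepsilon,t}|^2dt\Bigr)\cdot\exp\!\Bigl(\tfrac{a^2}{2}\textstyle\int_0^\beta|\bs{\vPhi}_{\vepsilon,t}|^2dt\Bigr).
\]
Applying the Cauchy--Schwarz inequality under $\Ex_{\bs{x}}$, the square of the first factor is exactly the Dol\'eans--Dade martingale associated with the stochastic integral $2aY_\vepsilon$ at time $\beta$, so its expectation equals one; the square of the second factor equals $\exp(a^2\int_0^\beta|\bs{\vPhi}_{\vepsilon,t}|^2dt)$. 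The resulting estimate is (i). The proof of (ii) is identical, with $\bs{\vPhi}$ replaced by $\delta\bs{\vPhi}$ throughout.

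For (iii), apply (ii) with $a$ and with $-a$. Lemma \ref{sphiLemm}(i) supplies the pathwise, $\vepsilon$-independent majorant $\int_0^\beta|\delta\bs{\vPhi}_{\vepsilon,t}|^2dt\le C_0$, while Lemma \ref{PhiLemm}(ii) yields the pathwise limit $\int_0^\beta|\delta\bs{\vPhi}_{\vepsilon,t}|^2dt\to 0$ as $\vepsilon\to +0$. The dominated convergence theorem then gives
\[
\Ex_{\bs{x}}\!\Bigl[\exp\!\Bigl(a^2\textstyle\int_0^\beta|\delta\bs{\vPhi}_{\vepsilon,t}|^2dt\Bigr)\Bigr]\longrightarrow 1,
\]
so by (ii) we obtain $\limsup_{\vepsilon\to+0}\Ex_{\bs{x}}[\exp(\pm a\,\delta Y_\vepsilon)]\le 1$. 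Pairing this with the Cauchy--Schwarz inequality
\[
1=\Ex_{\bs{x}}[1]^2\le\Ex_{\bs{x}}[\exp(a\,\delta Y_\vepsilon)]\cdot\Ex_{\bs{x}}[\exp(-a\,\delta Y_\vepsilon)]
\]
forces $\liminf_{\vepsilon\to+0}\Ex_{\bs{x}}[\exp(a\,\delta Y_\vepsilon)]\ge 1$, whence $\lim_{\vepsilon\to+0}\Ex_{\bs{x}}[\exp(a\,\delta Y_\vepsilon)]=1$.

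The main (and essentially only) obstacle is justifying the use of exponential martingales rigorously, i.e.\ verifying that the relevant Dol\'eans--Dade processes are true martingales rather than merely local ones; this reduces to Novikov's condition, which the pathwise boundedness from Lemma \ref{sphiLemm}(i) renders automatic. Beyond that, the argument is a clean application of standard It\^o calculus and dominated convergence, with no path-decomposition or renormalisation required.
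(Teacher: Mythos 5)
Your route for (i) and (ii) --- split off a Dol\'eans--Dade exponential, apply Cauchy--Schwarz, invoke Girsanov with a Novikov check to get expectation one --- is the same as the paper's, but it contains a concrete algebraic slip. After your split
\(\ex^{aY_\vepsilon}=\ex^{aY_\vepsilon-\frac{a^2}{2}I}\cdot\ex^{\frac{a^2}{2}I}\) with \(I\deq\int_0^\beta|\bs{\vPhi}_{\vepsilon,t}|^2\,dt\), the square of the first factor is \(\ex^{2aY_\vepsilon-a^2I}\), and you assert that this is \emph{exactly} the Dol\'eans--Dade exponential of \(\int_0^\beta 2a\,\bs{\vPhi}_{\vepsilon,t}\cdot d\bs{x}_t\). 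It is not: that exponential is \(\ex^{2aY_\vepsilon-\frac12(2a)^2 I}=\ex^{2aY_\vepsilon-2a^2I}\), so the quantity you produced is a Dol\'eans--Dade exponential times the non-constant positive factor \(\ex^{a^2I}\) and does \emph{not} have expectation \(1\) in general. The ``expectation equals one'' step therefore fails, and (i) does not follow from your argument as written. The repair is to use the split \(\ex^{aY_\vepsilon}=\ex^{aY_\vepsilon-a^2I}\cdot\ex^{a^2I}\), as the paper does: then \(A^2=\ex^{2aY_\vepsilon-2a^2I}\) is the genuine Dol\'eans--Dade exponential of \(\int 2a\bs{\vPhi}\cdot d\bs{x}\), Girsanov gives \(\Ex_{\bs{x}}[A^2]=1\), and Cauchy--Schwarz yields \(\Ex_{\bs{x}}[\ex^{aY_\vepsilon}]\le\bigl(\Ex_{\bs{x}}[\ex^{2a^2I}]\bigr)^{1/2}\). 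The resulting constant \(2a^2\) in the exponent, rather than \(a^2\), is immaterial for every downstream use of the lemma, where only finiteness of an exponential moment of \(I\) is needed --- but you should not claim that your factor is a Dol\'eans--Dade exponential when it is not.

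For (iii), on the other hand, you take a genuinely different route from the paper's, and it works. The paper estimates \(\bigl(\Ex_{\bs{x}}[\ex^{\delta Y_\vepsilon}]-1\bigr)^2\) directly by inserting the Girsanov martingale, extracting the factor \(1-\ex^{-\frac12\int|\delta\bs{\vPhi}_{\vepsilon,t}|^2dt}\), and passing to the limit by dominated convergence. You instead apply (ii) at \(\pm a\), show \(\Ex_{\bs{x}}[\ex^{a^2\int|\delta\bs{\vPhi}_{\vepsilon,t}|^2dt}]\to1\) by combining the pathwise, \(\vepsilon\)-uniform bound (from Lemma \ref{sphiLemm}(i)) with the pathwise limit in Lemma \ref{PhiLemm}(ii) via dominated convergence, conclude \(\limsup_{\vepsilon\to+0}\Ex_{\bs{x}}[\ex^{\pm a\delta Y_\vepsilon}]\le1\), and then close with the Cauchy--Schwarz sandwich \(1\le\Ex_{\bs{x}}[\ex^{a\delta Y_\vepsilon}]\cdot\Ex_{\bs{x}}[\ex^{-a\delta Y_\vepsilon}]\) to force \(\liminf\ge1\). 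This is correct and arguably cleaner than the paper's: once (ii) is in hand there is no further need to manipulate the Girsanov martingale, and the \(\pm a\) pairing plus the Cauchy--Schwarz lower bound replaces the paper's explicit difference-of-exponentials estimate.
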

\begin{proof}
(i) For simplicity of notation, we consider the case $a=1$ only. By applying  Girsanov's theorem \cite[Proposition 2.171]{Lrinczi2020},
\begin{align}
\Ex_{\bs{x}}\left[
\exp\left\{
2\int_0^{\beta}\bs{\vPhi}_{\vepsilon, t} \cdot d\bs{x}_t-\frac{1}{2}2^2 \int_0^{\beta} |\bs{\vPhi}_{\vepsilon, t}|^2dt
\right\}
\right]=1. \notag
\end{align}
Thus,  
\begin{align*}
\left(
\Ex_{\bs{x}}\left[
\ex^{Y_{\vepsilon}}
\right]
\right)^2&=\left(
\Ex_{\bs{x}}\left[
\ex^{\int_0^{\beta}\bs{\vPhi}_{\vepsilon, t}\cdot d\bs{x}_t-\int_0^{\beta} |\bs{\vPhi}_{\vepsilon, t}|^2dt}\times
\ex^{\int_0^{\beta} |\bs{\vPhi}_{\vepsilon, t}|^2 dt}
\right]
\right)^2\no
&\le 
\Ex_{\bs{x}}\left[
\ex^{2\int_0^{\beta}\bs{\vPhi}_{\vepsilon, t}\cdot  d\bs{x}_t-2 \int_0^{\beta} |\bs{\vPhi}_{\vepsilon, t}|^2dt}
\right]\times \Ex_{\bs{x}}\left[\ex^{\int_0^{\beta} |\bs{\vPhi}_{\vepsilon, t}|^2 dt} \right]\no
&= \Ex_{\bs{x}}\left[\ex^{\int_0^{\beta} |\bs{\vPhi}_{\vepsilon, t}|^2 dt} \right].
\end{align*}
Here, the first inequality uses the Cauchy--Schwarz inequality. The proof of (ii) is similar.

(iii) Again, by Girsanov's theorem,
\begin{align}
\Ex_{\bs{x}}\left[
\exp\left\{
\int_0^{\beta}\delta \bs{\vPhi}_{\vepsilon, t}\cdot d\bs{x}_t-\frac{1}{2} \int_0^{\beta} |\delta\bs{\vPhi}_{\vepsilon, t}|^2dt
\right\}
\right]=1. \label{Gir2}
\end{align}
Thus,
\begin{align}
\left(
\Ex_{\bs{x}}\left[\ex^{\delta Y_{\vepsilon}}\right]-1
\right)^2
&=
\left(
\Ex_{\bs{x}}\left[\ex^{\int_0^{\beta} \delta \bs{\vPhi}_{\vepsilon, t}\cdot d\bs{x}_t}-1\right]
\right)^2\no
&= \left\{
\Ex_{\bs{x}}\left[
\ex^{\int_0^{\beta} \delta \bs{\vPhi}_{\vepsilon, t}\cdot d\bs{x}_t}
\left(
1-\ex^{-\frac{1}{2} \int_0^{\beta} |\delta \bs{\vPhi}_{\vepsilon, t}|^2 d t}
\right)
\right]
\right\}^2\no
&\le 
\Ex_{\bs{x}}\left[
\ex^{2\int_0^{\beta} \delta \bs{\vPhi}_{\vepsilon, t}\cdot d\bs{x}_t}
\right]
\Ex_{\bs{x}}\left[
\left(
1-\ex^{-\frac{1}{2} \int_0^{\beta} |\delta \bs{\vPhi}_{\vepsilon, t}|^2 d t}
\right)^2
\right]\no
&\le \left(
\Ex_{\bs{x}}\left[
\ex^{8\int_0^{\beta} |\delta \bs{\vPhi}_{\vepsilon, t}|^2dt}
\right]
\right)^{1/2}
\Ex_{\bs{x}}\left[\left(
\frac{1}{2} \int_0^{\beta} |\delta \bs{\vPhi}_{\vepsilon, t}|^2 d t
\right)^2
\right]\no
&\to 0\quad \text{as $\vepsilon\to +0$}.\notag
\end{align}
Here, in the second equality and the second inequality, we used Eq. \eqref{Gir2}. The convergence of the limit follows from the dominated convergence theorem.
\end{proof}

\begin{Prop}\label{S-Prop}
For any $a\in \mathbb{R}$, the following holds:
\begin{itemize}
\item[\rm (i)] There exists a constant $C$ independent of $\vepsilon$ and $\bs{x}$ such that
$\displaystyle 
\Ex_{\bs{x}}\left[\ex^{a S_{{\rm eff}, \vepsilon}} \right]\le C. \notag
$
\item[\rm (ii)] Let $\delta S_{{\rm eff}, \vepsilon} \deq S_{{\rm eff}, \vepsilon}-S_{\rm eff, 0}$. Then,
$\displaystyle 
\lim_{\vepsilon\to +0}\Ex_{\bs{x}}\left[\ex^{a \delta S_{{\rm eff}, \vepsilon}}\right]=1.\notag
$
\end{itemize}
\end{Prop}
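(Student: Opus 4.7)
The plan is to exploit the decomposition
\[
S_{{\rm eff}, \vepsilon} = 2\beta N \vphi_{\vepsilon}(0,0) + X_{\vepsilon} + Y_{\vepsilon} + Z_{\vepsilon},
\]
already established in Section \ref{FKIF}, so that only the stochastic integral $Y_{\vepsilon}$ requires genuinely probabilistic treatment, while the remaining terms are handled by the deterministic pointwise bounds recorded in Lemmas \ref{Lemmphi}, \ref{RLemm}, and \ref{PhiLemm}.

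For assertion (i), I would first invoke Lemma \ref{Lemmphi}(i) and Lemma \ref{RLemm}(ii)--(iii) to obtain a constant $C_1 = C_1(a, N, \beta)$, independent of $\vepsilon$, $\bs{x}$, and the Brownian path, such that
\[
|a|\bigl(2\beta N|\vphi_{\vepsilon}(0,0)| + |X_{\vepsilon}| + |Z_{\vepsilon}|\bigr) \le C_1.
\]
Consequently $\ex^{aS_{{\rm eff}, \vepsilon}} \le \ex^{C_1}\ex^{aY_{\vepsilon}}$ pathwise. Applying Lemma \ref{YLemm}(i) reduces the estimate to controlling $\Ex_{\bs{x}}[\ex^{a^2 \int_0^{\beta}|\bs{\vPhi}_{\vepsilon,t}|^2\,dt}]$, and the crucial observation is that Lemma \ref{PhiLemm}(i) provides the \emph{pathwise} (hence deterministic) bound $\int_0^{\beta}|\bs{\vPhi}_{\vepsilon,t}|^2\,dt \le 4C^2\beta^3 N^3$. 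Thus the exponential is already a constant, which yields (i).

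For assertion (ii), I would split $\delta S_{{\rm eff},\vepsilon} = R_{\vepsilon} + \delta Y_{\vepsilon}$ with
\[
R_{\vepsilon} \deq 2\beta N\bigl(\vphi_{\vepsilon}(0,0) - \vphi_{0}(0,0)\bigr) + (X_{\vepsilon} - X_{0}) + (Z_{\vepsilon} - Z_{0}).
\]
Lemmas \ref{Lemmphi} and \ref{RLemm} show $R_{\vepsilon} \to 0$ pathwise with $|aR_{\vepsilon}|$ bounded by a constant independent of $\vepsilon$, $\bs{x}$, and the path, so dominated convergence gives $\Ex_{\bs{x}}[\ex^{aR_{\vepsilon}}] \to 1$. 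Writing
\[
\Ex_{\bs{x}}[\ex^{a\delta S_{{\rm eff},\vepsilon}}] - 1 = \Ex_{\bs{x}}\bigl[\ex^{aR_{\vepsilon}}(\ex^{a\delta Y_{\vepsilon}} - 1)\bigr] + \Ex_{\bs{x}}[\ex^{aR_{\vepsilon}}] - 1,
\]
I would bound the first term by $\ex^{|a|\|R_\vepsilon\|_{\infty}}\Ex_{\bs{x}}[|\ex^{a\delta Y_{\vepsilon}}-1|]$, and then estimate the latter through
\[
\Ex_{\bs{x}}[|\ex^{a\delta Y_{\vepsilon}}-1|]^2 \le \Ex_{\bs{x}}[\ex^{2a\delta Y_{\vepsilon}}] - 2\Ex_{\bs{x}}[\ex^{a\delta Y_{\vepsilon}}] + 1,
\]
which tends to $0$ by applying Lemma \ref{YLemm}(iii) with the parameters $a$ and $2a$.

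The main obstacle I anticipate is ensuring that the control on $Y_{\vepsilon}$ via Girsanov is genuinely $\vepsilon$-uniform; this is not automatic for a stochastic integral with $\vepsilon$-dependent integrand, and it is precisely the deterministic estimate $\int_0^{\beta}|\bs{\vPhi}_{\vepsilon,t}|^2\,dt \le 4C^2\beta^3 N^3$ of Lemma \ref{PhiLemm}(i) that rescues the argument. Everything else is bookkeeping: the deterministic parts of $S_{{\rm eff},\vepsilon}$ are handled pathwise with the Fourier-series/convolution estimates from Lemmas \ref{Lemmphi} and \ref{sphiLemm}, and the only limit-passages are via dominated convergence using those uniform bounds.
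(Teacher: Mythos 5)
Your proposal is correct and follows essentially the same route as the paper: the deterministic pieces $2\beta N\vphi_\vepsilon(0,0)$, $X_\vepsilon$, $Z_\vepsilon$ are controlled pathwise via Lemmas \ref{Lemmphi} and \ref{RLemm}, the stochastic integral $Y_\vepsilon$ is handled by Girsanov through Lemma \ref{YLemm} combined with the deterministic bound of Lemma \ref{PhiLemm}(i), and part (ii) reduces to $\Ex_{\bs{x}}\bigl[(1-\ex^{a\delta Y_\vepsilon})^2\bigr]\to 0$ by expanding the square and applying Lemma \ref{YLemm}(iii) with $a$ and $2a$. The only cosmetic difference is that in (ii) you pull the deterministic factor $\ex^{aR_\vepsilon}$ out of the expectation via a pathwise sup, whereas the paper applies Cauchy--Schwarz between $\ex^{aR_\vepsilon}$ and $(1-\ex^{a\delta Y_\vepsilon})$; both steps are immediate and equivalent.
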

\begin{proof}
(i) This follows from Lemmas \ref{RLemm}, \ref{PhiLemm}, and \ref{YLemm}.

(ii) First, consider the following decomposition:
\[
1-\Ex_{\bs{x}}\left[
\ex^{a \delta S_{{\rm eff}, \vepsilon} }
\right]
=I_1+I_2,
\]
where
\begin{align*}
I_1&\deq 1-\Ex_{\bs{x}}\left[
\ex^{a\delta \vphi_{\vepsilon}(0, 0)+a \delta X_{\vepsilon}+a \delta Z_{\vepsilon}}
\right],\quad 
I_2\deq 
\Ex_{\bs{x}}\left[
\ex^{a\delta \vphi_{\vepsilon}(0, 0)+a \delta X_{\vepsilon}+a \delta Z_{\vepsilon}}
\right]-
\Ex_{\bs{x}}\left[
\ex^{a \delta S_{{\rm eff}, \vepsilon} }
\right].
\end{align*}
Here, $\delta X_{\vepsilon}\deq X_{\vepsilon}-X_0$, $\delta Z_{\vepsilon}\deq Z_{\vepsilon}-Z_0$, and $\delta \vphi_{\vepsilon}(0, 0)\deq \vphi_{\vepsilon}(0, 0)-\vphi_{0}(0, 0)$. From Lemma \ref{RLemm}, it is easy to see that $I_1\to 0$ as $\vepsilon\to +0$. Now, consider $I_2$. By Lemmas \ref{RLemm} and \ref{YLemm},
\begin{align*}
I_2^2&=\left\{
\Ex_{\bs{x}}\left[
\ex^{a\delta \vphi_{\vepsilon}(0, 0)+a \delta X_{\vepsilon}+a \delta Z_{\vepsilon}}
\left(
1-\ex^{a \delta Y_{\vepsilon}}
\right)
\right]
\right\}^2\no
&\le \Ex_{\bs{x}}\left[
\ex^{2a\delta \vphi_{\vepsilon}(0, 0)+2a \delta X_{\vepsilon}+2a \delta Z_{\vepsilon}}
\right]\times  \Ex_{\bs{x}}\left[
\left(
1-\ex^{a \delta Y_{\vepsilon}}
\right)^2
\right]\to 0\ (\vepsilon\to +0).
\end{align*}
This completes the proof of Proposition \ref{S-Prop}.
\end{proof}

\subsection{Proof of Theorem \ref{PathIntF}}
\subsubsection*{Proof of (i), (ii) and (iii)}
By the Cauchy--Schwarz inequality, we have
\begin{align*}
&\left| \Ex_{\bs{x}}\left[\mathbbm{1}_{\left\{\tau_p>\beta \right\}}  \ex^{S_{\vepsilon}}\right]-\Ex_{\bs{x}}\left[\mathbbm{1}_{\left\{\tau_p>\beta \right\}}  \ex^{S_{0}}\right] \right|\no
= &  \left| \Ex_{\bs{x}}\left[\mathbbm{1}_{\left\{\tau_p>\beta \right\}}  \ex^{S_{\rm el}}\left(\ex^{ S_{{\rm eff}, \vepsilon}}-\ex^{S_{\rm eff, 0}}\right)\right] \right|\no
\le & \left\{\Ex_{\bs{x}}\left[\mathbbm{1}_{\left\{\tau_p>\beta \right\}}  \ex^{2S_{\rm el}}\right]\right\}^{1/2}
\left\{\Ex_{\bs{x}}\left[\left(\ex^{ S_{{\rm eff}, \vepsilon}}-\ex^{S_{\rm eff, 0}}\right)^2\right]\right\}^{1/2}.\label{DiffES}
\end{align*}
Here,
\[
\Ex_{\bs{x}}\left[\mathbbm{1}_{\{\tau_p>\beta\}} \ex^{2S_{\rm el}}\right]
=\left(\ex^{-\beta\,(T_{D^N}+2U)}\,\omega\right)(\bs{x}),
\]
where $T_{D^N}$ in  the Dirichlet Laplacian on  $D^{N}$ and \(\omega\) is the function on \(D^N\) that is identically equal to 1. Thus, \(\Ex_{\bs{x}}\left[\mathbbm{1}_{\{\tau_p>\beta\}} \ex^{2S_{\rm el}}\right]\) is finite for  almost every $\bs{x}$.
On the other hand, by the Cauchy--Schwarz inequality and Proposition \ref{S-Prop},
\begin{align}
\Ex_{\bs{x}}\left[\left(\ex^{ S_{{\rm eff}, \vepsilon}}-\ex^{S_{\rm eff, 0}}\right)^2\right]
&=\Ex_{\bs{x}}\left[\ex^{2S_{\rm eff, \vepsilon}}\left(1-\ex^{-\delta S_{{\rm eff}, \vepsilon}}\right)^2\right]\no
&\le C \left\{ \Ex_{\bs{x}}\left[\left(1-\ex^{-\delta S_{{\rm eff}, \vepsilon}}\right)^4\right]\right\}^{1/2}
\to 0\ (\text{as $\vepsilon \to +0$}).\label{DConv0}
\end{align}
Thus, (iii) is proven.

Since $\Ex_{\bs{x}}\left[\left(\ex^{ S_{{\rm eff}, \vepsilon}} - \ex^{S_{\rm eff, 0}}\right)^2\right] \to 0$ as $\vepsilon \to +0$, it follows that a subnet of $(S_{{\rm eff}, \vepsilon})_{\vepsilon > 0}$ converges to $S_{\rm eff, 0}$ almost surely.
 For any $\vepsilon>0$ and path $(\bs{x}_{s})_s$, we have $S_{\rm eff, \vepsilon}>0$, hence $S_{\rm eff, 0}\ge 0$. Moreover, since $S_{\rm eff, \vepsilon}$ is monotonically increasing with respect to $\alpha$, $S_{\rm eff, 0}$ is also monotonic. Therefore, (i) and (ii) are established. \qed

\subsubsection*{Proof of (iv)}
\textbf{Step 1.}
First, we prove the claim for $U\in L^{\infty}$. According to Nelson \cite{Nelson1964}, $H_{\vepsilon}$ converges strongly in the resolvent sense to $H_{\rm F}$ as $\vepsilon\to +0$. Thus,
\[\left\la f \otimes \vOm\, \Big|\, \ex^{-\beta H_{\vepsilon}} g\otimes \vOm\right\ra\to \left\la f \otimes \vOm\, \Big|\, \ex^{-\beta H_{\rm F}} g\otimes \vOm\right\ra \quad (\vepsilon\to +0).
\]
Next, we show that the right-hand side of Eq. \eqref{FKN1} converges to the right-hand side of Eq.  \eqref{FKN2}. By the Cauchy--Schwarz inequality, we obtain:
\begin{align}
\left|\int_{D^{N, (p)}} d\bs{x} \Ex_{\bs{x}}\left[\mathbbm{1}_{\left\{\tau_p>\beta \right\}}  
f(\bs{x}_0)^* g(\bs{x}_{\beta})\,  \ex^{S_{\vepsilon}}
\right]-\int_{D^{N, (p)}} d\bs{x} \Ex_{\bs{x}}\left[\mathbbm{1}_{\left\{\tau_p>\beta \right\}}  
f(\bs{x}_0)^* g(\bs{x}_{\beta})\,  \ex^{S_0}
\right]\right|\no
\le \int_{D^{N, (p)}} d\bs{x} f(\bs{x})^*
\left\{
\Ex_{\bs{x}}\left[\left(\ex^{ S_{{\rm eff}, \vepsilon}}-\ex^{S_{\rm eff, 0}}\right)^2\right]
\right\}^{1/2}
\left\{
\Ex_{\bs{x}}\left[
\mathbbm{1}_{\left\{\tau_p>\beta \right\}}  \ex^{2S_{\rm el}}|g(\bs{x}_{\beta})|^2
\right]
\right\}^{1/2}. \label{InqEx1}
\end{align}
By Proposition \ref{S-Prop}, $\Ex_{\bs{x}}\left[\left(\ex^{ S_{{\rm eff}, \vepsilon}}-\ex^{S_{\rm eff, 0}}\right)^2\right]$ is uniformly bounded with respect to $\bs{x}$ and $\vepsilon$. On the other hand, as shown below, $\left\{\Ex_{\bs{x}}\left[
\mathbbm{1}_{\left\{\tau_p>\beta \right\}}  \ex^{2S_{\rm el}}|g(\bs{x}_{\beta})|^2
\right]
\right\}^{1/2}\in L^2(D^{N, (p)})$. Hence, by the dominated convergence theorem,
the right hand side  of \eqref{InqEx1} converges to zero as $\vepsilon\to +0$.

We show that $\left\{\Ex_{\bs{x}}\left[
\mathbbm{1}_{\left\{\tau_p>\beta \right\}}  \ex^{2S_{\rm el}}|g(\bs{x}_{\beta})|^2
\right]
\right\}^{1/2}\in L^2(D^{N, (p)})$. Let $P_{D^{N, (p)}}^{\beta}(\bs{x}, \bs{y})$ be the integral kernel of $\ex^{-\beta T_{D^{N, (p)}}}$, where $T_{D^{N,(p)}}$ in  the Dirichlet Laplacian on the domain $D^{N,(p)}$. Then,
\begin{align*}
\int_{D^{N, (p)}} d\bs{x}\Ex_{\bs{x}}\left[
\mathbbm{1}_{\left\{\tau_p>\beta \right\}}  \ex^{2S_{\rm el}}|g(\bs{x}_{\beta})|^2
\right]
&\le 
\ex^{2\beta \|U\|_{\infty}}
\int_{D^{N, (p)}} d\bs{x}\Ex_{\bs{x}}\left[
\mathbbm{1}_{\left\{\tau_p>\beta \right\}}  |g(\bs{x}_{\beta})|^2
\right]\no
&=\ex^{2\beta \|U\|_{\infty}}\int_{D^{N, (p)}} d\bs{x}\int_{D^{N, (p)}} d\bs{y} P_{D^{N, (p)}}^{\beta}(\bs{x}, \bs{y}) |g(\bs{y})|^2\no
&\le \ex^{2\beta \|U\|_{\infty}}\|g\|^2_2.
\end{align*}
Here, we use $\int_{D^{N, (p)}} d\bs{x} P_{D^{N, (p)}}^{\beta}(\bs{x}, \bs{y})\le 1$ \cite[Theorem 7.13]{grigor2012heat}.

\textbf{Step 2.}
Now, we prove the claim for general $U$ and complete the proof of (iv).
For each $n\in \mathbb{N}$, define $U_n\in L^{\infty}(D^{N, (p)})$ by:
\[U_n(x)=U(x) \quad \text{if } |U(x)| \le n, \quad U_n(x)=n \quad \text{if } |U(x)|\ge n.\]
Let $H_{\rm F}^{(n)}$ be the Fr\"{o}hlich Hamiltonian with $U$ replaced by $U_n$. Then, $H_{\rm F}^{(n)}$ converges strongly in the resolvent sense to $H_{\rm F}$. Moreover, by \textbf{Step 1}, Eq. \eqref{FKN2} holds for each $n\in \mathbb{N}$.
Accordingly, let \(S_0^{(n)}\) and \(S_{\rm el}^{(n)}\) denote the actions defined by replacing \(U\) with \(U_n\) in \(S_0\) and \(S_{\rm el}\), respectively.
 By the Cauchy--Schwarz inequality,
\begin{align*}
&\left|
\int_{D^{N, (p)}} d\bs{x} \Ex_{\bs{x}}\left[\mathbbm{1}_{\left\{\tau_p>\beta \right\}}  
f(\bs{x}_0)^* g(\bs{x}_{\beta})\,  \ex^{S_0^{(n)}}
\right]-\int_{D^{N, (p)}} d\bs{x} \Ex_{\bs{x}}\left[\mathbbm{1}_{\left\{\tau_p>\beta \right\}}  
f(\bs{x}_0)^* g(\bs{x}_{\beta})\,  \ex^{S_0}
\right]
\right|\no
\le & \int_{D^{N, (p)}} d\bs{x} 
|f(\bs{x})|
\Ex_{\bs{x}}\left[\mathbbm{1}_{\left\{\tau_p>\beta \right\}}  
 |g(\bs{x}_{\beta})|\,  \ex^{S_{\rm eff, 0}}
 \left|
  \ex^{S^{(n)}_{\rm el}}-\ex^{S_{\rm el}}
 \right|
\right]\no
\le & \int_{D^{N, (p)}} d\bs{x} |f(\bs{x})|
\left\{
\Ex_{\bs{x}}\left[\mathbbm{1}_{\left\{\tau_p>\beta \right\}}  
 |g(\bs{x}_{\beta})|\,  \ex^{2S_{\rm eff, 0}}\right]
 \right\}^{1/2}
 \left\{
\Ex_{\bs{x}}\left[\mathbbm{1}_{\left\{\tau_p>\beta \right\}}  
 |g(\bs{x}_{\beta})|\,   \left(
  \ex^{S^{(n)}_{\rm el}}-\ex^{S_{\rm el}}
 \right)^2\right]
 \right\}^{1/2}\no
 \le & 
 \left(
  \int_{D^{N, (p)}} d\bs{x} |f(\bs{x})|\Ex_{\bs{x}}\left[\mathbbm{1}_{\left\{\tau_p>\beta \right\}}  
 |g(\bs{x}_{\beta})|\,  \ex^{2S_{\rm eff, 0}}\right]
 \right)^{1/2}\times \no
 &\times
 \left(
  \int_{D^{N, (p)}} d\bs{x} |f(\bs{x})|\Ex_{\bs{x}}\left[\mathbbm{1}_{\left\{\tau_p>\beta \right\}}  
 |g(\bs{x}_{\beta})|\,  \left(
  \ex^{S^{(n)}_{\rm el}}-\ex^{S_{\rm el}}
 \right)^2\right]
 \right)^{1/2}\no
 \deq & I_1\times I_2.
\end{align*}
First, we confirm that $I_1<\infty$. By Proposition \ref{S-Prop},
\begin{align*}
I_1^2
 \le&
 \int_{D^{N, (p)}} d\bs{x} |f(\bs{x})|
 \left\{
 \Ex_{\bs{x}}\left[\mathbbm{1}_{\left\{\tau_p>\beta \right\}}  
 |g(\bs{x}_{\beta})|^2\right]
 \right\}^{1/2}\left\{
 \Ex_{\bs{x}}\left[\mathbbm{1}_{\left\{\tau_p>\beta \right\}}  
 \ex^{4S_{\rm eff, 0}}\right]
 \right\}^{1/2}\no
 \le& C \|f\|\|g\|.
\end{align*}
Next, we show that $I_2\to 0\ (n\to \infty)$. By the Feynman--Kac formula for the Schr\"{o}dinger semigroup,
\begin{align*}
&\Ex_{\bs{x}}\left[\mathbbm{1}_{\left\{\tau_p>\beta \right\}}  
 |g(\bs{x}_{\beta})|\,  \left(
  \ex^{S^{(n)}_{\rm el}}-\ex^{S_{\rm el}}
 \right)^2\right]\no
 =&\left\{
 \ex^{-\beta (T_{D^{N, (p)}}+2U_n)} -2\ex^{-\beta (T_{D^{N, (p)}}+U+U_n)}
 + \ex^{-\beta (T_{D^{N, (p)}}+2U)}
 \right\}|g|(\bs{x})
\end{align*}
It follows that $T_{D^{N, (p)}}+2U_n$ and $T_{D^{N, (p)}}+U+U_n$ converge strongly in the resolvent sense to $T_{D^{N, (p)}}+2U$, hence
\begin{align*}
I_2^2=\left\la |f|\,  \Big|\,  \left\{\ex^{-\beta (T_{D^{N, (p)}}+2U_n)} -2\ex^{-\beta (T_{D^{N, (p)}}+U+U_n)}
 + \ex^{-\beta (T_{D^{N, (p)}}+2U)}
 \right\}|g|\right\ra
 \to 0 \ (n\to \infty).
\end{align*}
Therefore, Eq. \eqref{FKN2} holds for general $U$.
\qed

\section{Proof of Theorem \ref{ExGs}}\label{PfExGs}

\subsection{Strategy of the Proof}

First, we outline the strategy for proving Theorem \ref{ExGs}. Our proof utilizes the Schrödinger representation $L^2(\mathcal{Q})$ of the boson Fock space $\F$. To specify which representation is used, let \( J \) be the unitary operator from \( L^2(\mathcal{Q}) \) to \(\F\). The following theorem is pivotal for the proof of Theorem \ref{ExGs}:

\begin{Thm}\label{Hyp2}
For every \(p\), the operator \(H_{\rm F} \restriction \mathfrak{K}_D^{N,(p)}\) possesses a unique, strictly positive ground state.
\end{Thm}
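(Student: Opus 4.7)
The proof combines the Feynman--Kac representation from Theorem \ref{PathIntF} with an HVZ-type spectral analysis and Faris' version of the Perron--Frobenius theorem. The argument naturally splits into two parts: establishing the existence of a ground state of $H_{\rm F}\restriction \K_{D^{N, (p)}}$, and then invoking the positivity-improving property of the semigroup (Proposition \ref{PIEx}) to deduce uniqueness and strict positivity.

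Two structural features of the model motivate the existence proof: the Dirichlet Laplacian $T_{D^{N, (p)}}$ has compact resolvent (since $D^{N, (p)}$ is a bounded subset of $\BbbR^N$), and the phonon dispersion is identically $1$, providing a uniform mass gap. These together suggest the HVZ-type inclusion
\[
\mathrm{spec}_{\mathrm{ess}}\bigl(H_{\rm F}\restriction \K_{D^{N, (p)}}\bigr)\subset \bigl[\mathcal{E}\bigl(H_{\rm F}\restriction \K_{D^{N, (p)}}\bigr)+1,\ \infty\bigr),
\]
which would immediately imply that $\mathcal{E}\bigl(H_{\rm F}\restriction \K_{D^{N, (p)}}\bigr)$ is an isolated eigenvalue of finite multiplicity, yielding a ground state by the minimax principle. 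I would first establish this inclusion for the ultraviolet-regularized operator $\He\restriction \K_{D^{N, (p)}}$ through a Fock-space IMS-type localization argument, exploiting the fact that states with many phonons have energy at least $\mathcal{E}(\He\restriction \K_{D^{N, (p)}})+1-o(1)$. The bound is then transferred to $H_{\rm F}$ by combining the strong resolvent convergence $\He\to H_{\rm F}$ with the uniform-in-$\vepsilon$ estimates provided by Proposition \ref{S-Prop} via the functional integral formula of Theorem \ref{PathIntF}.

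Given existence, uniqueness and strict positivity are a direct consequence of the positivity-improving property of $\ex^{-\beta H_{\rm F}\restriction \K_{D^{N, (p)}}}$ in the Schrödinger representation $L^2(D^{N, (p)}\times \mathcal{Q})$ (Proposition \ref{PIEx}). Faris' theorem then ensures that the eigenspace at the bottom of the spectrum is one-dimensional and spanned by a strictly positive function on $D^{N, (p)}\times \mathcal{Q}$, completing the proof.

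The principal technical hurdle is the transfer of the HVZ-type spectral gap from $\He$ to $H_{\rm F}$. Since $H_{\rm F}$ is defined only as a strong resolvent limit and its explicit action is not available, commutator-based HVZ arguments cannot be applied directly. Instead, the path integral representation of Theorem \ref{PathIntF}, together with the exponential moment bounds of Proposition \ref{S-Prop}, must be used to control the low-lying spectral projections of $\He$ uniformly in $\vepsilon$, so that the spectral gap survives in the limit $\vepsilon\to +0$ and the minimax eigenvalue of $H_{\rm F}\restriction \K_{D^{N, (p)}}$ remains strictly below the essential spectrum.
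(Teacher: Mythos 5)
Your treatment of uniqueness and strict positivity---combining the positivity-improving property of the semigroup (Proposition \ref{PIEx}) with the Perron--Frobenius theorem---coincides with the paper's argument. Your existence argument, by contrast, takes a genuinely different route and contains an unresolved gap.

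The paper establishes existence by proving directly (Proposition \ref{Hyp3}) that $J^{-1}\ex^{-\beta H_{\rm F}}J$ maps $L^2(D^{N,(p)}\times\mathcal{Q})$ boundedly into $L^q(D^{N,(p)}\times\mathcal{Q})$ for some $q>2$; since $D^{N,(p)}\times\mathcal{Q}$ has finite measure, Nelson's hypercontractivity theorem \cite{Nelson1973} then yields a ground state. The crucial feature is that this argument operates directly at $\vepsilon=0$: the Feynman--Kac formula of Proposition \ref{fkf2} and the uniform exponential moment bounds of Lemma \ref{bound0} are already established for $H_{\rm F}$, so no spectral limiting argument is required. Your HVZ approach instead needs to transfer a spectral-gap estimate from $\He$ to $H_{\rm F}$, and this is exactly where the gap lies: strong resolvent convergence does \emph{not} preserve essential spectrum or spectral gaps. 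It gives spectral inclusion only in one direction (every point of $\mathrm{spec}(H_{\rm F})$ is a limit of points of $\mathrm{spec}(\He)$, not conversely), so a uniform gap of size $1$ for all $\He$ does not imply any gap for $H_{\rm F}$. You correctly flag this as the ``principal technical hurdle'' and propose to close it using the path integral estimates of Proposition \ref{S-Prop}, but those estimates control semigroup expectations rather than spectral projections, and you do not indicate how one would extract uniform-in-$\vepsilon$ projection bounds from them. Without a concrete mechanism---say, an upgrade to norm resolvent convergence, or a spectral argument formulated directly for $H_{\rm F}$---the existence claim is not established. The hypercontractivity route in the paper is designed precisely to sidestep this obstruction.
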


The proof of this theorem is presented in Subsections \ref{PfHyC} and \ref{Unique}. The principal idea is as follows: Because the analysis is conducted within the bounded domain \( D^{N, (p)} \times \mathcal{Q} \), the existence of a ground state for \( H_{\rm F} \restriction \mathfrak{K}_{D^{N, (p)}} \) is established by demonstrating that the semigroup generated by \( H_{\rm F} \restriction \mathfrak{K}_{D^{N, (p)}} \) is hypercontractive \cite{Nelson1973}. Furthermore, we demonstrate that the heat semigroup generated by \( H_{\rm F} \restriction \mathfrak{K}_{D^{N, (p)}} \) is positivity-improving in the Schrödinger representation. This property, combined with the Perron--Frobenius theorem, guarantees the uniqueness and strict positivity of the ground state.

\subsubsection*{Proof of Theorem \ref{ExGs}, given Theorem \ref{Hyp2}}
By Proposition \ref{GSEng2}, it suffices to show that \( H_{\rm F} \restriction \mathfrak{K}_{D^{N, (p)}} \) has a unique ground state for each \( p \in \{1, \dots, N\} \). This follows immediately from Theorem \ref{Hyp2}.\qed

\subsection{Preliminaries}

We use the approach of \cite{Hiroshima2021, MATTE2018} to prove Theorem \ref{Hyp2}. In the analysis of the renormalized Nelson model in \cite{Hiroshima2021, MATTE2018}, a particular Feynman--Kac-type formula plays a pivotal role. Here, we present an outline of how this Feynman--Kac-type formula is applied to the Fröhlich model, without providing a full proof. Readers interested in a more comprehensive argument are encouraged to consult the relevant literature.

\begin{Prop}
\label{fkf1}
Let \(F,G \in \mathfrak{K}_{D^{N,(p)}}\). For any \(\varepsilon>0\), we have
\[
\langle F \,\vert\, \ex^{-\beta H_\varepsilon} G\rangle
= \int_{D^{N,(p)}} \mathbb{E}_{\boldsymbol{x}}\Bigl[\ex^{S_{\rm el}} \,\mathbbm{1}_{\{\tau_p>\beta\}} \,\langle F(\boldsymbol{x}_0)\,\vert\, \varXi_\varepsilon \,G(\boldsymbol{x}_\beta)\rangle_{\F}\Bigr]\;d\boldsymbol{x},
\]
where
\begin{align*}
\varXi_\varepsilon & \deq  \ex^{ S_{\rm eff,\varepsilon}} \,\ex^{a(\vartheta_\varepsilon)^*}\,\ex^{-\beta N_{\rm ph}}\,\ex^{a(\tilde\vartheta_\varepsilon)},\\
\vartheta_\varepsilon(k) & \deq  -g_L^{1/2} \sum_{j = 1}^N \int_0^\beta \ex^{-\varepsilon k^2}\,\ex^{-\mathrm{i}k x_{j,s}}\,\ex^{-s}\,ds,\\
\tilde\vartheta_\varepsilon(k) & \deq  -g_L^{1/2} \sum_{j = 1}^N \int_0^\beta \ex^{-\varepsilon k^2}\,\ex^{\mathrm{i}k x_{j,s}}\,\ex^{-(\beta-s)}\,ds.
\end{align*}
\end{Prop}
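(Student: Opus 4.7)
The plan is to derive the Feynman--Kac-type formula by combining a Lie--Trotter decomposition with an explicit disentangling of the resulting path-dependent time-ordered exponential on the boson Fock space, following the methodology of \cite{Gubinelli2014, MATTE2018, Hiroshima2021}. Writing $H_{\vepsilon} = H_{\rm el} + V_{\vepsilon} + N_{\rm ph}$ with $V_{\vepsilon} \deq \sum_{j=1}^{N} \phi_{\vepsilon}(x_j)$, I will start from the Lie--Trotter formula
\[
\ex^{-\beta H_{\vepsilon}} = \lim_{n \to \infty}\bigl(\ex^{-(\beta/n) H_{\rm el}}\,\ex^{-(\beta/n)(V_{\vepsilon} + N_{\rm ph})}\bigr)^n,
\]
whose validity relies on the $N_{\rm ph}$-infinitesimal smallness of $V_{\vepsilon}$ for $\vepsilon > 0$. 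Under the fiber decomposition of $\K_{D^{N,(p)}}$, each electron factor $\ex^{-(\beta/n) H_{\rm el}}$ admits the standard Dirichlet--Feynman--Kac representation, producing in the limit the Brownian expectation on $D^{N,(p)}$ together with the killing indicator $\mathbbm{1}_{\{\tau_p > \beta\}}$ and the electronic action factor $\ex^{S_{\rm el}}$. The boson factors along the skeleton accumulate into a path-dependent time-ordered exponential
\[
\mathcal{T}_{\beta} \deq T\,\exp\Bigl(-\int_0^{\beta}\bigl(V_{\vepsilon}(\bs{x}_s) + N_{\rm ph}\bigr)\,ds\Bigr),
\]
so the proof reduces to the operator identity $\mathcal{T}_{\beta} = \ex^{S_{{\rm eff},\vepsilon}}\,\ex^{a(\vartheta_{\vepsilon})^*}\,\ex^{-\beta N_{\rm ph}}\,\ex^{a(\tilde\vartheta_{\vepsilon})}$, valid for each realization of the Brownian path.

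To establish this identity, I will pass to the interaction picture with respect to $N_{\rm ph}$, writing $\mathcal{T}_{\beta} = \ex^{-\beta N_{\rm ph}}\,U_I(\beta)$. The remainder $U_I(\beta)$ is the time-ordered exponential of $V_{\vepsilon, I}(s) \deq \ex^{sN_{\rm ph}} V_{\vepsilon}(\bs{x}_s) \ex^{-sN_{\rm ph}}$, which remains linear in $a_k, a_k^*$ because $\ex^{sN_{\rm ph}} a_k \ex^{-sN_{\rm ph}} = \ex^{-s} a_k$ (and the conjugate relation for $a_k^*$). Since $[a_k, a_{k'}^*] = \delta_{k, k'}$ is a c-number, a standard Wick disentangling---equivalent to a generalized BCH identity for linear combinations of creation and annihilation operators---yields $U_I(\beta) = \ex^{c}\,\ex^{a(\psi)^*}\,\ex^{a(\widetilde\psi)}$, where $\psi, \widetilde\psi$ are the time integrals of the creation and annihilation coefficients of $-V_{\vepsilon, I}(s)$, and $c$ equals one-half of the double time integral of the vacuum expectation of the time-ordered product $T\,V_{\vepsilon, I}(s)\,V_{\vepsilon, I}(t)$. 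Commuting $\ex^{a(\psi)^*}$ past $\ex^{-\beta N_{\rm ph}}$ via $\ex^{-\beta N_{\rm ph}}\,a_k^* = \ex^{-\beta}\,a_k^*\,\ex^{-\beta N_{\rm ph}}$ rescales $\psi$ by $\ex^{-\beta}$, after which a direct term-by-term inspection confirms that the resulting coefficient is precisely $\vartheta_{\vepsilon}$ and that $\widetilde\psi = \tilde\vartheta_{\vepsilon}$. Evaluating $c$ by summing the Wick contractions produces $\ex^{-|s-t|} W_{{\rm eff},\vepsilon}(\bs{x}_s - \bs{x}_t)$ inside the double integral, so $c = S_{{\rm eff},\vepsilon}$ as in \eqref{Act}. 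Inserting this into $\langle F(\bs{x}_0)\,|\,\mathcal{T}_{\beta}\,G(\bs{x}_{\beta})\rangle_{\F}$ and integrating against the Wiener measure then yields the claimed formula.

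The main technical obstacle is making the BCH-type disentangling rigorous on Fock space: $a(\vartheta_{\vepsilon})^*$ and $a(\tilde\vartheta_{\vepsilon})$ are unbounded, so the product $\ex^{a(\vartheta_{\vepsilon})^*}\,\ex^{-\beta N_{\rm ph}}\,\ex^{a(\tilde\vartheta_{\vepsilon})}$ must be interpreted on a suitable dense domain and then shown to coincide with the strong limit of the Trotter skeleton. This is controlled, as in \cite{MATTE2018, Hiroshima2021}, by the regularizing factor $\ex^{-\beta N_{\rm ph}}$ sandwiched between the two exponentials, which makes the composition well-defined and bounded on the finite-particle space $\F_{\rm fin}$. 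Uniform integrability in the Brownian expectation then follows from the crude bounds $\|\vartheta_{\vepsilon}\|_{\ell^2(D^*)}, \|\tilde\vartheta_{\vepsilon}\|_{\ell^2(D^*)} \le C(\beta, \vepsilon, N)$, which hold uniformly in the path since $\vepsilon > 0$ guarantees $\ell^2$-summability of the UV-regularized form factors and $|\ex^{\pm \im k x_{j,s}}| = 1$.
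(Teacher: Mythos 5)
Your strategy---Trotter decomposition, fiber-wise Feynman--Kac for the Dirichlet Laplacian, interaction picture with respect to $N_{\rm ph}$, and a BCH/Wick disentangling into normal-ordered form, regularized by the $\ex^{-\beta N_{\rm ph}}$ sandwiched between the creation and annihilation exponentials---is exactly the route the paper intends; the paper's own proof is a one-line appeal to BCH and to the Feynman--Kac constructions of Hiroshima--Matte and Matte--M{\o}ller. Your remarks on rigor ($\ell^2$-control of $\vartheta_\vepsilon,\tilde\vartheta_\vepsilon$ for $\vepsilon>0$, boundedness of $\varXi_\vepsilon$ through $\ex^{-\beta N_{\rm ph}}$) are the right ones.

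There is, however, a direction error in your interaction-picture step that, carried out literally, produces the wrong coefficients. The operator-valued multiplicative functional $\mathcal{T}_\beta$ arising from the Trotter skeleton accumulates its small-time factors with \emph{earlier} times on the \emph{left}: iterating the Markov property gives $(\ex^{-\beta H_\vepsilon}G)(\bs x)=\mathbb{E}_{\bs x}\!\bigl[\cdots\, \ex^{-\delta(V_\vepsilon(\bs{x}_\delta)+N_{\rm ph})}\ex^{-\delta(V_\vepsilon(\bs{x}_{2\delta})+N_{\rm ph})}\cdots G(\bs{x}_\beta)\bigr]$, so $\mathcal{T}_\beta$ obeys the backward cocycle relation $\partial_\beta\mathcal{T}_\beta=-\mathcal{T}_\beta\bigl(V_\vepsilon(\bs{x}_\beta)+N_{\rm ph}\bigr)$, not the forward Schr\"odinger-type equation you implicitly use. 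The factorization consistent with this places $\ex^{-\beta N_{\rm ph}}$ on the \emph{right}, $\mathcal{T}_\beta=W_\beta\,\ex^{-\beta N_{\rm ph}}$, where $W_\beta$ is the anti-time-ordered exponential of $V_I(s)=\ex^{-sN_{\rm ph}}V_\vepsilon(\bs{x}_s)\ex^{sN_{\rm ph}}$; the creation coefficient then already carries $\ex^{-s}$ (giving $\vartheta_\vepsilon$), and commuting the annihilation factor rightward past $\ex^{-\beta N_{\rm ph}}$ rescales it by $\ex^{-\beta}$ to produce the $\ex^{-(\beta-s)}$ in $\tilde\vartheta_\vepsilon$. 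Your setup ($\mathcal{T}_\beta=\ex^{-\beta N_{\rm ph}}U_I(\beta)$ with $V_I(s)=\ex^{sN_{\rm ph}}V_\vepsilon(\bs{x}_s)\ex^{-sN_{\rm ph}}$, then commuting the creation factor leftward) swaps the two time weights, putting $\ex^{-(\beta-s)}$ on the creation side and $\ex^{-s}$ on the annihilation side, which is not what the proposition asserts. The same orientation issue affects the Wick term: the damping $\ex^{-|s-t|}$ in $S_{{\rm eff},\vepsilon}$ comes from the anti-time-ordered two-point function of $V_I$; with the standard (later-times-left) time ordering you would obtain the wrong sign in that exponent. So the ``direct term-by-term inspection'' you invoke would not, as written, reproduce $\vartheta_\vepsilon$, $\tilde\vartheta_\vepsilon$, and $S_{{\rm eff},\vepsilon}$. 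This is a localized bookkeeping error rather than a flaw in the method, but it must be fixed by tracking whether the Feynman--Kac multiplicative functional is a forward or a backward evolution.
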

\begin{proof}
This identity can be established through an application of the Baker-Campbell-Hausdorff formula. For detailed arguments, see, e.g., \cite[Eqs. (2.51) and (2.52)]{Hiroshima2021}.
\end{proof}

In the case \(\varepsilon=0\), one can also obtain the Feynman--Kac-type formula by formally setting \(\varepsilon=0\) in Proposition \ref{fkf1}. However, it is not immediately clear whether \(\vartheta_0,\tilde \vartheta_0 \in \ell^2(D^*)\). Thus, the mathematical validity of this substitution must be verified. This difficulty is circumvented by the following lemma:
\begin{Lemm}\label{u}
The following uniform bounds hold:
\[
\sup_{\vepsilon\ge 0}\sup_{\bs{x}\in \mathbb{R}^N}\mathbb{E}_{\bs{x}}\left[\|\vartheta_{\vepsilon}\|^2_{\ell^2}\right]<\infty,\quad \text{and}\quad
\sup_{\vepsilon\ge 0}\sup_{\bs{x}\in \mathbb{R}^N}\mathbb{E}_{\bs{x}}\left[\|\tilde{\vartheta}_{\vepsilon}\|^2_{\ell^2}\right]<\infty.
\]
In particular, it follows that \(\vartheta_0,\tilde{\vartheta}_0 \in \ell^2(D^*)\) almost surely.
\end{Lemm}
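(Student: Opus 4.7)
The plan is to compute $\Ex_{\bs{x}}\bigl[\|\vartheta_\vepsilon\|_{\ell^2}^2\bigr]$ mode-by-mode and show the resulting series converges with a bound independent of $\vepsilon \ge 0$ and $\bs{x}\in\BbbR^N$. Expanding the square and applying Fubini yields
\[
\Ex_{\bs{x}}\bigl[|\vartheta_\vepsilon(k)|^2\bigr]
= g_L\,\ex^{-2\vepsilon k^2}\sum_{i,j=1}^N\int_0^\beta\!\!\int_0^\beta \Ex_{\bs{x}}\bigl[\ex^{-\im k(x_{j,s}-x_{i,t})}\bigr]\,\ex^{-s-t}\,ds\,dt.
\]
The crucial observation is that the expectation factor is bounded in modulus by $\ex^{-k^2|s-t|/2}$ when $i=j$ (characteristic function of the Brownian increment), and by $\ex^{-k^2(s+t)/2}$ when $i\neq j$ (product of characteristic functions of independent Brownian motions started at $x_i$ and $x_j$). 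In both cases the bound is independent of $\bs{x}$: the $\bs{x}$-dependent factor $\ex^{\im k(x_i - x_j)}$ appearing in the off-diagonal case has modulus one, so uniformity in the starting point comes for free.

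Next I would estimate each contribution by a constant multiple of $(1+k^2)^{-1}$. For the diagonal term, a change of variables $u = s-t$ gives $\int_0^\beta\!\int_0^\beta \ex^{-k^2|s-t|/2}\,ds\,dt \le \min\bigl(\beta^2,\,4\beta/k^2\bigr)$, and for the off-diagonal term the integral factorizes into $\bigl(\int_0^\beta \ex^{-(k^2/2+1)s}\,ds\bigr)^2 \le \bigl(2/(k^2+2)\bigr)^2$. Combining these produces a bound of the form $\Ex_{\bs{x}}\bigl[|\vartheta_\vepsilon(k)|^2\bigr] \le C_{N,\beta,\alpha,L}\,(1+k^2)^{-1}$ uniformly in $\vepsilon\ge 0$ and $\bs{x}\in\BbbR^N$. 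Since $D^* = (\pi/L)\BbbZ$, the series $\sum_{k\in D^*}(1+k^2)^{-1}$ converges, and summing over $k$ yields the desired uniform bound on $\Ex_{\bs{x}}\bigl[\|\vartheta_\vepsilon\|_{\ell^2}^2\bigr]$. The argument for $\tilde\vartheta_\vepsilon$ is identical after the substitution $s\mapsto\beta-s$, which preserves the structure of the integrand.

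Finally, the almost-sure statement that $\vartheta_0,\tilde\vartheta_0\in\ell^2(D^*)$ is an immediate consequence of the finiteness of the two expectations at $\vepsilon = 0$. I do not anticipate a substantive obstacle here: the argument rests only on Fubini, the explicit Gaussian characteristic function of Brownian motion, and the summability of $\sum_{k\in D^*}(1+k^2)^{-1}$. The main point to highlight is that the $(1+k^2)^{-1}$ decay reflects precisely the \emph{weak} UV character of the Fröhlich model emphasized in the introduction, which is exactly what makes renormalization unnecessary here, in contrast to the Nelson model.
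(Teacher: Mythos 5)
Your proof is correct and establishes the lemma by a route genuinely different from the paper's. You compute $\Ex_{\bs{x}}\bigl[|\vartheta_\vepsilon(k)|^2\bigr]$ directly mode-by-mode, split into the diagonal ($i=j$) and off-diagonal ($i\neq j$) contributions, and use the explicit Gaussian characteristic function of Brownian increments: for $i=j$ you get exactly $\ex^{-k^2|s-t|/2}$ (with no $\bs{x}$-dependence), and for $i\neq j$ you get a unit-modulus phase $\ex^{-\im k(x_j-x_i)}$ times $\ex^{-k^2(s+t)/2}$, so the $\bs{x}$-dependence disappears when you take absolute values. Your time-integral estimates $\min(\beta^2,\,4\beta/k^2)$ and $\bigl(2/(k^2+2)\bigr)^2$ are correct, and $\sum_{k\in D^*}(1+k^2)^{-1}<\infty$ closes the argument. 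The paper instead applies It\^{o}'s formula to $\Psi_\vepsilon(s,\bs{x}_s)=-g_L^{1/2}\sum_j\ex^{-s}\ex^{-\im k x_{j,s}}\ex^{-\vepsilon k^2}$ to rewrite $\vartheta_\vepsilon$ as a sum $\mathscr{X}_{\beta,\vepsilon}+\mathscr{Y}_{\beta,\vepsilon}$ of a boundary term (bounded pathwise, hence deterministically summable in $k$) and a stochastic integral (estimated via the It\^{o} isometry). Your approach is shorter and more elementary for the $L^2$ bound at hand. What the paper's decomposition buys is reuse: $\mathscr{X}_{\beta,\vepsilon}$ and $\mathscr{Y}_{\beta,\vepsilon}$ (defined in Eq.~\eqref{DefXY}) reappear in Lemma~\ref{ExTh} and Lemma~\ref{Th^8}, where exponential moments $\Ex_{\bs{x}}\bigl[\ex^{a\|\vartheta_\vepsilon\|_{\ell^2}^2}\bigr]$ and eighth-moment convergence are needed, and there one really does want a pathwise-bounded piece plus a martingale piece amenable to Girsanov's theorem and the Burkholder--Davis--Gundy inequality; a bare second-moment Gaussian computation like yours does not directly yield those stronger estimates. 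So your proof would cleanly replace the paper's proof of this lemma in isolation, but would not save the machinery set up here that the later lemmas depend on.
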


\begin{proof}
We adapt the arguments from \cite{MATTE2018}. The proofs for \(\vartheta_{\vepsilon}\) and \(\tilde \vartheta_{\vepsilon}\) are analogous, so we only present the details for \(\vartheta_0\). For each $\vepsilon\ge 0$, let
\be
\Psi_{\vepsilon}(s,\boldsymbol{x}_s) \deq -g_L^{1/2}\sum_{j=1}^N \ex^{-s}\,\ex^{-\mathrm{i}k x_{j,s}}\ex^{-\vepsilon k^2}. \label{DefPsi}
\ee
By Itō's formula,
\be
\Psi_{\vepsilon}(\beta, \bs{x}_{\beta}) + N
= -\Bigl(1 + \tfrac{N}{2}k^2\Bigr)\int_0^\beta \Psi_{\vepsilon}(s, \bs{x}_s)\,ds \;-\;\mathrm{i}\sum_j k \int_0^\beta \Psi_{\vepsilon, j}(s, x_{j, s}) \,dx_{j,s}, \label{ItoF}
\ee
where \(\Psi_{\vepsilon, j}(s, x_{j, s}) \deq -g_L^{1/2}\ex^{-s}\,\ex^{-\mathrm{i}k x_{j,s}}\ex^{-\vepsilon k^2}\). Hence,
\be
\vartheta_{\vepsilon} = \int_0^\beta \Psi_{\vepsilon}\,ds
= -\frac{\Psi_{\vepsilon}(\beta, \bs{x}_{\beta})+N}{\,1 + \tfrac{N}{2}k^2\,}
- \sum_j \frac{\mathrm{i}k}{\,1 + \tfrac{N}{2}k^2\,}\,\int_0^\beta \Psi_{\vepsilon, j} \,dx_{j,s}
\deq \mathscr{X}_{\beta, \vepsilon}+\mathscr{Y}_{\beta, \vepsilon}. \label{DefXY}
\ee
We estimate  both terms on the right-hand side separately.  We first note
\be
\| \mathscr{X}_{\beta, \vepsilon}\|^2_{\ell^2}
\le N^2(1+g_L^{1/2})^2 \sum_{k\in D^*} \left(1+\frac{N}{2}k^2\right)^{-2}, \label{BoundX}
\ee
which implies that $\sup_{\bs x} \Ex_{\bs{x}}[\|\mathscr{X}_{\beta, \vepsilon}\|^2]<\infty$. Next,
\begin{align*}
\mathbb{E}_{\bs x}\left[
\|\mathscr{Y}_{\beta, \vepsilon}\|_{\ell^2}^2
\right]
&=\sum_{i, j}
\sum_{k\in D^*}
\left|\frac{\im k}{1+\frac{N}{2}k^2}\right|^2 
\mathbb{E}_{\bs x}\left[
\int_0^\beta \Psi_{\vepsilon, i} dx_{i, s}
\int_0^\beta \Psi_{\vepsilon, j}dx_{j, s}
\right]
\\
&=\sum_{i, j}
\sum_{k\in D^*}
\left|\frac{\im k}{1+\frac{N}{2}k^2}\right|^2 
\delta_{i, j}
\left(\int_0^\beta \mathbb{E}_{\bs x}[|\Psi_{\vepsilon, i}|^2] ds\right)
\\
&\le N g_L
\sum_{k\in D^*}
\left|\frac{\im k}{1+\frac{N}{2}k^2}\right|^2 
\frac{1}{2}(1-\ex^{-2\beta })
<\infty. 
\end{align*}
This completes the proof.
\end{proof}

By Lemma \ref{u}, the operators \(a^\dagger(\vartheta_0)\) and \(a(\tilde \vartheta_0)\) are well-defined almost surely. With this in mind, the following proposition holds:

\begin{Prop}\label{FKItype}
\label{fkf2}
Let \(F,G \in \mathfrak{K}_{D^{N,(p)}}\). Then
\[
\langle F \,\vert\, \ex^{-\beta H_{\rm F}} G\rangle
= \int_{D^{N,(p)}} \mathbb{E}_{\boldsymbol{x}}\Bigl[\ex^{S_{\rm el}}\,\mathbbm{1}_{\{\tau_p>\beta\}}\,\langle F(\boldsymbol{x}_0)\,\vert\, \varXi_0\, G(\boldsymbol{x}_\beta)\rangle_{L^2(\mathcal{Q})}\Bigr]\; d\boldsymbol{x}.
\]
Here, let us recall from Lemma \ref{u} that \(\varXi_0\) is indeed well-defined.
\end{Prop}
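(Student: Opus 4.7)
The strategy is to pass to the limit $\vepsilon\to +0$ in the identity of Proposition \ref{fkf1}. Nelson's strong resolvent convergence $H_\vepsilon \to H_{\rm F}$ immediately gives $\langle F \mid \ex^{-\beta H_\vepsilon} G\rangle \to \langle F \mid \ex^{-\beta H_{\rm F}} G\rangle$, so the task reduces to passing the limit inside the path integral on the right-hand side. I plan to achieve this by establishing almost-sure pointwise convergence of the $\F$-valued integrand (along a suitable subnet), producing a dominating function that is integrable with respect to $d\bs{x}\otimes dP$, and invoking the dominated convergence theorem.

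The first task is to prove, for almost every path, the $\ell^2(D^*)$-convergences $\vartheta_\vepsilon \to \vartheta_0$ and $\tilde\vartheta_\vepsilon\to \tilde\vartheta_0$. Using the Itō decomposition $\vartheta_\vepsilon = \mathscr{X}_{\beta,\vepsilon}+\mathscr{Y}_{\beta,\vepsilon}$ from \eqref{DefXY}, the boundary piece $\mathscr{X}_{\beta,\vepsilon}$ converges in $\ell^2$ thanks to the summable envelope \eqref{BoundX} and dominated convergence on $D^*$; the stochastic piece $\mathscr{Y}_{\beta,\vepsilon}$ converges in $L^2(dP)$ by Itō isometry applied to the integrand differences $(\ex^{-\vepsilon k^2}-1)\Psi_{0,j}$, yielding almost-sure convergence along a subnet.

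The second and principal task is to promote this to the $\F$-convergence $\varXi_\vepsilon G(\bs{x}_\beta) \to \varXi_0 G(\bs{x}_\beta)$ for a.e. path. The factor $\ex^{-\beta N_{\rm ph}}$ sandwiched between the two Weyl-type exponentials is crucial: it damps each $n$-boson sector by $\ex^{-\beta n}$, so that the operator $\mathcal{W}_\vepsilon \deq \ex^{a(\vartheta_\vepsilon)^*}\ex^{-\beta N_{\rm ph}}\ex^{a(\tilde\vartheta_\vepsilon)}$ is bounded on $\F$ with operator norm controlled by a quantity depending only on $\|\vartheta_\vepsilon\|_{\ell^2}$ and $\|\tilde\vartheta_\vepsilon\|_{\ell^2}$, and the map $(\eta,\tilde\eta)\mapsto \ex^{a(\eta)^*}\ex^{-\beta N_{\rm ph}}\ex^{a(\tilde\eta)}$ is continuous in the $\ell^2$-topology. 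Combined with the scalar convergence $S_{{\rm eff},\vepsilon}\to S_{{\rm eff},0}$ extracted along a (further) subnet from Proposition \ref{S-Prop}\,(ii), this yields the desired $\F$-norm convergence of the integrand.

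Finally, a dominating function is obtained by applying the Cauchy--Schwarz inequality to bound the path integrand by $\|F(\bs{x}_0)\|_\F\,\|G(\bs{x}_\beta)\|_\F\,\ex^{S_{\rm el}}\ex^{S_{{\rm eff},\vepsilon}}\mathbbm{1}_{\{\tau_p>\beta\}}\,\|\mathcal{W}_\vepsilon\|_{\rm op}$; the uniform-in-$\vepsilon$ moment estimates required to render this jointly $(\bs{x},\omega)$-integrable follow from Lemma \ref{u}, Proposition \ref{S-Prop}\,(i), and the Dirichlet heat-kernel bound used in Step 1 of the proof of Theorem \ref{PathIntF}\,(iv) under the assumption $U\in L^\infty$. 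The extension to general $U$ is then carried out by the same truncation argument as in Step 2 of that proof. The main obstacle is the second task, where one must transfer $\ell^2$-convergence of the form factors into convergence of the unbounded creation-operator exponential $\ex^{a(\vartheta_\vepsilon)^*}$ on the range of $\ex^{-\beta N_{\rm ph}}$; the exponential boson-number damping is precisely the mechanism that makes the required estimates possible.
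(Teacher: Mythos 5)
Your proposal reaches the same conclusion via a genuinely different organization of the limit argument. The paper does not pass to almost-sure convergence along subnets at all; instead it introduces \(\mathscr{P}_{\vepsilon}(\bs{x}) \deq \Ex_{\bs{x}}\bigl[\ex^{S_{\rm el}}\mathbbm{1}_{\{\tau_p>\beta\}}\langle F(\bs{x}_0)\,|\,\varXi_{\vepsilon}G(\bs{x}_\beta)\rangle\bigr]\) and proves directly that \(\mathscr{P}_{\vepsilon}\to\mathscr{P}_0\) in \(L^1(D^{N,(p)})\) by quantitative moment estimates: the locally Lipschitz operator-norm bound of Proposition~\ref{bound} (\(\|Q_{\vepsilon}-Q_0\|\lesssim\|\vartheta_{\vepsilon}-\vartheta_0\|J_{1,\vepsilon}+\|\tilde\vartheta_{\vepsilon}-\tilde\vartheta_0\|J_{2,\vepsilon}\)), combined with uniform exponential moments of \(\|\vartheta_{\vepsilon}\|_{\ell^2},\|\tilde\vartheta_{\vepsilon}\|_{\ell^2}\) (Lemma~\ref{ExTh}) and an eighth-moment convergence \(\Ex_{\bs{x}}[\|\vartheta_{\vepsilon}-\vartheta_0\|_{\ell^2}^8]\to0\) obtained uniformly in \(\bs{x}\) via Burkholder--Davis--Gundy (Lemma~\ref{Th^8}). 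What each route buys: your a.s.-convergence-plus-dominated-convergence strategy is more intuitive, but the subnet you extract must work simultaneously for the convergences of \(\vartheta_{\vepsilon}\), \(\tilde\vartheta_{\vepsilon}\) and \(S_{\mathrm{eff},\vepsilon}\) and jointly for \(d\bs{x}\otimes dP\)-a.e. \((\bs{x},\omega)\), which requires passing through the product measure (your "along a (further) subnet" glosses over this); the paper's purely quantitative route avoids any subnet extraction and stays in expectation throughout. The "continuity in the \(\ell^2\)-topology" you invoke is exactly the Lipschitz estimate of Proposition~\ref{bound}, so both approaches share the same essential operator input. One more minor difference: the truncation step for general \(U\) that you borrow from Step~2 of Theorem~\ref{PathIntF}\,(iv) is not needed here; under \hyperlink{H2}{\bf (H.2)} one simply bounds \(\ex^{S_{\rm el}}\le\ex^{-\beta\inf U}\) uniformly, which is what Lemma~\ref{bound0}\,(2) and the Appendix~B proof rely on.
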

The proof of this proposition is provided in Appendix~\ref{AppB}.

From Proposition \ref{fkf2}, we obtain
\be
\bigl(\ex^{-\beta H_{\rm F}}G\bigr)(\boldsymbol{x})
= \mathbb{E}_{\boldsymbol{x}}\Bigl[\ex^{S_{\rm el}}\,\mathbbm{1}_{\{\tau_p>\beta\}}\,\varXi_0\,G(\boldsymbol{x}_\beta)\Bigr], \label{FKtype}
\ee
for almost every \(\boldsymbol{x}\in D^{N,(p)}\).

\subsection{Proof of Theorem \ref{Hyp2} : Existence}\label{PfHyC}
It suffices to show the following:

\begin{Prop}\label{Hyp3}
Let \(t > 0\) and \(q > 2\) be such that \(t < \beta/3\) and \(2 < q < \frac{2}{1 - t/\log 4}\). Then, for every \(G \in L^2(D^{N,(p)} \times \mathcal{Q})\), there exists a constant \(C > 0\), independent of \(G\), such that
\[
\|J^{-1}\,\ex^{-\beta H_{\rm F}}\,J\,G\|_{L^q(D^{N,(p)} \times \mathcal{Q})}
\;\le\;
C\,\|G\|_{L^2(D^{N,(p)} \times \mathcal{Q})}.
\]
In particular, the operator \(J^{-1}\,\ex^{-\beta H_{\rm F}}\,J\) maps \(L^2(D^{N,(p)} \times \mathcal{Q})\)  into \(L^q(D^{N,(p)} \times \mathcal{Q})\) for all \(p \in \{1, \dots, N\}\).
\end{Prop}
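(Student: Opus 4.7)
The plan is to combine the Feynman--Kac formula of Proposition~\ref{FKItype} with Nelson's hypercontractivity of $\ex^{-t N_{\rm ph}}$ in the Schr\"odinger representation. For $G \in L^2(D^{N,(p)} \times \mathcal{Q})$, the formula of Proposition~\ref{FKItype}, transported to the Schr\"odinger picture via $J$, yields
\[
(J^{-1}\,\ex^{-\beta H_{\rm F}}\,J\,G)(\bs{x})
= \Ex_{\bs{x}}\bigl[\,\ex^{S_{\rm el}}\,\mathbbm{1}_{\{\tau_p>\beta\}}\,(J^{-1}\varXi_0 J)\,G(\bs{x}_\beta)\bigr]
\]
for almost every $\bs{x} \in D^{N,(p)}$. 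I would take the $L^q(\mathcal{Q})$-norm pointwise in $\bs{x}$ and apply Minkowski's integral inequality; this reduces the proposition to a pointwise (in $\bs{x}$ and in the Brownian path) operator-norm estimate of $J^{-1}\varXi_0 J$ from $L^2(\mathcal{Q})$ into $L^q(\mathcal{Q})$.

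The central step is to establish such a pointwise bound. Writing $\varXi_0 = \ex^{S_{\rm eff,0}}\,\ex^{a^*(\vartheta_0)}\,\ex^{-\beta N_{\rm ph}}\,\ex^{a(\tilde\vartheta_0)}$ and splitting $\ex^{-\beta N_{\rm ph}} = \ex^{-t N_{\rm ph}}\,\ex^{-(\beta - 2t) N_{\rm ph}}\,\ex^{-t N_{\rm ph}}$ (which is legitimate since $t < \beta/3$), I would pair each outer $\ex^{-t N_{\rm ph}}$ with the adjacent factor $\ex^{a^*(\vartheta_0)}$ or $\ex^{a(\tilde\vartheta_0)}$, so that the unbounded Fock-space exponentials are absorbed into bounded operators on $L^2(\mathcal{Q})$ at the price of Gaussian multiplicative factors controlled by $\|\vartheta_0\|_{\ell^2}^2$ and $\|\tilde\vartheta_0\|_{\ell^2}^2$. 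The remaining middle factor $\ex^{-(\beta-2t)N_{\rm ph}}$ then supplies the hypercontractive transition $L^2(\mathcal{Q}) \to L^q(\mathcal{Q})$ provided by Nelson's theorem, and the admissible range $2 < q < 2/(1 - t/\log 4)$ emerges as the quantitative threshold of this combined estimate. Lemma~\ref{u} furnishes uniform-in-$\bs{x}$ moment bounds on $\|\vartheta_0\|_{\ell^2}^2$ and $\|\tilde\vartheta_0\|_{\ell^2}^2$, while Proposition~\ref{S-Prop} together with hypothesis~\hyperlink{H2}{\bf (H.2)} controls the exponential moments of $S_{\rm el}$ and $S_{\rm eff,0}$.

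With this pointwise bound in hand, I would raise the inequality to the $q$-th power, integrate over $\bs{x} \in D^{N,(p)}$, interchange path and spatial integrals by Fubini, and apply H\"older's inequality together with the uniform exponential moments from the previous step to conclude $\|J^{-1}\ex^{-\beta H_{\rm F}} J\,G\|_{L^q(D^{N,(p)}\times\mathcal{Q})} \le C\,\|G\|_{L^2(D^{N,(p)}\times\mathcal{Q})}$. The main obstacle is precisely the pointwise $L^2 \to L^q$ estimate for $J^{-1}\varXi_0 J$: the exponentials $\ex^{a^*(\vartheta_0)}$ and $\ex^{a(\tilde\vartheta_0)}$ are unbounded in the Schr\"odinger representation, so part of the smoothing supplied by $\ex^{-\beta N_{\rm ph}}$ must be spent to tame them while retaining enough hypercontractive strength to reach $L^q$; doing this accounting tightly enough to yield the explicit threshold $q < 2/(1 - t/\log 4)$ is the delicate quantitative point.
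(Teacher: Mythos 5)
Your overall strategy (Feynman--Kac formula, hypercontractivity of a chunk of $\ex^{-\cdot N_{\rm ph}}$, Proposition~\ref{bound}-type bounds to tame the Weyl exponentials, and moment estimates in the path variable) matches the paper's ingredients, but the way you arrange them has a gap that the paper specifically avoids.

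You split
$\ex^{-\beta N_{\rm ph}} = \ex^{-t N_{\rm ph}}\,\ex^{-(\beta-2t)N_{\rm ph}}\,\ex^{-tN_{\rm ph}}$,
absorb the two outer factors into $\ex^{a(\vartheta_0)^*}\ex^{-tN_{\rm ph}}$ and $\ex^{-tN_{\rm ph}}\ex^{a(\tilde\vartheta_0)}$, and use the remaining middle piece $\ex^{-(\beta-2t)N_{\rm ph}}$ to go from $L^2(\mathcal{Q})$ to $L^q(\mathcal{Q})$. But the composition then reads, from right to left, $L^2 \to L^2 \to L^q \to{} ?$, and the last arrow is $J^{-1}\ex^{a(\vartheta_0)^*}\ex^{-tN_{\rm ph}}J$ acting on $L^q(\mathcal{Q})$. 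Proposition~\ref{bound} only controls this operator on $\F \simeq L^2(\mathcal{Q})$; nothing in the paper (and nothing you cite) gives $L^q\to L^q$ boundedness, and that is not a formality — $\ex^{a(\vartheta_0)^*}$ becomes a Gaussian shift-and-multiply in the Schr\"odinger picture whose $L^q$ mapping properties for $q>2$ require a separate estimate. Relatedly, your closing remark that the threshold $q< 2/(1-t/\log 4)$ "emerges" from your decomposition is not consistent with it: if $\ex^{-(\beta-2t)N_{\rm ph}}$ were the hypercontractive factor, Lemma~\ref{hyp} would give the condition $\beta - 2t > (1-\tfrac{2}{q})\log 4$, i.e.\ a window governed by $\beta-2t$, not by $t$ as stated in the Proposition.

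The paper circumvents this by never putting the hypercontractive step inside the operator product. It inserts $\one = \ex^{-tN_{\rm ph}}\ex^{tN_{\rm ph}}$ at the far left, bounds
\[
\bigl\|J^{-1}\ex^{-\beta H_{\rm F}}JG(\bs{x})\bigr\|_{L^q(\mathcal{Q})}
\le \bigl\|J^{-1}\ex^{tN_{\rm ph}}\ex^{-\beta H_{\rm F}}JG(\bs{x})\bigr\|_{L^2(\mathcal{Q})}
\]
by Lemma~\ref{hyp} (this is where the condition $t>(1-\tfrac2q)\log4$ enters, explaining the range of $q$), and only then invokes the Feynman--Kac formula. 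Because $\ex^{tN_{\rm ph}}$ is deterministic it passes inside the Wiener expectation, is commuted through $\ex^{a(\vartheta_0)^*}$ via $\ex^{tN_{\rm ph}}\ex^{a(\vartheta_0)^*}=\ex^{a(\ex^t\vartheta_0)^*}\ex^{tN_{\rm ph}}$, and from there the entire remaining estimate lives in $\F = L^2(\mathcal{Q})$, where Proposition~\ref{bound} and Lemma~\ref{u} apply directly. If you want to salvage your ordering, you would need an independent $L^q$-boundedness estimate for $J^{-1}\ex^{a(f)^*}\ex^{-tN_{\rm ph}}J$; otherwise, the paper's rearrangement is the fix.
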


To prove this proposition, we recall the following well-known fact:
\begin{Lemm}\label{hyp}
Let \(2<q<4\) and \(t > \bigl(1-\tfrac{2}{q}\bigr)\log 4\). Then
\[
\|J^{-1}\ex^{-tN_{\rm ph}}J\,F\|_{L^q(\mathcal{Q})}
\;\le\;
\|F\|_{L^2(\mathcal{Q})}.
\]
\end{Lemm}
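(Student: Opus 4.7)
The plan is to derive Lemma \ref{hyp} directly from Nelson's classical hypercontractivity theorem for the bosonic number operator. In the Schrödinger representation, $J^{-1}\ex^{-tN_{\rm ph}}J$ is unitarily equivalent to the second quantization of the one-mode Ornstein--Uhlenbeck contraction $\ex^{-t}$, and Nelson's theorem asserts that this operator is a contraction from $L^p(\mathcal{Q})$ to $L^q(\mathcal{Q})$ (for $1<p\le q<\infty$) if and only if $\ex^{2t}\ge(q-1)/(p-1)$. Setting $p=2$, the desired estimate $\|J^{-1}\ex^{-tN_{\rm ph}}JF\|_{L^q(\mathcal{Q})}\le\|F\|_{L^2(\mathcal{Q})}$ holds whenever $\ex^{2t}\ge q-1$, so the proof reduces to checking that this inequality follows from the assumed hypothesis on $t$.

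The remaining step is purely arithmetic. Since $t>(1-2/q)\log 4$ implies $\ex^{2t}>4^{2(1-2/q)}$, it suffices to establish the scalar inequality
\[
4^{2(1-2/q)}\ge q-1\qquad\text{for all }q\in(2,4).
\]
Under the substitution $x=2(1-2/q)\in(0,1)$, for which $q-1=(2+x)/(2-x)$, this reduces to showing $4^x(2-x)\ge 2+x$ on $[0,1]$. Setting $g(x)=4^x(2-x)-(2+x)$, one has $g(0)=0$ and a short computation gives $g'(x)=4^x[(2-x)\log 4-1]-1>0$ on $[0,1]$, whence $g\ge 0$ throughout.

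No genuine obstacle arises because the lemma is essentially a known fact: the threshold $t>(1-2/q)\log 4$ assumed here is in fact strictly larger than Nelson's sharp threshold $t\ge\tfrac{1}{2}\log(q-1)$ on $(2,4)$, so the statement is a (slightly weakened) specialization of the classical theorem. The only point worth noting in a self-contained presentation would be the passage from single-mode hypercontractivity, proved via the Mehler kernel, to $L^2(\mathcal{Q},d\mu)$ by the standard tensor-product argument for the second quantization functor. Accordingly, the lemma is included here as a ready-made external input for the proof of Proposition \ref{Hyp3}, and I would expect the paper to cite it rather than reprove it.
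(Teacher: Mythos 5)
Your proposal is correct, and it anticipates the paper's treatment exactly: the paper proves this lemma by a one-line citation to \cite[Example 1.66]{Hiroshima2020}, i.e.\ it defers to Nelson's hypercontractivity bound just as you do. Your added arithmetic reduction of the threshold $t>(1-2/q)\log 4$ to Nelson's sharp condition $\ex^{2t}\geq q-1$ via the substitution $x=2(1-2/q)$ is valid (the unimodality of $x\mapsto 4^x[(2-x)\log 4 -1]$ on $[0,1]$ together with the endpoint values $2\log 4-1$ and $4(\log 4-1)$, both exceeding $1$, gives $g'>0$), so it is a correct self-contained filling-in of the cited fact.
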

\begin{proof}
See, e.g., \cite[Example 1.66]{Hiroshima2020}. 
\end{proof}

\subsubsection{ Proof of Proposition \ref{Hyp3}}
Note that
\[
\int_{D^{N,(p)}\times \mathcal{Q}}\!\!\Bigl\lvert J^{-1}\,\ex^{-\beta H_{\rm F}}\,J\,G(\boldsymbol{x})\Bigr\rvert^q\,d\mu\,d\boldsymbol{x}
\;=\;\int_{D^{N,(p)}} \Bigl\|J^{-1}\,\ex^{-\beta H_{\rm F}}\,J\,G(\boldsymbol{x})\Bigr\|_{L^q(\mathcal{Q})}^q\,d\boldsymbol{x}.
\]
By Lemma \ref{hyp}, \(J^{-1}\,\ex^{-tN_{\rm ph}}\,J\) is hypercontractive from \(L^2(\mathcal{Q})\) into \(L^q(\mathcal{Q})\). Therefore,
\begin{align*}
\int_{D^{N,(p)}} \Bigl\|J^{-1}\,\ex^{-\beta H_{\rm F}}\,J\,G(\boldsymbol{x})\Bigr\|_{L^q(\mathcal{Q})}^q\,d\boldsymbol{x}
&=\int_{D^{N,(p)}}
\Bigl\|J^{-1}\,\ex^{-tN_{\rm ph}}\,\ex^{tN_{\rm ph}}\,\ex^{-\beta H_{\rm F}}\,J\,G(\boldsymbol{x})\Bigr\|_{L^q(\mathcal{Q})}^q\,d\boldsymbol{x}\\
&\le\;\int_{D^{N,(p)}}
\Bigl\|J^{-1}\,\ex^{tN_{\rm ph}}\,\ex^{-\beta H_{\rm F}}\,J\,G(\boldsymbol{x})\Bigr\|_{L^2(\mathcal{Q})}^q\,d\boldsymbol{x}.
\end{align*}

From  Eq. \eqref{FKtype} we observe that
\begin{align*}
& \left\|J^{-1} \ex^{tN_{\rm ph}}\ex^{-\beta H_{\rm F}}J G(\bs{x})\right\|_{L^2(\mathcal{Q})}\\
&= \left\|\ex^{tN_{\rm ph}}\ex^{-\beta H_{\rm F}}JG(\bs{x})\right\|_{\F}\\
&= \left\|\mathbb{E}_{\bs x}\left[\mathbbm{1}_{\left\{\tau_p>\beta \right\}} \ex^{S_{\rm eff, 0}}\ex^{S_{\rm el}}\ex^{tN_{\rm ph}}\ex^{a(\vartheta_0)^*} \ex^{-\beta N_{\rm ph}} \ex^{a(\tilde \vartheta_0)} 
JG(\bs{x}_\beta)\right]\right\|_{\F}.
\end{align*}
Moreover,
\begin{align*}
&\left\|\ex^{tN_{\rm ph}}\ex^{a(\vartheta_0)^*} \ex^{-\beta N_{\rm ph}} \ex^{a(\tilde \vartheta_0)} 
JG(\bs{x}_\beta)\right\|_{\F}\\
&=\left\|\ex^{a(\ex^t\vartheta_0)^*} \ex^{tN_{\rm ph}}\ex^{-\beta N_{\rm ph}} \ex^{a(\tilde \vartheta_0)} 
JG(\bs{x}_\beta)\right\|_{\F}\\
&
=\left\|\ex^{a(\ex^t\vartheta_0)^*} \ex^{-(\beta/3)N_{\rm ph}} \ex^{tN_{\rm ph}}\ex^{-(\beta/3)N_{\rm ph}} \ex^{-(\beta/3)N_{\rm ph}}\ex^{a(\tilde \vartheta_0)} 
JG(\bs{x}_\beta)\right\|_{\F}\\
&
\leq
\left\|\ex^{a(\ex^t\vartheta_0)^*} \ex^{-(\beta/3)N_{\rm ph}}\right\| 
\cdot \left\|\ex^{tN_{\rm ph}}\ex^{-(\beta/3)N_{\rm ph}} \right\|
\cdot \left\|\ex^{-(\beta/3)N_{\rm ph}}\ex^{a(\tilde \vartheta_0)} \right\| \cdot
\|G(\bs{x}_\beta)\|_{L^2(\mathcal{Q})}\\
&
\leq
\left\|\ex^{a(\ex^t\vartheta_0)^*} \ex^{-(\beta/3)N_{\rm ph}}\right\| 
\cdot \left\|\ex^{-(\beta/3)N_{\rm ph}}\ex^{a(\tilde \vartheta_0)} \right\| 
\cdot\|G(\bs{x}_\beta)\|_{L^2(\mathcal{Q})}. 
\end{align*} 
Above, we used the identity \(\ex^{tN_{\rm ph}}\,\ex^{a(\vartheta_0)^*}=\ex^{a(\ex^t\vartheta_0)^*}\,\ex^{tN_{\rm ph}}\).

Together with Lemma \ref{bound0}, it follows that
\begin{align*}
&\int_{D^{N, (p)}}  \left\|J^{-1} \ex^{-\beta H_{\rm F}}J G(\bs{x})\right\|_{L^q(\mathcal{Q})}^q d\bs{x}\\
&\leq \int_{D^{N, (p)}}  
\left(
\mathbb{E}_{\bs x}\left[
\mathbbm{1}_{\left\{\tau_p>\beta \right\}} \ex^{ S_{ 0}}
\|\ex^{a(\ex^t\vartheta_0)^*} \ex^{-(\beta/3)N_{\rm ph}}\|  
\cdot \|\ex^{-(\beta/3)N_{\rm ph}}\ex^{a(\tilde \vartheta_0)} \| 
\cdot \|G(\bs{x}_\beta)\|\right]\right)^qd\bs{x}\\
&\leq \int_{D^{N, (p)}}  
\mathbb{E}_{\bs x}\left[\mathbbm{1}_{\left\{\tau_p>\beta \right\}}
\ex^{2 S_{0}}
\|\ex^{a(\ex^t\vartheta_0)^*} \ex^{-(\beta/3)N_{\rm ph}}\|^2  
\cdot \|\ex^{-(\beta/3)N_{\rm ph}}\ex^{a(\tilde \vartheta_0)} \|^2 \right]^{q/2}\\
&\quad \quad \times \mathbb{E}_{\bs x}\left[\mathbbm{1}_{\left\{\tau_p>\beta \right\}} \|G(\bs{x}_\beta)\|^2\right]^{q/2}d\bs{x}\\
&\leq C \int_{D^{N, (p)}}  
\mathbb{E}_{\bs x}\left[\mathbbm{1}_{\left\{\tau_p>\beta \right\}} \|G(\bs{x}_\beta)\|^2\right]^{q/2}d\bs{x}\\
&= C \int_{D^{N, (p)}}\left| \left( \ex^{-\beta T_{D^{N, (p)}}} \|G(\cdot)\|^2\right)(\bs{x})\right|^{q/2}d\bs{x}. 
 \end{align*}
Here,  $T_{D^{N,(p)}}$ denotes the Dirichlet Laplacian on $D^{N,(p)}$. 
Since \(\ex^{-\beta T_{D^{N,(p)}}}\) is hypercontractive and maps \(L^{q/2}(D^{N,(p)})\) into \(L^{1}(D^{N,(p)})\), we get
\begin{align*}
\int_{D^{N, (p)}} \left|\left(\ex^{-\beta T_{D^{N, (p)}}} \|G(\cdot)\|^2\right)(\bs{x})\right|^{q/2}d\bs{x}\leq
\left(
\int_{D^{N, (p)}}  \|G(\bs{x})\|^2 d\bs{x}\right)^{q/2}.
\end{align*}
Hence,
\[
\Bigl\|J^{-1}\,\ex^{-\beta H_{\rm F}}\,J\,G\Bigr\|_{L^q(D^{N,(p)}\times \mathcal{Q})}
\;\le\;C\,\|G\|_{L^2(D^{N,(p)}\times \mathcal{Q})}. 
\]
This completes the proof of  Proposition \ref{Hyp3}
\qed

\begin{Lemm}\label{bound0}
Let \(a,b,c,d\ge0\). Then:
\begin{align*}
\begin{array}{ll}
(1) \sup_{\vepsilon\ge 0}\sup_{\bs{x}}\mathbb{E}_{\bs x}\left[\ex^{aS_{\rm eff, \vepsilon}}\right]<\infty;& 
(2) \sup_{\bs{x}\in D^{N, (p)}}\mathbb{E}_{\bs x}\left[\mathbbm{1}_{\{\tau_p>\beta\}}\ex^{bS_{\rm el}}\right]<\infty; \\
(3) \sup_{\bs{x}}\mathbb{E}_{\bs x}\left[\left\|\ex^{a(\ex^t\vartheta_0)^*} \ex^{-(\beta/3)N_{\rm ph}}\right\|^c\right]<\infty; & 
(4) \sup_{\bs{x}}\mathbb{E}_{\bs x}\left[\left\|\ex^{-(\beta/3)N_{\rm ph}}\ex^{a(\tilde \vartheta_0)} \right\|^d\right]<\infty.
\end{array}
\end{align*}
\end{Lemm}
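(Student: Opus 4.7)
The plan is to dispatch (1) and (2) directly from results already in place, and to reduce (3) and (4) to a uniform exponential-integrability statement for $\|\vartheta_0\|_{\ell^2}^2$ (respectively $\|\tilde\vartheta_0\|_{\ell^2}^2$) under the Wiener measure.

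For (1), the case $\vepsilon>0$ is exactly Proposition \ref{S-Prop}(i). To extend to $\vepsilon=0$, Proposition \ref{S-Prop}(ii) combined with Markov's inequality yields $S_{{\rm eff},\vepsilon}\to S_{{\rm eff},0}$ in probability as $\vepsilon\to 0^+$; extracting an almost-surely convergent subsequence and applying Fatou's lemma gives
\[
\Ex_{\bs x}\!\left[\ex^{aS_{{\rm eff},0}}\right] \;\le\; \liminf_{\vepsilon\to 0^+}\Ex_{\bs x}\!\left[\ex^{aS_{{\rm eff},\vepsilon}}\right] \;<\;\infty,
\]
uniformly in $\bs x$. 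Part (2) is elementary: hypothesis \hyperlink{H2}{\bf (H.2)} provides $c_0\deq \inf_{\bs y\in D^N}U(\bs y)>-\infty$, and on the event $\{\tau_p>\beta\}$ the path $(\bs x_s)_{s\in[0,\beta]}$ stays inside $D^{N,(p)}\subset D^N$, so $S_{\rm el}=-\int_0^{\beta}U(\bs x_s)\,ds\le -\beta c_0$ pathwise and $\mathbbm{1}_{\{\tau_p>\beta\}}\ex^{bS_{\rm el}}\le \ex^{-b\beta c_0}$ uniformly in $\bs x$.

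For (3) and (4), the first ingredient is the standard CCR operator-norm estimate: for every $\eta\in \ell^2(D^*)$ and every $t>0$,
\[
\left\|\ex^{a^{\dagger}(\eta)}\ex^{-tN_{\rm ph}}\right\| \;\le\; \ex^{\kappa_t\|\eta\|_{\ell^2}^2},\qquad \kappa_t \deq \tfrac{1}{2(1-\ex^{-2t})},
\]
which one obtains by testing against unnormalized coherent states (on which $\ex^{a^{\dagger}(\eta)}\ex^{-tN_{\rm ph}}$ acts by sending the label $z$ to $\eta+\ex^{-t}z$) and maximizing the resulting quadratic form; taking adjoints yields the matching bound for $\ex^{-tN_{\rm ph}}\ex^{a(\eta)}$. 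Applied with $t=\beta/3$ and $\eta=\ex^{t}\vartheta_0$ (respectively $\eta=\tilde\vartheta_0$), the inequalities in (3) and (4) reduce to showing
\[
\sup_{\bs x}\Ex_{\bs x}\!\left[\ex^{\lambda\|\vartheta_0\|_{\ell^2}^2}\right]<\infty,\qquad \sup_{\bs x}\Ex_{\bs x}\!\left[\ex^{\lambda\|\tilde\vartheta_0\|_{\ell^2}^2}\right]<\infty
\]
for the relevant $\lambda=\lambda(c),\lambda(d)>0$.

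The main obstacle is this exponential integrability. I plan to reuse the Itō-formula decomposition $\vartheta_0=\mathscr{X}_{\beta,0}+\mathscr{Y}_{\beta,0}$ from the proof of Lemma \ref{u} (Eq. \eqref{DefXY}). The first term is pathwise dominated by a summable deterministic envelope via Eq. \eqref{BoundX} and only contributes a constant. For the second term, write
\[
\|\mathscr{Y}_{\beta,0}\|_{\ell^2}^2=\sum_{k\in D^*} w_k\,|M^{(k)}_{\beta}|^2,\quad w_k \deq \tfrac{k^2}{(1+Nk^2/2)^2},\quad M^{(k)}_{\beta}\deq \sum_{j=1}^N\int_0^{\beta}\Psi_{0,j}(s,x_{j,s})\,dx_{j,s},
\]
and observe that $W\deq\sum_k w_k<\infty$, while the identity $|\Psi_{0,j}(s,\cdot)|=g_L^{1/2}\ex^{-s}$ shows that the real and imaginary quadratic variations of the martingale $M^{(k)}_{\beta}$ are bounded by a constant $A$ independent of $k$ and $\bs x$. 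The exponential-martingale (sub-Gaussian) inequality then supplies $\Ex_{\bs x}[\ex^{\alpha|M^{(k)}_{\beta}|^2}]\le C_{\alpha}$ uniformly in $k$ and $\bs x$ for $\alpha$ below an explicit threshold, and a Jensen convexity argument with respect to the probability measure $w_k/W$ on $D^*$ upgrades the per-mode bound to the $\ell^2$-sum provided $\lambda W$ is below that threshold. The argument for $\tilde\vartheta_0$ is identical after the time reflection $s\mapsto\beta-s$. The most delicate point is ensuring that all constants in the sub-Gaussian step remain uniform in $k$ and $\bs x$ simultaneously, so that the Jensen step can be applied with a single $\lambda$; this ultimately rests on the damping $\ex^{-s}$ in $\Psi_{0,j}$ and the $k^{-2}$ summability of $w_k$, following the template of \cite{Hiroshima2021, MATTE2018}.
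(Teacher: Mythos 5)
Your (1) and (2) coincide with the paper's route: (1) is Proposition \ref{S-Prop} (your Fatou extension to $\vepsilon=0$ is a fine way to include that case), and (2) is the elementary pathwise bound from \hyperlink{H2}{\textbf{(H.2)}}. Your reduction of (3) and (4) via the CCR operator-norm estimate (the paper's Proposition \ref{bound}) to finiteness of \(\sup_{\bs{x}}\Ex_{\bs{x}}\bigl[\ex^{\lambda\|\vartheta_0\|_{\ell^2}^2}\bigr]\) and its \(\tilde\vartheta_0\) analogue is also the paper's route — this finiteness is Lemma \ref{ExTh} in Appendix B, which the paper's one-line proof of Lemma \ref{bound0} is (implicitly, apparently mis-citing Proposition \ref{bound}) invoking. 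The genuine gap is in your proposed proof of that exponential-integrability input.

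Since \(c,d\ge 0\) in (3) and (4) are arbitrary and Proposition \ref{bound} turns the operator norms into exponentials of constant multiples of \(\|\ex^t\vartheta_0\|^2\) and \(\|\tilde\vartheta_0\|^2\), what is actually needed is \(\sup_{\bs{x}}\Ex_{\bs{x}}\bigl[\ex^{\lambda\|\vartheta_0\|^2}\bigr]<\infty\) for \emph{every} \(\lambda>0\). Your mode-by-mode argument cannot supply this: at \(\vepsilon=0\) the quadratic variation \(\langle M^{(k)}\rangle_\beta=Ng_L(1-\ex^{-2\beta})/2\deq A\) is a deterministic constant, so each \(M^{(k)}_\beta\) is genuinely complex Gaussian with variance \(A\), and therefore \(\Ex_{\bs{x}}\bigl[\ex^{\alpha|M^{(k)}_\beta|^2}\bigr]=\infty\) once \(\alpha\) crosses the threshold \(1/A\). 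No bookkeeping repairs this: the sub-Gaussian estimate is sharp, and after the Jensen step you only get the conclusion for \(\lambda W\) below the threshold, whereas the statement of Lemma \ref{bound0} requires it for arbitrarily large \(\lambda\). The paper's Lemma \ref{ExTh} avoids the threshold by not decoupling the modes: via Itō's formula it represents \(\|\mathscr{Y}_{\beta,\vepsilon}\|^2\) itself (up to a deterministic constant) as a single stochastic integral \(\int_0^\beta \bs{\varUpsilon}_{\vepsilon,s}\cdot d\bs{x}_s\), and then exploits the \emph{pathwise} (deterministic, \(\vepsilon\)- and \(\bs{x}\)-uniform) bound on \(\int_0^\beta|\bs{\varUpsilon}_{\vepsilon,s}|^2\,ds\) together with the Girsanov/Cauchy--Schwarz trick to conclude \(\Ex_{\bs{x}}\bigl[\ex^{\alpha\|\mathscr{Y}_{\beta,\vepsilon}\|^2}\bigr]\le\ex^{C\alpha^2}\) for \emph{all} \(\alpha\). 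This global martingale representation crucially uses the strong correlation of the \(M^{(k)}\) across modes (they are all driven by the same Brownian paths), which your per-mode plus convexity scheme throws away. You need to replace the sub-Gaussian\(+\)Jensen step with an argument of this type for Lemma \ref{bound0} to hold as stated.
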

\begin{proof}
Statement (1) follows directly from Proposition~\ref{S-Prop}.  
Statement (2) is an immediate consequence of assumption \hyperlink{H2}{\bf(H.2)}.
By applying Proposition~\ref{bound}, statements (3) and (4) are obtained through bounds on the expectations \(\mathbb{E}_{\boldsymbol{x}}\bigl[\ex^{c\,\|\ex^t\vartheta_0\|_{\ell^2}^2}\bigr]\) and \(\mathbb{E}_{\boldsymbol{x}}\bigl[\ex^{c\,\|\tilde\vartheta_0\|_{\ell^2}^2}\bigr]\). The finiteness of these expectations is guaranteed by Proposition \ref{bound}.
\end{proof}

\subsection{Proof of Theorem \ref{Hyp2}: Uniqueness}\label{Unique}
 We demonstrate that, for every \(\beta>0\), the operator \(J^{-1} \ex^{-\beta H_{\rm F}} J\) is positivity-improving on \(L^2\bigl(D^{N,(p)};L^2(\mathcal{Q})\bigr)\).
The proof strategy is inspired by \cite[Theorem 3.51]{Hiroshima2020}.

Before proceeding with the proof, let us recall a fundamental fact. Consider the Hilbert space \(\ell^2(D^*)\) and define an involution \(c\) by
$
(cf)(k) = f^*(-k),
$
for each \(f \in \ell^2(D^*)\). A function \(f\) is called {\bf \(c\)-real} if \(cf = f\). We denote by \(\ell^2_c(D^*)\) the set of all  \(c\)-real elements.
\begin{Lemm}\label{pp}
Suppose that \(f \in \ell_c^2(D^*)\). Then both \(J^{-1} \ex^{-\beta N_{\rm ph}} \ex^{a(f)} J\) and \(J^{-1} \ex^{a(f)^*} \ex^{-\beta N_{\rm ph}} J\) preserve positivity on $L^2(\mathcal{Q})$.
\end{Lemm}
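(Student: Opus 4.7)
The plan is to perform all computations in the Schrödinger representation $L^2(\mathcal{Q},d\mu)$ and to factor each of the two operators into a product of three positivity-preserving maps. Recall first that $J^{-1}\ex^{-\beta N_{\rm ph}}J$ is the Ornstein--Uhlenbeck (Mehler) semigroup, which is well known to be positivity-preserving (in fact positivity-improving). Next, since $f\in\ell^2_c(D^*)$, both the field operator $\phi(f)\deq\tfrac{1}{\sqrt 2}(a(f)+a(f)^*)$ and the conjugate momentum $\pi(f)\deq\tfrac{\im}{\sqrt 2}(a(f)^*-a(f))$ are self-adjoint on $\F$ with $[\phi(f),\pi(f)]=\im\|f\|_{\ell^2}^2$; in the Schrödinger representation, $J^{-1}\phi(f)J$ is multiplication by a real-valued Gaussian random variable $X_f$ on $(\mathcal{Q},d\mu)$, while $\{\ex^{\im t\pi(f)}\}_{t\in\BbbR}$ is realized as the Segal--Weyl translation group along the Cameron--Martin direction $f$.

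Using $a(f)=\tfrac{1}{\sqrt 2}(\phi(f)+\im\pi(f))$ and the Baker--Campbell--Hausdorff formula (applicable because the commutator $-\|f\|_{\ell^2}^2/2$ is a scalar), I obtain
\[
\ex^{a(f)}=\ex^{\|f\|_{\ell^2}^2/4}\,\ex^{\phi(f)/\sqrt 2}\,\ex^{\im\pi(f)/\sqrt 2},\qquad
\ex^{a(f)^*}=\ex^{-\|f\|_{\ell^2}^2/4}\,\ex^{\phi(f)/\sqrt 2}\,\ex^{-\im\pi(f)/\sqrt 2}.
\]
In the Schrödinger picture, $J^{-1}\ex^{\phi(f)/\sqrt 2}J$ is multiplication by the strictly positive function $\ex^{X_f/\sqrt 2}$, and therefore preserves positivity. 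The Segal--Weyl shift $J^{-1}\ex^{\pm\im\pi(f)/\sqrt 2}J$ takes the form $(U\psi)(\omega)=\ex^{h_\pm(\omega)}\,\psi(\omega\mp f/\sqrt 2)$ with a real-valued density $h_\pm$ supplied by the Cameron--Martin formula, and it preserves positivity since both translation of the argument and multiplication by $\ex^{h_\pm}$ send nonnegative functions to nonnegative functions. Composing these factors with the Mehler semigroup yields the claim for both $J^{-1}\ex^{-\beta N_{\rm ph}}\ex^{a(f)}J$ and $J^{-1}\ex^{a(f)^*}\ex^{-\beta N_{\rm ph}}J$.

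The crux of the argument is verifying the Cameron--Martin form of the Segal--Weyl operator in the Schrödinger representation; this is where the hypothesis $f\in\ell^2_c(D^*)$ is essential, since the Gaussian measure $\mu$ is quasi-invariant only under real Cameron--Martin translations. I would establish this identity on the dense subspace spanned by the exponential vectors $E(g)=\ex^{a^*(g)}\Omega$, $g\in\ell^2_c(D^*)$, whose Schrödinger images are the explicit strictly positive Wick exponentials $\ex^{\phi(g)-\|g\|^2/2}$; on this dense set both sides of the claimed identity reduce to coherent-state overlap formulas that can be computed explicitly, after which extension to all of $L^2(\mathcal{Q})$ is standard. Once this technical point is secured the lemma follows at once, and it is then combined with Proposition \ref{FKItype} and the Perron--Frobenius theorem to yield the strict positivity (and hence uniqueness) of the ground state of $H_{\rm F}\restriction\mathfrak{K}_{D^{N,(p)}}$ asserted in Theorem \ref{Hyp2}.
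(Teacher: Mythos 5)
Your proof is correct in substance, but it follows a genuinely different route from the one in the paper. The paper does not factor $\ex^{a(f)}$ at all; instead it approximates an arbitrary nonnegative $F\in L^2(\mathcal{Q})$ by nonnegative Schwartz functions $F_n\bigl(\phi(f_1),\dots,\phi(f_n)\bigr)$ of finitely many $c$-real fields, writes $F_n$ via its Fourier transform, and computes directly that $J^{-1}\ex^{a(f)}J$ shifts the arguments by real constants $\langle f\mid f_j\rangle$, which manifestly preserves nonnegativity; it then composes with the (positivity-improving) Ornstein--Uhlenbeck semigroup and obtains the second operator by taking adjoints. Your argument instead uses the Baker--Campbell--Hausdorff identity to split $\ex^{a(f)}$ into a positive scalar, a multiplication by $\ex^{\phi(f)/\sqrt 2}>0$, and the Segal--Weyl translation $\ex^{\im\pi(f)/\sqrt 2}$, and then appeals to the Cameron--Martin theorem to see that this last unitary is a real translation composed with a strictly positive density. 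Both approaches hinge on the same point---that in the Schrödinger representation the only thing $\ex^{a(f)}$ does beyond positive multiplications is translate by a \emph{real} Cameron--Martin vector, which is exactly where $f\in\ell^2_c(D^*)$ enters---but the paper keeps everything at the level of the field variables $\phi(f_j)$, while you work at the level of the underlying measure space. Your route is arguably more conceptual and makes the role of the Cameron--Martin density explicit; the paper's route avoids the domain bookkeeping for the unbounded factor $\ex^{\phi(f)/\sqrt 2}$ by never producing it. One small algebraic slip that does not affect the argument: the shift implemented by $\ex^{\pm\im\pi(f)/\sqrt 2}$ is by $\pm f$ rather than $\pm f/\sqrt 2$ (with the normalization $[\phi(f),\pi(f)]=\im\|f\|^2$), but since all that matters for positivity is that the translation is real and the prefactor is positive, the conclusion stands. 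You should also note, as you partly do, that the decomposition identities must first be verified on a dense core (the exponential/coherent vectors) and then extended using the a priori boundedness of $\ex^{-\beta N_{\rm ph}}\ex^{a(f)}$ and $\ex^{a(f)^*}\ex^{-\beta N_{\rm ph}}$ from Proposition \ref{bound}.
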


\proof
For \(f \in \ell_c^2(D^*)\), we define a multiplication operator on \(L^2(\mathcal{Q})\) by
$$
\phi(f) \;=\; 2^{-1/2} \,J^{-1}\,\overline{(\,a(f) \;+\; a(f)^*)}\,J,
$$
where the overbar on an operator denotes its closure. Let \(F \in L^2(\mathcal{Q})\) be any non-negative, non-zero function. It can be approximated by functions of the form \(F_n\bigl(\phi(f_1),\dots,\phi(f_n)\bigr)\), where \(F_n \in \mathscr{S}(\mathbb{R}^n)\) with \(F_n \ge 0\), and $f_1, \dots, f_n\in \ell^2_c(D^*)$. Here, \(\mathscr{S}(\mathbb{R}^n)\) denotes the Schwartz space on \(\mathbb{R}^n\). 
By applying the Fourier transform, we can express
$$
F_n\bigl(\phi(f_1),\dots,\phi(f_n)\bigr)
= (2\pi)^{-\tfrac{n}{2}}
\int_{\mathbb{R}^n} \hat{F}_n(k)\,\exp\left(\im \sum_{j}k_j\,\phi(f_j)\right)\, dk.
$$
From this representation, it follows that
$$
J^{-1}\ex^{a(f)}J\,F_n\bigl(\phi(f_1),\dots,\phi(f_n)\bigr)
= F_n\bigl(\phi(f_1) + \langle f | f_1 \rangle, \dots, \phi(f_n) + \langle f | f_n \rangle\bigr).
$$
Therefore,
$
J^{-1}\ex^{a(f)}J\,F_n\bigl(\phi(f_1),\dots,\phi(f_n)\bigr) \geq 0.
$
Moreover, since $J^{-1}\ex^{-\beta N_{\rm ph}}J$ is positivity-improving, we also obtain
$
J^{-1}\ex^{-\beta N_{\rm ph}}\,\ex^{a(f)}J\,F_n\bigl(\phi(f_1),\dots,\phi(f_n)\bigr) \geq 0.
$
Hence,
$
J^{-1}\ex^{-\beta N_{\rm ph}}\,\ex^{a(f)}J\,F \geq 0.
$
Taking the adjoint, we conclude that
$
J^{-1}\ex^{a(f)^*}\ex^{-\beta N_{\rm ph}}J
$
is likewise a positivity-preserving operator.
\qed

\begin{Prop}\label{PIEx}
$J^{-1}\ex^{-\beta H_{\rm F}}J$ is positivity-improving for every $\beta>0$.
\end{Prop}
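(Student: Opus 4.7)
The plan is to use the Feynman--Kac-type formula of Proposition~\ref{FKItype} to transfer the positivity-improving property from the well-known operator $J^{-1}\ex^{-\beta N_{\rm ph}}J$ to the full heat semigroup. Conjugating the formula by $J$ yields
\[
\bigl\langle F,\, J^{-1}\ex^{-\beta H_{\rm F}}J\, G \bigr\rangle
= \int_{D^{N, (p)}} \mathbb{E}_{\boldsymbol{x}}\Bigl[\mathbbm{1}_{\{\tau_p>\beta\}}\, \ex^{S_{\rm el}}\, \bigl\langle F(\boldsymbol{x}),\, J^{-1}\varXi_0 J\, G(\boldsymbol{x}_\beta)\bigr\rangle_{L^2(\mathcal{Q})}\Bigr]\, d\boldsymbol{x},
\]
so it will suffice to show that, for $F,G \ge 0$ with $F,G \ne 0$, the pathwise integrand is nonnegative and strictly positive on a set of positive product measure.

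I first check that $\vartheta_0,\tilde{\vartheta}_0 \in \ell^2_c(D^*)$: from the explicit formulas in Proposition~\ref{fkf1} specialized to $\varepsilon=0$, a direct calculation gives $\vartheta_0(-k)^*=\vartheta_0(k)$, and similarly for $\tilde{\vartheta}_0$. Writing $\ex^{-\beta N_{\rm ph}}=\ex^{-(\beta/2)N_{\rm ph}}\ex^{-(\beta/2)N_{\rm ph}}$, I split symmetrically
\[
\varXi_0 \;=\; \ex^{S_{\rm eff, 0}}\,\Bigl(\ex^{a(\vartheta_0)^*}\,\ex^{-(\beta/2)N_{\rm ph}}\Bigr)\,\Bigl(\ex^{-(\beta/2)N_{\rm ph}}\,\ex^{a(\tilde{\vartheta}_0)}\Bigr),
\]
so that, after conjugation by $J$, each parenthesized factor is positivity-preserving by Lemma~\ref{pp} applied with $\beta/2$. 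Since $\ex^{S_{\rm eff, 0}}>0$ by Theorem~\ref{PathIntF}(i), it follows at once that $J^{-1}\varXi_0 J$ is positivity-preserving, and hence so is $J^{-1}\ex^{-\beta H_{\rm F}}J$.

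To upgrade preservation to improvement, I revisit the proof of Lemma~\ref{pp}: in the Schrödinger representation $J^{-1}\ex^{a(f)}J$ acts as a translation on cylinder functions, hence sends every nonzero nonnegative vector to a nonzero nonnegative one. Writing each parenthesized block as $J^{-1}\ex^{-(\beta/4)N_{\rm ph}}J\cdot J^{-1}\ex^{-(\beta/4)N_{\rm ph}}\ex^{a(f)}J$ and invoking the classical fact that $J^{-1}\ex^{-(\beta/4)N_{\rm ph}}J$ is strictly positivity-improving on $L^2(\mathcal{Q})$ shows that both blocks, and therefore $J^{-1}\varXi_0 J$, are in fact positivity-improving. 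To conclude, I pick sets $A,B \subset D^{N, (p)}$ of positive Lebesgue measure on which the slices $F(\boldsymbol{x})$ and $G(\boldsymbol{y})$ are nonzero elements of $L^2(\mathcal{Q})$. Since $D^{N, (p)}$ is the Cartesian product of two open Weyl chambers, it is open and connected, so the Dirichlet heat kernel $P_{D^{N, (p)}}^\beta(\boldsymbol{x},\boldsymbol{y})$ is strictly positive throughout $D^{N, (p)}\times D^{N, (p)}$. Consequently, for every $\boldsymbol{x}\in A$ the Wiener measure assigns positive probability to the event that the path remains in $D^{N, (p)}$ up to time $\beta$ and terminates in $B$; on such paths $G(\boldsymbol{x}_\beta)\ne 0$, so the inner product is strictly positive, and integrating against $F$ over $A$ yields a strictly positive value.

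The main obstacle I anticipate is the rigorous upgrade from positivity-preserving in Lemma~\ref{pp} to positivity-improving of $J^{-1}\varXi_0 J$. The translation formula in the proof of Lemma~\ref{pp} is established on Schwartz cylinder functions only, so one must verify density of the nonzero-nonnegative cones, together with boundedness of the composition $J^{-1}\ex^{-(\beta/4)N_{\rm ph}}\ex^{a(f)}J$, for which the operator-norm estimates underpinning Lemma~\ref{bound0} should suffice.
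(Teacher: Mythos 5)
Your overall strategy is the same as the paper's: start from the Feynman--Kac formula of Proposition~\ref{fkf2}, factor $\varXi_0$ into pieces whose positivity properties can be read off from Lemma~\ref{pp} and the improving property of $J^{-1}\ex^{-tN_{\rm ph}}J$, and then combine with strict positivity of the Dirichlet heat kernel on $D^{N,(p)}$. The paper, however, uses a $\beta/3$ split \(\ex^{a(\vartheta_0)^*}\ex^{-(\beta/3)N_{\rm ph}}\cdot\ex^{-(\beta/3)N_{\rm ph}}\cdot\ex^{-(\beta/3)N_{\rm ph}}\ex^{a(\tilde\vartheta_0)}\), so that the two outer factors only need to be positivity-\emph{preserving} and injective, and the lone middle factor does all the improving; you use a $\beta/2$ split into two factors and try to show that each of them is itself positivity-improving.

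This is where there is a genuine gap. The factorization $J^{-1}\ex^{-(\beta/4)N_{\rm ph}}J\cdot J^{-1}\ex^{-(\beta/4)N_{\rm ph}}\ex^{a(f)}J$ describes only the \emph{second} block $\ex^{-(\beta/2)N_{\rm ph}}\ex^{a(\tilde\vartheta_0)}$ (preserving first, improving last, which does yield improving). The first block $\ex^{a(\vartheta_0)^*}\ex^{-(\beta/2)N_{\rm ph}}$ has the creation operator on the \emph{left}, so its natural factorization is $\bigl(\ex^{a(\vartheta_0)^*}\ex^{-(\beta/4)N_{\rm ph}}\bigr)\cdot\ex^{-(\beta/4)N_{\rm ph}}$, which runs the improving operator \emph{first} and the merely-preserving operator \emph{last}. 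A positivity-preserving injective operator applied after an improving one need not produce a strictly positive output (translation-type operators such as $J^{-1}\ex^{a(f)^*}\ex^{-tN_{\rm ph}}J$ are exactly the kind of operator that can map a strictly positive function to one vanishing on a set of positive measure), so "improving $\circ$ preserving $\Rightarrow$ improving" is false as stated. The gap is easy to close: the first block is the adjoint of $\ex^{-(\beta/2)N_{\rm ph}}\ex^{a(\vartheta_0)}$, which has the same form as your second block (with $\vartheta_0$ in place of $\tilde\vartheta_0$), and positivity-improving passes to adjoints because $\langle F, TG\rangle = \langle T^*F, G\rangle$ and $T$ is improving iff $\langle F, TG\rangle>0$ for all nonzero nonnegative $F, G$. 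Alternatively, you only need \emph{one} of the two blocks to be improving and the other to be preserving, injective and moved onto $F$ by the adjoint, which is precisely what the paper's three-factor split makes automatic. You should also note explicitly, as you implicitly do, that $\vartheta_0,\tilde\vartheta_0$ are $c$-real so that Lemma~\ref{pp} applies.
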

\begin{proof}
Let $F,G\in  L^2(D^{N, (p)}\times \mathcal{Q})$. By Proposition \ref{fkf2}, we have
\[
\left\la F\, \Big|\,  J^{-1}\ex^{-\beta H_0} JG\right\ra=
\int_{D^{N, (p)}}{\mathbb E}_{\bs{x}}\left[\ex^{S_{\rm el}}\,\mathbbm{1}_{\{\tau_p>\beta \}}\left\la F(\bs{x}_0)\, \Big|\,  J^{-1} \varXi_0J\, G(\bs{x}_\beta )\right\ra_{L^2(\mathcal{Q})}\right]\!\,d \bs{x}.
\]
Suppose that $F, G\in L^2(D^{N,(p)};L^2(\mathcal{Q}))$ are non-negative, non-zero functions. Then $F(\bs{x})\in L^2(\mathcal{Q})$ for each $\bs{x}\in D^{N,(p)}$, and $F(\bs{x})$ is a non-negative function on $\mathcal{Q}$. Similarly, $G(\bs{x})\in L^2(\mathcal{Q})$ for each $\bs{x}\in D^{N,(p)}$, and $G(\bs{x})$ is a non-negative function on $\mathcal{Q}$. Let
$
A_F\deq \{\,\bs{x}\in D^{N,(p)}\,:\, F(\bs{x})\not\equiv0\}
$ and $
A_G \deq \{\,\bs{x}\in D^{N,(p)}\,:\, G(\bs{x})\not\equiv0\}.
$
By the assumption on $F$ and $G$, the Lebesgue measures of both $A_F$ and $A_G$ are positive. Define
$
\chi\deq \{(\bs{x}_s)_s\in C_{\bs{x}}([0,\beta];\BbbR^N)\,:\,\bs{x}_{\beta}\in A_G\}.
$
Let $P^{\beta}_{D^{N, (p)}}(\bs{x}, \bs{y})$ be the integral kernel of $\ex^{-\beta T_{D^{N,(p)}}}$. From \cite[Corollary 8.12]{grigor2012heat}, $P^{\beta}_{D^{N,(p)}}(\bs{x}, \bs{y})$ is strictly positive, so
\[
{\mathbb E}_{\bs{x}}\bigl[\mathbbm{1}_\chi\,\mathbbm{1}_{\{\tau_p>\beta \}}\bigr] 
=\int_{A_G} P^{\beta}_{D^{N,(p)}}(\bs{x}, \bs{y})\,d\bs{y} >0.
\]
By Lemma \ref{pp} and the fact that $J^{-1} \ex^{-(\beta /3)N_{\rm ph}} J$ is positivity-improving, we observe that
\begin{align*}
&\left\la F(\bs{x}_0)\,\Big|\,  J^{-1} \varXi_0J\, G(\bs{x}_\beta )\right\ra_{L^2(\mathcal{Q})}\\
&=\ex^{S_{\rm eff,0}}\left\la J^{-1} \ex^{-(\beta /3)N_{\rm ph}}\ex^{a(\bar \vartheta_0)} J\,F(\bs{x}_0)\,\Big|\,
J^{-1} \ex^{-(\beta /3)N_{\rm ph}} J\,
J^{-1}\ex^{-(\beta /3)N_{\rm ph}} \ex^{a(\tilde \vartheta_0)}J\, G(\bs{x}_\beta )\right\ra_{L^2(\mathcal{Q})}>0
\end{align*}
for every path $(\bs{x}_s)_s\in \chi$ with $\bs{x}_0\in A_F$. Therefore,
\begin{align*}
&\int_{D^{N,(p)}}{\mathbb E}_{\bs{x}}\left[\ex^{S_{\rm el}}\,\mathbbm{1}_{\{\tau_p>\beta \}}\,
\Bigl\la F(\bs{x}_0)\,\Big|\, J^{-1} \varXi_0J\, G(\bs{x}_\beta )\Bigr\ra_{L^2(\mathcal{Q})}\right]\,d \bs{x}\\
&\quad\geq
\int_{A_F}{\mathbb E}_{\bs{x}}\left[\mathbbm{1}_\chi\, \ex^{S_{\rm el}}\,\mathbbm{1}_{\{\tau_p>\beta \}}
\Bigl\la F(\bs{x}_0)\,\Big|\,  J^{-1} \varXi_0J\, G(\bs{x}_\beta )\Bigr\ra_{L^2(\mathcal{Q})}\right]\,d\bs{x}.
\end{align*}
Since the integrand
$ 
 \ex^{S_{\rm el}}\, 
\left\la F(\bs{x}_0)\,\Big|\,  J^{-1} \varXi_0J\, G(\bs{x}_\beta )\right\ra_{L^2(\mathcal{Q})}
$
is strictly positive for all paths $(\bs{x}_s)_s\in \chi$ with $\bs{x}_0\in A_F$ and the Lebesgue measure of $A_F$ is positive, it follows that
$
\bigl\langle F\,\big|\;  J^{-1}\ex^{-\beta H_{\rm F}} J\,G\bigr\rangle>0.
$
Then the proposition follows.
\end{proof}

\section{Proof of Theorem \ref{GEBasic}}\label{PfofThmGEBasic}

\subsection*{Proof of (i)}
As stated in \cite{AizenmanLieb}, we also have the following for the ground state energy of $H_{\rm el}$:
\be
E_{\rm el}(|M|)=\mathcal{E}\left(H_{\rm el} \restriction L^2(D^{N, (p)})\right). \notag
\ee
The proof of this equality is identical to that of Proposition \ref{GSEng2}.

Let \(\phi_{\rm g}\) be the normalized ground state of \(H_{\rm el} \restriction L^2(D^{N, (p)})\). By the Perron--Frobenius theorem \cite[Theorem XIII.44]{Reed1978}, \(\phi_{\rm g}\) can be chosen to be strictly positive. Hence, we obtain:
\begin{align*}
\ex^{-\beta E(|M|)}&\ge \left\la \phi_{\rm g}\otimes \vOm\,  \Big|\,  \ex^{-\beta H_{\rm F}} \phi_{\rm g}\otimes \vOm\right\ra\no
&=\int_{D^{N, (p)}} d\bs{x} \Ex_{\bs{x}}\left[\mathbbm{1}_{\left\{\tau_p>\beta \right\}}  
\phi_{\rm g}(\bs{x}_0) \phi_{\rm g}(\bs{x}_{\beta})\,  \ex^{S_0}
\right]\no
&>\int_{D^{N, (p)}} d\bs{x} \Ex_{\bs{x}}\left[\mathbbm{1}_{\left\{\tau_p>\beta \right\}}  
\phi_{\rm g}(\bs{x}_0) \phi_{\rm g}(\bs{x}_{\beta})\right]\no
&=\la \phi_{\rm g}\,  |\,  \ex^{-\beta H_{\rm el}} \phi_{\rm g}\ra=\ex^{-\beta E_{\rm el}(|M|)}.
\end{align*}
Here, the second inequality follows from Theorem \ref{PathIntF} (i). \qed

\subsection*{Proof of (ii)}

\begin{Prop}\label{EFKF}
The following holds:
\begin{align*}
E(|M|)=-\lim_{\beta\to \infty}\frac{1}{\beta}
\log \int_{D^{N, (p)}} d\bs{x}\Ex_{\bs{x}}\left[\mathbbm{1}_{\left\{\tau_p>\beta \right\}}  
\ex^{S_{0}}
\right].
\end{align*}
\end{Prop}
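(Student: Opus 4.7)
The plan is to identify the right-hand integral with a diagonal matrix element of the Fröhlich heat semigroup and then extract the ground state energy via standard spectral asymptotics. The first step is to note that $D^{N,(p)}\subset D^{N}$ has finite Lebesgue measure, so $\mathbbm{1}_{D^{N,(p)}}\in L^{2}(D^{N,(p)})$. Applying Theorem \ref{PathIntF} (iv) with $f=g=\mathbbm{1}_{D^{N,(p)}}$ yields
\[
\int_{D^{N,(p)}} d\bs{x}\,\Ex_{\bs{x}}\!\left[\mathbbm{1}_{\{\tau_{p}>\beta\}}\,\ex^{S_{0}}\right]=\langle v\mid \ex^{-\beta H_{\rm F}}v\rangle,\qquad v\deq \mathbbm{1}_{D^{N,(p)}}\otimes\vOm,
\]
so the proposition reduces to showing $-\beta^{-1}\log\langle v\mid \ex^{-\beta H_{\rm F}}v\rangle \to E(|M|)$ as $\beta\to\infty$.

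The remainder is a two-sided spectral bound. For the lower bound on the exponential decay rate, I would use the lemma preceding Proposition \ref{EFKF} to write $E(|M|)=\mathcal{E}(H_{\rm F}\restriction\mathfrak{K}_{D^{N,(p)}})$, from which the operator inequality $H_{\rm F}\ge E(|M|)$ on $\mathfrak{K}_{D^{N,(p)}}$ gives
\[
\langle v\mid \ex^{-\beta H_{\rm F}}v\rangle\le \|v\|^{2}\,\ex^{-\beta E(|M|)},
\]
hence $\liminf_{\beta\to\infty}-\beta^{-1}\log\langle v\mid \ex^{-\beta H_{\rm F}}v\rangle\ge E(|M|)$. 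For the matching upper bound I would invoke Theorem \ref{Hyp2}, which supplies a unique ground state $\psi_{\rm g}$ of $H_{\rm F}\restriction \mathfrak{K}_{D^{N,(p)}}$ at energy $E(|M|)$; projecting onto $\psi_{\rm g}$ yields
\[
\langle v\mid \ex^{-\beta H_{\rm F}}v\rangle\ge |\langle v\mid \psi_{\rm g}\rangle|^{2}\,\ex^{-\beta E(|M|)},
\]
so $\limsup_{\beta\to\infty}-\beta^{-1}\log\langle v\mid \ex^{-\beta H_{\rm F}}v\rangle\le E(|M|)$, as long as $\langle v\mid \psi_{\rm g}\rangle\ne 0$.

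The only nontrivial step, and the main obstacle, is the nonvanishing of this overlap; everything else is formal spectral calculus. Fortunately, this is settled by the positivity structure already established in Section \ref{PfExGs}: by Theorem \ref{Hyp2}, $\psi_{\rm g}$ is strictly positive in the Schrödinger representation, and the Fock vacuum $\vOm$ corresponds to the constant function $1$ on $\mathcal{Q}$ (as explicitly used in the proof of Lemma \ref{SasekiProp}). Hence $J^{-1}v$ equals $\mathbbm{1}_{D^{N,(p)}}(\bs{x})\cdot 1$, a strictly positive element of $L^{2}(D^{N,(p)}\times\mathcal{Q},d\bs{x}\times d\mu)$, so $\langle v\mid \psi_{\rm g}\rangle>0$. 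Combining the two spectral bounds then concludes the proof.
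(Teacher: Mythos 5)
Your proposal is correct and follows the same route as the paper: apply Theorem \ref{PathIntF}~(iv) with $f=g$ equal to the constant function on $D^{N,(p)}$, identify $E(|M|)$ with $\mathcal{E}(H_{\rm F}\restriction\K_{D^{N,(p)}})$, and extract the rate via the non-vanishing overlap of $\mathbbm{1}_{D^{N,(p)}}\otimes\vOm$ with the strictly positive ground state from Theorem \ref{Hyp2}. You spell out the two-sided spectral bounds that the paper leaves implicit, but the argument is the same.
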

\begin{proof}By Theorem \ref{Hyp2},  the ground state of $H_{\rm F}$ in $\K_{D^{N, (p)}}$ is unique and can be chosen to be strictly positive.

Now, let $\omega$ be the function identically equal to $1$ on $D^{N, (p)}$. Then, $\omega \otimes \vOm$ is identically equal to $1$ on $D^{N, (p)} \times \mathcal{Q}$ and, in particular, must have non-zero overlap with the ground state of $H_{\rm F}\restriction \K_{D^{N, (p)}}$. Consequently,
\begin{align*}
\mathcal{E}\left(H_{\rm F} \restriction \K_{D^{N, (p)}}\right) = -\lim_{\beta \to \infty} \frac{1}{\beta} \log
\left\langle \omega \otimes \vOm\,  \Big|\,  \ex^{-\beta H_{\rm F}} \omega \otimes \vOm \right\rangle
\end{align*}
holds. By combining this with Proposition \ref{GSEng2} and Theorem \ref{PathIntF} (iv), we deduce the desired result.
\end{proof}

From Theorem \ref{PathIntF} (ii) and Proposition \ref{EFKF}, Theorem \ref{GEBasic} (ii) follows immediately. \qed

\section{Concluding Remarks}\label{ConRem}

Let us conclude this paper with several remarks:

\begin{itemize}
\item The results established herein remain valid for Hamiltonians with an ultraviolet cutoff, \( H_{\varepsilon} \). In this case, the intricate discussion regarding the removal of the UV cutoff in Section \ref{FKIF} is rendered unnecessary, and assumption \hyperlink{H2}{\bf (H. 2)} is also superfluous.

\item In \cite{Hakobyan2010}, Hakobyan extended the ordering of energy levels to the \(\mathrm{SU}(N)\) Hubbard model. A similar extension is feasible for the Fröhlich model analyzed in this paper. 
As an intriguing development in a direction distinct from the ordering of energy levels discussed in \cite{Hakobyan2010}, we note the work of \cite{Nachtergaele2011}.

\item In \cite{AizenmanLieb}, Aizenman and Lieb extended the ordering of energy levels to finite temperatures. Whether the results of this paper can be similarly generalized to finite temperatures remains an intriguing open problem, which is currently under investigation \cite{MiyaoEtEl}.

\item An interesting open question is how the inclusion of the quantum electromagnetic field affects the results presented here. If the interaction between the quantum electromagnetic field and the spin is neglected, the ordering of energy levels continues to hold. Conversely, we conjecture that the incorporation of such interactions may lead to a breakdown of the ordering.

\item The behavior of the ordering of energy levels in the thermodynamic limit remains unresolved even in the context of the Schrödinger operator.
\end{itemize}

\appendix

\section{Auxiliary Results on Second Quantization Operators}

In this appendix, we state without proofs several auxiliary results regarding second quantization operators, which will be utilized throughout this paper.

\begin{Prop}\label{bound}
The following operator norm inequalities hold:
\begin{align*}
\left\|\ex^{a(f)^*}\ex^{-\frac{t}{2}N_{\mathrm{ph}}}\right\|\leq 
\begin{cases}
\sqrt{2}\, \ex^{(8/s)\|f\|^2}, & 0 < s < t < 1, \\[0.5em]
\sqrt{2}\, \ex^{8\|f\|^2}, & 1 \leq t.
\end{cases}
\end{align*}
In particular, it follows that
\begin{align*}
\left\|\ex^{a(f)^*}\ex^{-tN_{\mathrm{ph}}}\ex^{a(f)}\right\|\leq 
\begin{cases}
2\, \ex^{(16/s)\|f\|^2}, & 0 < s < t < 1, \\[0.5em]
2\, \ex^{16\|f\|^2}, & 1 \leq t.
\end{cases}
\end{align*}
Moreover, the difference satisfies
\begin{align*}
\left\|\ex^{a(f)^*}\ex^{-\frac{t}{2}N_{\mathrm{ph}}}-\ex^{a(g)^*}\ex^{-\frac{t}{2}N_{\mathrm{ph}}}\right\|
\quad\leq
\begin{cases}
2\sqrt{2}\,\|f-g\|\,\ex^{(2/s)(2\|f\|+2\|g\|+1)^2}, & 0 < s < t < 1, \\[0.5em]
2\sqrt{2}\,\|f-g\|\,\ex^{2(2\|f\|+2\|g\|+1)^2}, & 1 \leq t.
\end{cases}
\end{align*}
\end{Prop}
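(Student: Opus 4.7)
The plan is to prove the three inequalities of Proposition~\ref{bound} in sequence: the first is the main technical estimate, the second follows at once by squaring, and the third is obtained by a first-order Taylor-type reduction to the first.

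For the first bound, I would decompose $\psi\in\F_{\rm fin}$ into particle-number sectors $\psi=\oplus_n\psi^{(n)}$, expand $\ex^{a(f)^*}=\sum_k (a(f)^*)^k/k!$, and compute the $m$-particle component of $\ex^{a(f)^*}\ex^{-tN_{\rm ph}/2}\psi$, namely $\sum_{n\le m}\ex^{-tn/2}(a(f)^*)^{m-n}\psi^{(n)}/(m-n)!$. Applying the standard estimate $\|(a(f)^*)^k\phi^{(n)}\|\le\|f\|^k\sqrt{(n+k)!/n!}\,\|\phi^{(n)}\|$, the triangle inequality, and a weighted Cauchy--Schwarz in $n$ with weights calibrated to the auxiliary parameter $s\in(0,t)$, I would bound each $m$-particle norm and then sum over $m$. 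The role of $s$ is to split the damping factor $\ex^{-tn}=\ex^{-sn}\ex^{-(t-s)n}$: the factor $\ex^{-sn}$ is spent inside Cauchy--Schwarz to dominate the combinatorial term $(n+k)!/(n!(k!)^2)$, while $\ex^{-(t-s)n}$ furnishes a convergent geometric series in the remaining sum over $n$. The numerical constant $\sqrt 2$ emerges from dyadic weights summing to $1/2$, and the exponent $8\|f\|^2/s$ arises after completing the square on the resulting quadratic form.

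The second inequality is immediate: writing $B=\ex^{a(f)^*}\ex^{-tN_{\rm ph}/2}$, one has $\ex^{a(f)^*}\ex^{-tN_{\rm ph}}\ex^{a(f)}=BB^*$, whence $\|BB^*\|=\|B\|^2$, which is the square of the bound on $\|B\|$.

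For the difference bound, I would exploit the commutativity $[a(f)^*,a(g)^*]=0$, which gives $\ex^{a(f)^*}=\ex^{a(g)^*}\ex^{a(f-g)^*}$, and hence, using $\ex^x-1=x\int_0^1\ex^{rx}\,dr$,
\[
\ex^{a(f)^*}-\ex^{a(g)^*}=a(f-g)^*\int_0^1\ex^{a(g+r(f-g))^*}\,dr.
\]
The problem reduces to estimating $a(f-g)^*\ex^{a(h_r)^*}\ex^{-tN_{\rm ph}/2}$ with $h_r=g+r(f-g)$. I would split the damping as $\ex^{-tN_{\rm ph}/2}=\ex^{-(t-s')N_{\rm ph}/2}\ex^{-s'N_{\rm ph}/2}$, commute $a(f-g)^*$ through $\ex^{-(t-s')N_{\rm ph}/2}$ via the intertwining relation $a(h)^*\ex^{-uN_{\rm ph}/2}=\ex^{-uN_{\rm ph}/2}a(\ex^{u/2}h)^*$, and use the elementary bound $\|a(h)^*\ex^{-s'N_{\rm ph}/2}\|\le\|h\|/\sqrt{s'}$, which follows from $\|a(h)^*\phi\|\le\|h\|\|(N_{\rm ph}+1)^{1/2}\phi\|$ together with $\sup_{n\ge 0}(n+1)\ex^{-s'n}\le 1/s'$ for $s'\le 1$. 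Applying the first bound to the remaining factor $\ex^{a(h_r)^*}\ex^{-(t-s')N_{\rm ph}/2}$ and using $\|h_r\|\le\|f\|+\|g\|$, the claimed estimate follows after integrating in $r$. The algebraic form $(2\|f\|+2\|g\|+1)^2$ in the final exponent arises from consolidating $\|h_r\|^2\le(\|f\|+\|g\|)^2$ with the factor $1/\sqrt{s'}$, absorbed into the exponential via $1/\sqrt{s'}\le\ex^{1/(2s')}$, and by choosing $s'$ proportional to $s$.

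The main difficulty is numerical bookkeeping: the argument is not conceptually subtle, but recovering the precise constants $\sqrt 2$, $8/s$, $2\sqrt 2$, and $(2\|f\|+2\|g\|+1)^2$ requires making correct choices of weights at each step and handling the two regimes $t<1$ and $t\ge 1$ separately, with the splitting parameter $s$ essential only in the former (for $t\ge 1$ one may set $s=1$ and pay only a constant).
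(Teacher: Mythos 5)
The paper gives no proof of Proposition~\ref{bound}; it merely cites \cite[Corollaries~1.34 and~1.35]{Hiroshima2020} and \cite{MATTE2018}. Your proposal reconstructs essentially the approach taken in those references, and its three-step structure is sound.

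Your reduction of the second inequality to the first is exactly right: with $B=\ex^{a(f)^*}\ex^{-tN_{\rm ph}/2}$ one indeed has $BB^{*}=\ex^{a(f)^*}\ex^{-tN_{\rm ph}}\ex^{a(f)}$ since $N_{\rm ph}$ is self-adjoint and $(\ex^{a(f)^*})^{*}=\ex^{a(f)}$, hence $\|BB^{*}\|=\|B\|^{2}$, which squares the first bound. Your treatment of the difference bound is also correct in outline: the creation operators $a(f)^{*},a(g)^{*}$ commute, so the interpolation identity $\ex^{a(f)^*}-\ex^{a(g)^*}=a(f-g)^*\int_0^1\ex^{a(h_r)^*}\,dr$ with $h_r=g+r(f-g)$ is valid, and the remaining factor is handled by splitting the damping, using the intertwining $\ex^{uN_{\rm ph}/2}a(h)^{*}\ex^{-uN_{\rm ph}/2}=a(\ex^{u/2}h)^{*}$, the elementary bound $\|a(h)^{*}\ex^{-s'N_{\rm ph}/2}\|\le\|h\|/\sqrt{s'}$ (since $\sup_{n\ge0}(n+1)\ex^{-s'n}\le 1/s'$ for $0<s'\le1$), and then the first inequality applied to the remaining semigroup-times-exponential factor. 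The basic bound on particle sectors, $\|\bigl(a(f)^{*}\bigr)^{k}\phi^{(n)}\|\le\|f\|^{k}\sqrt{(n+k)!/n!}\,\|\phi^{(n)}\|$, is the correct starting point for the first estimate, and the weighted Cauchy--Schwarz after splitting $\ex^{-tn/2}=\ex^{-sn/2}\ex^{-(t-s)n/2}$ is how the auxiliary parameter $s$ enters.

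One caveat, which you yourself flag: reproducing the exact numerical constants ($\sqrt2$, $8/s$, $2\sqrt2$, and $(2\|f\|+2\|g\|+1)^2$) is delicate. The na\"ive allocation $s'=s''=s/2$ in the third step is not automatically tight enough---a quick sanity check with $\|h_r\|$ large shows the quadratic coefficient in the exponent can overshoot $8/s$ unless the binomial coefficient $\binom{n+k}{k}$ is bounded by $\ex^{sn}\bigl(\ex^s/(\ex^s-1)\bigr)^{k}$ (rather than crudely by $2^{n+k}$) and the auxiliary parameters are chosen asymmetrically. So the exponent $8/s$ in the first bound and $(2\|f\|+2\|g\|+1)^2$ in the third are not what one gets after a first pass; they require the sharper binomial estimate and a careful choice of splittings. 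Since you explicitly defer this bookkeeping, the proposal as a strategy is correct; to turn it into a complete proof you would need to carry out precisely the refinements hinted at above, and the two-regime treatment ($t<1$ versus $t\ge1$, with $s$ replaced by $1$ in the latter) that you mention is indeed the right way to organize the final case split.
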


For detailed proofs, we refer the reader, for instance, to \cite[Corollaries 1.34 and 1.35]{Hiroshima2020} and \cite{MATTE2018}.

\section{Proof of Proposition \ref{FKItype}}\label{AppB}
Proposition \ref{FKItype} can be proven by appropriately modifying and applying the ideas presented in \cite{MATTE2018} concerning the Nelson model. (For detailed explanations of the construction of the Feynman--Kac-type formula for the Nelson model, see also \cite[Eqs. (2.51) and (2.52)]{Hiroshima2021}.) Those references focus specifically on the energy renormalization needed for the Nelson model.  As we have already observed, the Fröhlich model analyzed in the present paper does not require energy renormalization upon the removal of the ultraviolet cutoff. Consequently, several aspects of the analysis from \cite{MATTE2018} become simpler in our context. Nevertheless, identifying precisely which parts may be simplified, and conducting those simplifications appropriately, turn out to be more subtle and intricate tasks than one might initially expect. In this appendix, we have carefully executed all such subtle procedures, and provide a clear outline of the resulting simplified proof.

\begin{Lemm}\label{ExTh}
For any $a \in \mathbb{R}$, it holds that
$$
\sup_{\vepsilon\ge 0}\sup_{\bs{x}\in \mathbb{R}^N} \mathbb{E}_{\bs{x}}\left[\mathrm{e}^{a \|\vartheta_{\vepsilon}\|^2_{\ell^2}}\right]<\infty,\quad\sup_{\vepsilon\ge 0}\sup_{\bs{x}\in \mathbb{R}^N} \mathbb{E}_{\bs{x}}\left[\mathrm{e}^{a \|\tilde{\vartheta}_{\vepsilon}\|^2_{\ell^2}}\right]<\infty.
$$
\end{Lemm}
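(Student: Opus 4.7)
The plan is to build on the Itô-formula decomposition from the proof of Lemma~\ref{u}, namely $\vartheta_\vepsilon = \mathscr{X}_{\beta,\vepsilon} + \mathscr{Y}_{\beta,\vepsilon}$ as in Eq.~\eqref{DefXY}, and to upgrade the second-moment bound obtained there to a full exponential-moment bound. Since the deterministic estimate~\eqref{BoundX} gives $\|\mathscr{X}_{\beta,\vepsilon}\|^2_{\ell^2} \le C_1$ for a constant $C_1$ independent of $\vepsilon \ge 0$, of $\bs x$, and of the Brownian path, the elementary inequality $\|\vartheta_\vepsilon\|^2_{\ell^2} \le 2\|\mathscr{X}_{\beta,\vepsilon}\|^2_{\ell^2} + 2\|\mathscr{Y}_{\beta,\vepsilon}\|^2_{\ell^2}$ yields
\[
\Ex_{\bs x}\bigl[\ex^{a\|\vartheta_\vepsilon\|^2_{\ell^2}}\bigr] \le \ex^{2|a|C_1}\,\Ex_{\bs x}\bigl[\ex^{2|a|\,\|\mathscr{Y}_{\beta, \vepsilon}\|^2_{\ell^2}}\bigr],
\]
thereby reducing the first claim to controlling exponential moments of $\|\mathscr{Y}_{\beta,\vepsilon}\|^2_{\ell^2}$ uniformly in $\vepsilon \ge 0$ and $\bs x \in \BbbR^N$. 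The corresponding claim for $\tilde\vartheta_\vepsilon$ will follow from the same recipe, with $\ex^{-s}$ replaced by $\ex^{-(\beta-s)}$ throughout.

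The key observation is that $\mathscr{Y}_{\beta, \vepsilon}$ is the terminal value of an $\ell^2(D^*)$-valued Itô integral of the form $\sum_j \int_0^\beta \Phi_{\vepsilon, j}(s, x_{j,s})\,dx_{j, s}$, whose vector-valued integrand $\Phi_{\vepsilon, j}(s, x)(k)$ contains a decay factor $k/(1+\tfrac{N}{2}k^2)$ in addition to the bounded oscillatory phase $\ex^{-\im kx}$ and the damping $\ex^{-s}\ex^{-\vepsilon k^2}$. Hence
\[
\|\Phi_{\vepsilon, j}(s, x)\|^2_{\ell^2(D^*)} \le g_L \sum_{k \in D^*} \frac{k^2}{\bigl(1+\tfrac{N}{2}k^2\bigr)^2} \le C_2 < \infty,
\]
with $C_2$ independent of $\vepsilon, s, x$, since the summand is $O(k^{-2})$ at infinity. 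Consequently the quadratic variation of the associated $\ell^2$-valued martingale is almost surely bounded by $N C_2 \beta$, uniformly in $\vepsilon, \bs x$, and the Brownian path.

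Applying the vector-valued Burkholder--Davis--Gundy inequality to this $\ell^2$-valued stochastic integral yields polynomial moment bounds of the form $\Ex_{\bs x}\bigl[\|\mathscr{Y}_{\beta, \vepsilon}\|^{2p}_{\ell^2}\bigr] \le c_p (N C_2 \beta)^p$, with the universal constants $c_p$ admitting the Gaussian-type growth $c_p \le (2p-1)!!$ characteristic of Brownian stochastic integrals. Summing the Taylor series of the exponential delivers the desired uniform bound on $\Ex_{\bs x}[\ex^{2|a|\|\mathscr{Y}_{\beta,\vepsilon}\|^2_{\ell^2}}]$ provided $|a|$ lies below a universal threshold determined by $N, g_L, \beta$, and $C_2$.

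The main obstacle will be to extend the bound from $|a|$ below this finite threshold---as naturally produced by BDG---to arbitrary $a \in \BbbR$ as stated. Following the strategy of Matte and M{\o}ller~\cite{MATTE2018}, I expect this extension to require exploiting the purely oscillatory dependence $\ex^{-\im k x_{j, s}}$ of the integrand: one invokes Novikov-type exponential-martingale estimates component by component in $k$, which effectively decouples the frequency modes, and then reassembles the resulting exponential bounds against the $\ell^2$-weights $k^2/(1+\tfrac{N}{2}k^2)^2$. This mode-by-mode combinatorial control of cross-frequency correlations is precisely the subtle input distinguishing the exponential-moment bound of Lemma~\ref{ExTh} from the second-moment bound of Lemma~\ref{u}, and constitutes the technical heart of the argument.
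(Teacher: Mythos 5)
You correctly set up the decomposition $\vartheta_\vepsilon = \mathscr{X}_{\beta,\vepsilon} + \mathscr{Y}_{\beta,\vepsilon}$ from Lemma \ref{u}, correctly reduce the problem to exponential moments of $\|\mathscr{Y}_{\beta,\vepsilon}\|^2_{\ell^2}$, and correctly observe that applying vector-valued BDG with constants $c_p \le (2p-1)!!$ only gives Gaussian tails and hence exponential moments of $\|\mathscr{Y}_{\beta,\vepsilon}\|^2_{\ell^2}$ for $|a|$ \emph{below a finite threshold} depending on $N$, $g_L$, $\beta$. This is precisely the obstruction, and you identify it honestly. However, the passage from a threshold to all $a\in\BbbR$ is the heart of the lemma, and your proposed route to close it --- mode-by-mode Novikov estimates to ``decouple'' frequencies --- is both unexecuted and not what the paper does, and it is not clear that it would work: decoupling in $k$ loses the cancellation you would need, and even per-mode Novikov estimates still only give a finite exponential range for each mode.

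What the paper actually does is apply It\^o's formula a second time, now to the \emph{scalar} process $s \mapsto \|\mathscr{Y}_{s,\vepsilon}\|^2_{\ell^2}$, so that $\|\mathscr{Y}_{\beta,\vepsilon}\|^2_{\ell^2}$ is itself expressed as a real stochastic integral $\int_0^\beta \bs{\varUpsilon}_{\vepsilon,s}\cdot d\bs{x}_s$ with integrand
$\varUpsilon_{j,\vepsilon,s} = \tfrac{2\im k}{1+(N/2)k^2}\langle \mathscr{Y}_{s,\vepsilon}\,|\,\Psi_j(s,x_{j,s})\rangle_{\ell^2}$. The crucial point is that $\int_0^\beta |\bs{\varUpsilon}_{\vepsilon,s}|^2\,ds$ can be bounded by a \emph{deterministic} constant (uniform in $\vepsilon$, $\bs x$, and the path): this follows by splitting $\mathscr{Y}_{s,\vepsilon} = \vartheta_{\vepsilon,s} - \mathscr{X}_{\vepsilon,s}$, using the pointwise bound \eqref{BoundX} for the $\mathscr{X}$ piece and the pointwise boundedness of $\Psi_\vepsilon$ for the $\vartheta_{\vepsilon,s}$ piece, together with the absolutely summable weight $k^2/(1+\tfrac{N}{2}k^2)^2$. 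Once $\|\mathscr{Y}_{\beta,\vepsilon}\|^2$ is written as a one-dimensional stochastic integral with deterministically bounded quadratic variation, the same Girsanov/Cauchy--Schwarz device already proved in Lemma \ref{YLemm} gives
$\bigl(\Ex_{\bs x}[\ex^{a\|\mathscr{Y}_{\beta,\vepsilon}\|^2}]\bigr)^2 \le \Ex_{\bs x}\bigl[\ex^{2a^2\int_0^\beta |\bs{\varUpsilon}_{\vepsilon,s}|^2\,ds}\bigr]$,
and the right-hand side is bounded by $\ex^{2a^2 C}$ for \emph{every} $a\in\BbbR$. In short, the trick is not to fight the factorial growth from BDG but to move the exponent down one level, from $\ex^{a\|\mathscr{Y}\|^2}$ to a Girsanov estimate for the stochastic integral whose terminal value \emph{is} $\|\mathscr{Y}\|^2$. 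You should replace your speculative final paragraph by this second application of It\^o's formula and the Girsanov argument already used in Lemma \ref{YLemm}.
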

\begin{proof}
We prove the statement only for $\vartheta_{\vepsilon}$, since the case of $\tilde{\vartheta}_{\vepsilon}$ can be treated analogously. Let $\mathscr{X}_{\beta, \vepsilon}$ and $\mathscr{Y}_{\beta, \vepsilon}$ be as defined in Eq. \eqref{DefXY}. Then, by the triangle inequality and the Cauchy--Schwarz inequality, we have
\be
\mathbb{E}_{\bs{x}}\left[
\mathrm{e}^{a \|\vartheta_{\vepsilon}\|^2_{\ell^2}}
\right]
 \le
\left(
\mathbb{E}_{\bs{x}}\left[
\mathrm{e}^{2 a \|\mathscr{X}_{\beta, \vepsilon}\|^2_{\ell^2}} 
\right]
\right)^{1/2}
\left(
\mathbb{E}_{\bs{x}}\left[
\mathrm{e}^{2 a \|\mathscr{Y}_{\beta, \vepsilon}\|^2_{\ell^2}} 
\right]
\right)^{1/2}.\notag
\ee
By \eqref{BoundX}, it immediately follows that $\sup_{\bs{x}}\mathbb{E}_{\bs{x}}\left[
\mathrm{e}^{2 a \|\mathscr{X}_{\beta, \vepsilon}\|^2_{\ell^2}} 
\right]<\infty$. Therefore, it suffices to show that $\sup_{\bs{x}}\mathbb{E}_{\bs{x}}\left[
\mathrm{e}^{2 a \|\mathscr{Y}_{\beta, \vepsilon}\|^2_{\ell^2}} 
\right]<\infty$. To this end, we express $\|\mathscr{Y}_{\beta, \vepsilon}\|^2_{\ell^2}$ as
$$
\|\mathscr{Y}_{\beta, \vepsilon}\|^2_{\ell^2}=\int_0^{\beta} \bs{\varUpsilon}_{\vepsilon, s} \cdot d\bs{x}_s,
$$
where $\bs{\varUpsilon}_{\vepsilon, s}=(\varUpsilon_{1, \vepsilon, s}, \dots, \varUpsilon_{N, \vepsilon,  s})$ with
$$
\varUpsilon_{j, \vepsilon, s}\deq \frac{2\im k}{1+\frac{N}{2}k^2}\left\la \mathscr{Y}_{s, \vepsilon}\ |\ \Psi_{j}(s, x_{j, s})\right\ra_{\ell^2}.
$$
Recall that the functions $\Psi_j$ are defined in the proof of Lemma \ref{u}, immediately below Eq.  \eqref{ItoF}. By an argument analogous to that in the proof of Lemma \ref{YLemm}, which employs Girsanov\rq{}s theorem and the Cauchy--Schwarz inequality, we derive the following bound:
$$
\left(
\mathbb{E}_{\bs{x}}\left[
\mathrm{e}^{a \|\mathscr{Y}_{\beta, \vepsilon}\|_{\ell^2}^2}
\right]
\right)^2
\le
\mathbb{E}_{\bs{x}}\left[
\mathrm{e}^{2a^2\int_0^{\beta} |\bs{\varUpsilon}_{\vepsilon, s} |^2 ds}
\right].
$$
We proceed to estimatee $\int_0^{\beta} |\bs{\varUpsilon}_{\vepsilon, s} |^2 ds$ from above. Let $\vartheta_{\vepsilon, s}$ denote the function obtained by replacing $\beta$ with $s$ in the definition of $\vartheta_{\vepsilon}$. Applying Itō's formula \eqref{ItoF}, we obtain
\begin{align*}
\int_0^{\beta} |\bs{\varUpsilon}_{\vepsilon, s} |^2 ds=&4\sum_{j=1}^N \int_0^{\beta}\left|
\left\la \vartheta_{\vepsilon, s} - \mathscr{X}_{\vepsilon, s} \ \Bigg|\ \frac{\im k }{1+\frac{N}{2}k^2} \Psi_{\vepsilon, j}(s, x_{j, s})
\right\ra_{\ell^2} 
\right|^2ds\no
\le & 8\sum_{j=1}^N \int_0^{\beta}\left|
\left\la \vartheta_{\vepsilon, s} \ \Bigg|\ \frac{\im k }{1+\frac{N}{2}k^2} \Psi_{\vepsilon, j}(s, x_{j, s})
\right\ra_{\ell^2} 
\right|^2ds\no
&\quad +8\sum_{j=1}^N \int_0^{\beta}\left|
\left\la \mathscr{X}_{\vepsilon, s}\ \Bigg|\ \frac{\im k }{1+\frac{N}{2}k^2} \Psi_{\vepsilon, j}(s, x_{j, s})
\right\ra_{\ell^2} 
\right|^2ds\no
\deq & I_1+I_2.
\end{align*}
If we define $\Psi_{\vepsilon}$ as in Eq. \eqref{DefPsi}, then $I_1$ admits the following estimate:
\begin{align}
I_1 & \le 16\sum_{j=1}^N \int_0^{\beta}\left|
\left\la \int_0^s \Psi_{\vepsilon}(x, \bs{x}_t)dt \ \Bigg|\ \frac{\im k }{1+\frac{N}{2}k^2} \Psi_{\vepsilon, j}(s, x_{j, s})
\right\ra_{\ell^2} 
\right|^2ds\no
&\le 
16\sum_{k\in D^*}
\left(
\frac{k}{1+\frac{N}{2}k^2}
\right)^2 \int_0^{\beta}g_L^2 N^3(1-\mathrm{e}^{-s})^2 \mathrm{e}^{-2s} ds<\infty. \notag
\end{align}
Therefore, $\sup_{\vepsilon\ge 0}\sup_{\bs{x}} \sup_{(\bs{x}_s)_s} I_1<\infty$.
As for $I_2$, using \eqref{BoundX}, we immediately conclude that $\sup_{\vepsilon\ge 0}\sup_{\bs{x}} \sup_{(\bs{x}_s)_s} I_2<\infty$.
\end{proof}

\begin{Lemm}\label{Th^8}
The following limits hold:
$$
\displaystyle \lim_{\vepsilon\to +0} \sup_{\bs{x}\in \mathbb{R}^N}\mathbb{E}_{\bs{x}}\left[
\|\vartheta_{\vepsilon}-\vartheta_0\|^8_{\ell^2}
\right]=0,
\quad
\displaystyle \lim_{\vepsilon\to +0} \sup_{\bs{x}\in \mathbb{R}^N}\mathbb{E}_{\bs{x}}\left[
\|\tilde{\vartheta}_{\vepsilon}-\tilde{\vartheta}_0\|^8_{\ell^2}
\right]=0.
$$
\end{Lemm}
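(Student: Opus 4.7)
The central observation is that the ultraviolet cutoff enters $\vartheta_\vepsilon$ only through a multiplicative factor: one has $\vartheta_\vepsilon(k)=\ex^{-\vepsilon k^2}\vartheta_0(k)$ pointwise in $k\in D^*$ and along every Brownian path, and the analogous identity holds for $\tilde\vartheta$. Consequently
\[
\|\vartheta_\vepsilon-\vartheta_0\|_{\ell^2}^2=\sum_{k\in D^*}(1-\ex^{-\vepsilon k^2})^2|\vartheta_0(k)|^2,
\]
and the task reduces to moment estimates on this non-negative series.

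Applying Minkowski's inequality in $L^4(\mathbb{P}_{\bs{x}})$ to the above sum yields
\[
\mathbb{E}_{\bs{x}}\!\left[\|\vartheta_\vepsilon-\vartheta_0\|_{\ell^2}^8\right]\le \left(\sum_{k\in D^*}(1-\ex^{-\vepsilon k^2})^2\bigl(\mathbb{E}_{\bs{x}}[|\vartheta_0(k)|^8]\bigr)^{1/4}\right)^{\!4}.
\]
Once the uniform pointwise bound $(\mathbb{E}_{\bs{x}}[|\vartheta_0(k)|^8])^{1/4}\le C/(1+k^2)$ is established with $C$ independent of $\bs{x}\in\BbbR^N$ and $k\in D^*$, the conclusion follows immediately from the dominated convergence theorem on $D^*$, exploiting the summability of $(1+k^2)^{-1}$ over $D^*=\tfrac{\pi}{L}\mathbb{Z}$ and the pointwise decay $1-\ex^{-\vepsilon k^2}\to 0$ as $\vepsilon\to +0$.

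The uniform moment estimate I would obtain by reusing the Itō decomposition $\vartheta_0=\mathscr{X}_{\beta,0}+\mathscr{Y}_{\beta,0}$ recorded in \eqref{DefXY}. The term $\mathscr{X}_{\beta,0}(k)$ is controlled pathwise by $C/(1+k^2)$, thanks to the explicit prefactor $(1+\tfrac{N}{2}k^2)^{-1}$ together with the deterministic estimate $|\Psi_0(\beta,\bs{x}_\beta)|\le N g_L^{1/2}$. For the martingale component $\mathscr{Y}_{\beta,0}(k)$, the Burkholder--Davis--Gundy inequality applied to $M_\beta \deq \sum_{j=1}^N\int_0^\beta \Psi_{0,j}\,dx_{j,s}$ yields $\mathbb{E}_{\bs{x}}[|M_\beta|^8]\le C(Ng_L/2)^4$ uniformly in $\bs{x}$, since the quadratic variation $[M,M]_\beta\le N g_L/2$ is deterministic; the prefactor $\im k/(1+\tfrac{N}{2}k^2)$ then produces the desired $(\mathbb{E}_{\bs{x}}[|\mathscr{Y}_{\beta,0}(k)|^8])^{1/4}\le C k^2/(1+k^2)^2\le C/(1+k^2)$.

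The proof for $\tilde\vartheta$ follows the same template, and this is where the only genuine difficulty lies. A naive Itō decomposition using the time factor $\ex^{-(\beta-s)}$ would generate the characteristic $1-\tfrac{k^2}{2}$, which vanishes near $k^2=2$ and destroys the required $(1+k^2)^{-1}$ decay. I would resolve this resonance by applying Itō's formula not to $\tilde\Psi$ directly but to $\Phi(s,x)=h(s)\ex^{\im k x}$, where the gauge function $h$ is chosen to solve the terminal-value ODE $h'(s)-\tfrac{k^2}{2}h(s)=\ex^{s-\beta}$ with $h(\beta)=0$; its explicit solution $h(s)=(\ex^{s-\beta}-\ex^{k^2(s-\beta)/2})/(1-\tfrac{k^2}{2})$ enjoys the uniform bound $|h(s)|,|h(0)|=O((1+k^2)^{-1})$ on $[0,\beta]$, the apparent singularity at $k^2=2$ being cancelled by a matching zero of the numerator. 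With $h$ in hand one obtains a decomposition of $\tilde\vartheta_0$ structurally identical to that of $\vartheta_0$, and the BDG argument carries over verbatim. This gauge-function construction is the principal technical obstacle; once past it, the proof proceeds in parallel for both $\vartheta$ and $\tilde\vartheta$.
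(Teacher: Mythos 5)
Your proof is correct. For $\vartheta$ it is essentially the paper's argument: both use the factorisation $\vartheta_\vepsilon(k)=\ex^{-\vepsilon k^2}\vartheta_0(k)$, Minkowski's inequality in fourth power, the Itō decomposition of Eq.~\eqref{DefXY}, and Burkholder--Davis--Gundy; you factor out $(1-\ex^{-\vepsilon k^2})^2$ first and estimate $\mathbb{E}_{\bs{x}}[|\vartheta_0(k)|^8]$, while the paper applies the same ingredients directly to the differences $\mathscr{X}_{\beta,\vepsilon}-\mathscr{X}_{\beta,0}$ and $\mathscr{Y}_{\beta,\vepsilon}-\mathscr{Y}_{\beta,0}$, an immaterial re-ordering of the same bookkeeping. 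The genuine divergence is your gauge function for $\tilde\vartheta$. You are right that the naive Itō analog for $\tilde\Psi_0(s,\cdot)\propto\ex^{-(\beta-s)}$ produces $1-\tfrac{k^2}{2}$ in the denominator, vanishing at $k^2=2$, and the terminal-condition ODE with $h(\beta)=0$ that you introduce indeed removes the singularity. But the cure is not needed: the Minkowski step only requires summability over $D^*$ of $\bigl(\mathbb{E}_{\bs{x}}[|\tilde\vartheta_0(k)|^8]\bigr)^{1/4}$, which is a large-$|k|$ matter, and there $|1-\tfrac{k^2}{2}|\sim k^2/2$ supplies the same $1/(1+k^2)$-type decay as $1+\tfrac{k^2}{2}$ does for $\vartheta$; the finitely many modes of $D^*=\tfrac{\pi}{L}\BbbZ$ in any bounded window (including a possible mode at $k^2=2$) admit the uniform pathwise bound $|\tilde\vartheta_0(k)|\le N g_L^{1/2}$. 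So the paper's ``analogous'' claim for $\tilde\vartheta$ survives without any gauge function; your construction is elegant but does more than is required.
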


\begin{proof}
Using \eqref{DefXY}, we obtain
$$
\|\vartheta_{\vepsilon}-\vartheta_0\|^8_{\ell^2} \le 2^7 \left(
\|\mathscr{X}_{\beta, \vepsilon}-\mathscr{X}_{\beta, 0}\|_{\ell^2}^8
+
\|\mathscr{Y}_{\beta, \vepsilon}-\mathscr{Y}_{\beta, 0}\|_{\ell^2}^8
\right).
$$
We estimate each term on the right-hand side separately. First, noting that
$$
\|\mathscr{X}_{\beta, \vepsilon}-\mathscr{X}_{\beta, 0}\|_{\ell^2}^2 \le \sum_{k\in D^*}\frac{(1-\ex^{-\vepsilon k^2})^2}{(1+\frac{N}{2}k^2)^2},
$$
we immediately conclude that
$
\lim_{\vepsilon\to +0} \sup_{\bs{x}} \|\mathscr{X}_{\beta, \vepsilon}-\mathscr{X}_{\beta, 0}\|_{\ell^2}^8 = 0.
$

On the other hand, applying Jensen\rq{}s inequality $\mathbb{E}_{\bs{x}}[Z^{1/4}] \ge (\mathbb{E}_{\bs{x}}[Z])^{1/4}$ and the Cauchy--Schwarz inequality yields
$$
\mathbb{E}_{\bs{x}} \left[
\|\mathscr{Y}_{\beta, \vepsilon}-\mathscr{Y}_{\beta, 0}\|_{\ell^2}^8
\right]
\le \left\{
\sum_{k\in D^*}\left(
\mathbb{E}_{\bs{x}} \left[|\mathscr{Y}_{\beta, \vepsilon}-\mathscr{Y}_{\beta, 0}|^8\right]
\right)^{1/4}
\right\}^4.
$$
Furthermore, using  the Burkholder--Davis--Gundy inequality (see, for example, \cite{Lrinczi2020}), we estimate
\begin{align*}
\mathbb{E}_{\bs{x}} \left[|\mathscr{Y}_{\beta, \vepsilon}-\mathscr{Y}_{\beta, 0}|^8\right]
\le &\sum_{j=1}^N 2^7 \mathbb{E}_{\bs{x}} \left[
\left|
\int_0^{\beta} \frac{\im k}{1+\frac{N}{2}k^2}(\Psi_{\vepsilon, j}-\Psi_{0, j})\, dx_{j, s}
\right|^8
\right] \\
\le &2^{15} \cdot 7^4 \cdot \beta^4 \cdot g_L^4\cdot N \left|
\frac{k (1 - \ex^{-\vepsilon k^2})}{1+\frac{N}{2}k^2}
\right|^8.
\end{align*}
Combining this estimate with the previous inequality, we thus obtain that
\[
\lim_{\vepsilon\to +0} \sup_{\bs{x} \in \mathbb{R}^N} \mathbb{E}_{\bs{x}} \left[
\|\mathscr{Y}_{\beta, \vepsilon}-\mathscr{Y}_{\beta, 0}\|_{\ell^2}^8
\right] = 0,
\]
as required.
\end{proof}

For each $\varepsilon \ge 0$, we define
$$
\mathscr{P}_{\vepsilon}(\bs{x})\deq \Ex_{\bs{x}}\left[
\ex^{S_{\rm el}} \mathbbm{1}_{\{\tau_p> \beta\}} \left\la
F(\bs{x}_0)\ |\ \varXi_{\vepsilon} G(\bs{x}_{\beta})
\right\ra_{\F}
\right].
$$

\begin{Prop}\label{PropP}
The following statements hold:
\begin{itemize}
\item[\rm (i)] For all $\vepsilon \ge 0$, it holds that $\mathscr{P}_{\vepsilon}\in L^1(D^{N,(p)})$.
\item[\rm (ii)] $\mathscr{P}_{\vepsilon}$ converges to $\mathscr{P}_0$ in $L^1(D^{N,(p)})$ as $\vepsilon\to +0$.
\end{itemize}
\end{Prop}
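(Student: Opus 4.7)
The strategy is to prove (i) through a sequence of Cauchy--Schwarz inequalities that reduce $\|\mathscr{P}_\vepsilon\|_{L^1}$ to quantities uniformly controlled by Proposition~\ref{S-Prop}, Lemma~\ref{ExTh}, and Proposition~\ref{bound}; and then to deduce (ii) by proving $\sup_{\bs{x}\in D^{N,(p)}}\Ex_{\bs{x}}[\|\varXi_\vepsilon - \varXi_0\|^2]\to 0$ as $\vepsilon\to +0$ via a telescoping decomposition of $\varXi_\vepsilon-\varXi_0$, after which the same Cauchy--Schwarz chain forces $\|\mathscr{P}_\vepsilon - \mathscr{P}_0\|_{L^1}\to 0$.

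For (i), I split
\[
\varXi_\vepsilon = \ex^{S_{\rm eff,\vepsilon}}\bigl(\ex^{a(\vartheta_\vepsilon)^*}\ex^{-(\beta/2)N_{\rm ph}}\bigr)\bigl(\ex^{-(\beta/2)N_{\rm ph}}\ex^{a(\tilde\vartheta_\vepsilon)}\bigr),
\]
and apply Proposition~\ref{bound} to each pair, obtaining the pathwise bound $\|\varXi_\vepsilon\| \le C_\beta\,\ex^{S_{\rm eff,\vepsilon}}\,\ex^{c_\beta(\|\vartheta_\vepsilon\|_{\ell^2}^2+\|\tilde\vartheta_\vepsilon\|_{\ell^2}^2)}$. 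Applying Cauchy--Schwarz in the Fock inner product under $\Ex_{\bs{x}}$, followed by Cauchy--Schwarz in $\bs{x}$, yields
\[
\|\mathscr{P}_\vepsilon\|_{L^1} \le \|F\|_{L^2}\,\sup_{\bs{x}}\bigl(\Ex_{\bs{x}}[\|\varXi_\vepsilon\|^2]\bigr)^{1/2}\,\Bigl(\int_{D^{N,(p)}}\Ex_{\bs{x}}\!\bigl[\mathbbm{1}_{\{\tau_p>\beta\}}\,\ex^{2S_{\rm el}}\,\|G(\bs{x}_\beta)\|_{\F}^2\bigr]\,d\bs{x}\Bigr)^{1/2}.
\]
The supremum is uniformly bounded in $\vepsilon\ge 0$ by a threefold Hölder inequality applied to the product $\ex^{2S_{\rm eff,\vepsilon}}\cdot\ex^{2c_\beta\|\vartheta_\vepsilon\|^2}\cdot\ex^{2c_\beta\|\tilde\vartheta_\vepsilon\|^2}$, combined with Proposition~\ref{S-Prop} and Lemma~\ref{ExTh}. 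The last integral is identified via the Feynman--Kac formula with $\bigl\langle \ex^{-\beta(T_{D^{N,(p)}}+2U)}\omega,\|G\|_{\F}^2\bigr\rangle_{L^2}$ and dominated by $\ex^{-2\beta\inf U}\|G\|_{L^2}^2$ using \hyperlink{H2}{\bf (H.2)} and the sub-Markov property of the Dirichlet heat semigroup on $D^{N,(p)}$.

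For (ii), the same Cauchy--Schwarz chain applied to $\mathscr{P}_\vepsilon - \mathscr{P}_0$ reduces the problem to showing $\sup_{\bs{x}}\Ex_{\bs{x}}[\|\varXi_\vepsilon - \varXi_0\|^2]\to 0$. Setting $\mathscr{A}_\vepsilon \deq \ex^{a(\vartheta_\vepsilon)^*}\ex^{-\beta N_{\rm ph}}\ex^{a(\tilde\vartheta_\vepsilon)}$, I telescope
\[
\varXi_\vepsilon - \varXi_0 = (\ex^{S_{\rm eff,\vepsilon}} - \ex^{S_{\rm eff,0}})\mathscr{A}_\vepsilon + \ex^{S_{\rm eff,0}}(\mathscr{A}_\vepsilon - \mathscr{A}_0),
\]
and further split $\mathscr{A}_\vepsilon-\mathscr{A}_0$ by peeling off one exponential factor at a time. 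Each resulting operator norm is estimated by the difference bound of Proposition~\ref{bound}, producing factors of the form $C\,\|\vartheta_\vepsilon-\vartheta_0\|_{\ell^2}\,\ex^{c(\|\vartheta_\vepsilon\|+\|\vartheta_0\|+\|\tilde\vartheta_\vepsilon\|+\|\tilde\vartheta_0\|+1)^2}$ (and the symmetric counterpart in $\tilde\vartheta$). Squaring and applying Hölder with exponents $(4,4/3)$, the norm-difference factor becomes $(\Ex_{\bs{x}}[\|\vartheta_\vepsilon-\vartheta_0\|_{\ell^2}^8])^{1/4}$, which tends to zero uniformly in $\bs{x}$ by Lemma~\ref{Th^8}; the conjugate factor is absorbed using the elementary bound $(a_1+\cdots+a_5)^2\le 5\sum a_j^2$ to linearize the exponent, followed by a fourfold Hölder and Lemma~\ref{ExTh}, which supplies $\Ex_{\bs{x}}[\ex^{a\|\vartheta_\vepsilon\|^2}]<\infty$ for every $a\in\mathbb{R}$, uniformly in $\vepsilon\ge 0$ and $\bs{x}$. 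The $S_{\rm eff}$-difference term is treated analogously, transcribing the $L^2$-convergence argument \eqref{DConv0} from Proposition~\ref{S-Prop}.

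The principal obstacle I anticipate is the moment bookkeeping for $\mathscr{A}_\vepsilon-\mathscr{A}_0$: the Hölder exponent pairing must simultaneously accommodate the eighth-power moment bound supplied by Lemma~\ref{Th^8} and the large exponential moments generated by squaring Proposition~\ref{bound}'s difference bound. The unrestricted exponent $a\in\mathbb{R}$ in Lemma~\ref{ExTh} provides precisely the flexibility needed to close the estimate.
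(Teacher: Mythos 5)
Your proposal follows essentially the same architecture as the paper's proof: isolate the Fock-space expectation, bound $\|\varXi_\varepsilon\|$ via Proposition~\ref{bound} and Lemma~\ref{ExTh}, Cauchy--Schwarz twice, and identify the $G$-term through the Dirichlet heat kernel. For~(ii), your telescoping $\varXi_\varepsilon-\varXi_0 = (\ex^{S_{\rm eff,\varepsilon}}-\ex^{S_{\rm eff,0}})\mathscr{A}_\varepsilon + \ex^{S_{\rm eff,0}}(\mathscr{A}_\varepsilon-\mathscr{A}_0)$ is exactly the $I_1+I_2$ split the paper uses (with $Q_\varepsilon = \mathscr{A}_\varepsilon$), and the peeling of the Weyl-type exponentials via the difference bound of Proposition~\ref{bound}, together with the eighth moments from Lemma~\ref{Th^8}, matches the paper closely; the precise H\"older exponents differ, but that is equivalent bookkeeping.

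One point deserves attention. For~(ii) you claim to reduce the problem to showing $\sup_{\bs{x}}\Ex_{\bs{x}}[\|\varXi_\varepsilon-\varXi_0\|^2]\to 0$, i.e.\ convergence \emph{uniform} in $\bs{x}$. For the $\mathscr{A}_\varepsilon-\mathscr{A}_0$ contribution this is fine, because Lemmas~\ref{ExTh} and~\ref{Th^8} are stated with $\sup_{\bs{x}}$. But the $S_{\rm eff}$-difference term rests on Proposition~\ref{S-Prop}(ii) (and the derived bound~\eqref{DConv0}), which gives $\Ex_{\bs{x}}[\cdot]\to 0$ only \emph{pointwise} in $\bs{x}$ --- the original convergence comes from dominated convergence under the Wiener measure at each fixed $\bs{x}$, not uniformly over starting points. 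The paper therefore argues: uniform boundedness via Proposition~\ref{S-Prop}(i), pointwise convergence via Proposition~\ref{S-Prop}(ii), and then dominated convergence in the $\bs{x}$-integral (against $\|F(\bs{x})\|$). You should replace the $\sup_{\bs{x}}$ claim for the $S_{\rm eff}$ factor by this pointwise-plus-DCT route; otherwise the argument as written asserts more than the cited ingredients deliver. With that adjustment, your proof is correct and is essentially the paper's proof.
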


\begin{proof}
For each $\vepsilon \ge 0$, define the operator $Q_{\vepsilon}$ by
\[
Q_{\vepsilon}\deq \ex^{a(\vartheta_{\vepsilon})^*}\ex^{-\beta N_{\rm ph}}\ex^{a(\tilde{\vartheta}_{\vepsilon})}.
\]
Note here that Lemma \ref{u} ensures that the right-hand side is almost surely well-defined even when $\vepsilon=0$. Proposition \ref{bound} gives the inequality
$
\|Q_{\vepsilon}\|\le 2\ex^{16(\|\vartheta_{\vepsilon}\|_{\ell^2}^2+\|\tilde{\vartheta}_{\vepsilon}\|_{\ell^2}^2)}.
$
Therefore, by applying the Cauchy--Schwarz inequality and Lemma \ref{ExTh}, we obtain
\be
\sup_{\vepsilon\ge 0}\sup_{\bs{x}\in \mathbb{R}^N} \Ex_{\bs{x}} [\|Q_{\vepsilon}\|^4]<\infty. \label{Q^4}
\ee
Consequently,
\begin{align*}
&\int_{D^{N,(p)}}|\mathscr{P}_{\vepsilon}(\bs{x})|\,d\bs{x}\\
&\quad \le   \ex^{-\beta \inf U}\int_{D^{N,(p)}} \|F(\bs{x})\|_{\mathscr{F}}
\left(\Ex_{\bs{x}}\left[\ex^{2S_{\mathrm{eff},\vepsilon}}\|Q_{\vepsilon}\|^2\right]\right)^{1/2}
\left(\Ex_{\bs{x}}\left[\mathbbm{1}_{\{\tau_p>\beta\}}\|G(\bs{x}_{\beta})\|_{\mathscr{F}}^2\right]\right)^{1/2}d\bs{x}\\
&\quad \le   C\|F\|\|G\|,
\end{align*}
where
\[
C=\ex^{-\beta \inf U}\sup_{\vepsilon\ge 0}\sup_{\bs{x}\in\mathbb{R}^N}\left(\Ex_{\bs{x}}\left[\ex^{4S_{\mathrm{eff},\vepsilon}}\right]\Ex_{\bs{x}}\left[\|Q_{\vepsilon}\|^4\right]\right)^{1/4}.
\]
By Lemma \ref{bound0} and inequality \eqref{Q^4}, it follows that $C$ is finite. Thus, the proof of (i) is completed.

Now, we prove statement (ii). By the Cauchy--Schwarz inequality,
\begin{align*}
&\|\mathscr{P}_{\vepsilon}-\mathscr{P}_0\|_{L^1}\\
&\quad\le \ex^{-\beta \inf U}\int_{D^{N,(p)}}\|F(\bs{x})\|_{\mathscr{F}}
\left(\Ex_{\bs{x}}\left[\|\ex^{S_{\mathrm{eff},\vepsilon}}Q_{\vepsilon}-\ex^{S_{\mathrm{eff},0}}Q_0\|^2\right]\right)^{1/2}
\left(\Ex_{\bs{x}}\left[\mathbbm{1}_{\{\tau_p>\beta\}}\|G(\bs{x}_{\beta})\|_{\mathscr{F}}^2\right]\right)^{1/2}d\bs{x}.
\end{align*}
We next bound the factor involving the difference between $Q_{\vepsilon}$ and $Q_0$ appearing in the inequality above. By the triangle inequality, we have
\begin{align*}
\Ex_{\bs{x}}\left[\|\ex^{S_{\mathrm{eff},\vepsilon}}Q_{\vepsilon}-\ex^{S_{\mathrm{eff},0}}Q_0\|^2\right]
&\le 2\Ex_{\bs{x}}\left[\|\ex^{S_{\mathrm{eff},\vepsilon}}Q_{\vepsilon}-\ex^{S_{\mathrm{eff},0}}Q_{\vepsilon}\|^2\right]+2\Ex_{\bs{x}}\left[\|\ex^{S_{\mathrm{eff},0}}Q_{\vepsilon}-\ex^{S_{\mathrm{eff},0}}Q_0\|^2\right]\\
&\deq I_1+I_2.
\end{align*}
We analyze the terms $I_1$ and $I_2$ individually. For the first term $I_1$, we  use the inequality
\[
I_1/2\le \left(\Ex_{\bs{x}}\left[|\ex^{S_{\mathrm{eff},\vepsilon}}-\ex^{S_{\mathrm{eff},0}}|^4\right]\right)^{1/2}\left(\Ex_{\bs{x}}[\|Q_{\vepsilon}\|^4]\right)^{1/2}.
\]
Using \eqref{Q^4} and \eqref{DConv0}, we deduce
\[
\sup_{\vepsilon\ge 0}\sup_{\bs{x}\in\mathbb{R}^N}|I_1|<\infty\quad\text{and}\quad\lim_{\vepsilon\to +0}I_1=0.
\]
Next, we estimate the second term $I_2$. By applying the Cauchy--Schwarz inequality again, we obtain
\[
I_2/2\le \left(\Ex_{\bs{x}}\left[\|Q_{\vepsilon}-Q_0\|^4\right]\right)^{1/2}\left(\Ex_{\bs{x}}\left[\ex^{4S_{\mathrm{eff},0}}\right]\right)^{1/2}.
\]
Further bounding the first factor on the right-hand side above, Proposition \ref{bound} yields
\[
\|Q_{\vepsilon}-Q_0\|\le \|\vartheta_{\vepsilon}-\vartheta_0\|_{\ell^2}J_{1,\vepsilon}+\|\tilde{\vartheta}_{\vepsilon}-\tilde{\vartheta}_0\|_{\ell^2}J_{2,\vepsilon},
\]
where we define
\begin{align*}
J_{1,\vepsilon}&\deq 4\ex^{4(2\|\vartheta_{\vepsilon}\|_{\ell^2}+2\|\vartheta_0\|_{\ell^2}+1)^2}\ex^{16\|\tilde{\vartheta}_{\vepsilon}\|_{\ell^2}^2},\\
J_{2,\vepsilon}&\deq 4\ex^{4(2\|\tilde{\vartheta}_{\vepsilon}\|_{\ell^2}+2\|\tilde{\vartheta}_0\|_{\ell^2}+1)^2}\ex^{16\|\vartheta_0\|_{\ell^2}^2}.
\end{align*}
Thus, using the Cauchy--Schwarz inequality again, we have
\begin{align*}
\Ex_{\bs{x}}\left[\|Q_{\vepsilon}-Q_0\|^4\right]&\le 8\left(\Ex_{\bs{x}}\left[\|\vartheta_{\vepsilon}-\vartheta_0\|_{\ell^2}^8\right]\right)^{1/2}\left(\Ex_{\bs{x}}[J_{1,\vepsilon}^8]\right)^{1/2}\\
&\quad+8\left(\Ex_{\bs{x}}\left[\|\tilde{\vartheta}_{\vepsilon}-\tilde{\vartheta}_0\|_{\ell^2}^8\right]\right)^{1/2}\left(\Ex_{\bs{x}}[J_{2,\vepsilon}^8]\right)^{1/2}.
\end{align*}
Applying Lemmas \ref{ExTh} and \ref{Th^8}, we find
$
\lim_{\vepsilon\to +0}\sup_{\bs{x}\in\mathbb{R}^N}I_2=0.
$
Finally, by combining these analyses with Lebesgue's dominated convergence theorem, we conclude that statement (ii) holds.
\end{proof}

\subsubsection*{Completion of the Proof of Proposition \ref{FKItype}}
By Proposition \ref{PropP}, we obtain
\[
\langle F\,  |\,  \ex^{-\beta H_{\rm F}} G \rangle = \lim_{\vepsilon \to +0}\langle F\, |\, \ex^{-\beta H_{\vepsilon}} G\rangle 
= \lim_{\vepsilon \to +0}\int_{D^{N, (p)}} \mathscr{P}_{\vepsilon}(\bs{x})\,d\bs{x}
= \int_{D^{N, (p)}} \mathscr{P}_{0}(\bs{x})\,d\bs{x}.
\]
This completes the proof.  
\qed

\bibliographystyle{abbrvurl}
%\bibliography{Reference}

\end{document}